\documentclass[11pt]{article}

\usepackage[nolists,nomarkers,tablesfirst]{endfloat}

\usepackage{amsfonts,amsmath,mathtools,mleftright}
\usepackage{graphicx,subfig,multirow}
\usepackage{bbm,bm} 
\usepackage[colorlinks=true,citecolor=blue,urlcolor=blue,linkcolor=blue]{hyperref}
\usepackage{chicago} 
\usepackage[margin=1in]{geometry}

\newcommand{\e}{{\rm e}}
\newcommand{\E}{{\mathbb E}}
\newcommand{\F}{{\mathbb F}}

\newcommand{\Pa}{{\mathbb P}}
\newcommand{\Q}{{\mathbb Q}}

\newcommand{\R}{{\mathbb R}}

\newcommand{\N}{{\mathbb N}}

\newcommand{\Acal}{{\mathcal A}}

\newcommand{\Ecal}{{\mathcal E}}
\newcommand{\Fcal}{{\mathcal F}}
\newcommand{\Gcal}{{\mathcal G}}
\newcommand{\Hcal}{{\mathcal H}}

\newcommand{\Lcal}{{\mathcal L}}

\newcommand{\Xcal}{{\mathcal X}}
\newcommand{\Ycal}{{\mathcal Y}}

\newcommand{\cExp}[3][]{\E^{#1} \mleft[ {#2} \, \middle| \, {#3} \mright]}
\newcommand{\Ind}[1]{\mathbbm{1}_{\left\{ #1 \right\} }}
\newcommand{\im}{{\rm i}}

\DeclareMathOperator{\diag}{diag}
\DeclareMathOperator{\Id}{{\rm Id}}

\newtheorem{proposition}{Proposition}[section]
\newtheorem{lemma}[proposition]{Lemma}
\newtheorem{theorem}[proposition]{Theorem}

\newtheorem{corollary}[proposition]{Corollary}
\newtheorem{remark}[proposition]{Remark}

\newtheorem{example}[proposition]{Example}

\begin{document}
\title{Linear credit risk models
\footnote{The authors would like to thank for useful comments Agostino Capponi, David Lando, Martin Larsson, Jongsub Lee, Andrea Pallavicini, Sander Willems, and two anonymous referees, as well as participants from the 2015 AMaMeF and Swissquote conference in Lausanne, the 2016 Bachelier world congress in New-York, the 2016 EFA annual meeting in Oslo, the 2016 AFFI Paris December meeting, and the 2017 CIB workshop on Dynamical Models in Finance.
The research leading to these results has received funding from the European Research Council under the European Union's Seventh Framework Programme (FP/2007-2013) / ERC Grant Agreement n.~307465-POLYTE.}
}
\author{Damien Ackerer \footnote{Swissquote Bank. Email: damien.ackerer@swissquote.ch} \and Damir Filipovi\'c \footnote{EPFL and Swiss Finance Institute. Email: damir.filipovic@epfl.ch}}
\date{July 6, 2019}

\maketitle

\begin{center} 
\large forthcoming in \textit{Finance and Stochastics} 
\end{center} 

\bigskip 

\begin{abstract} 
We introduce a novel class of credit risk models in which the drift of the survival process of a firm is a linear function of the factors. 
The prices of defaultable bonds and credit default swaps (CDS) are linear-rational in the factors. 
The price of a CDS option can be uniformly approximated by polynomials in the factors.
Multi-name models can produce simultaneous defaults, generate positively as well as negatively correlated default intensities, and accommodate stochastic interest rates.
A calibration study illustrates the versatility of these models by fitting CDS spread time series. A numerical analysis validates the efficiency of the option price approximation method.
\end{abstract}

\smallskip
\noindent \textbf{Keywords:} credit default swap, credit derivatives, credit risk, polynomial model, survival process\\

\noindent \textbf{MSC (2010):} 91B25, 91B70, 91G20, 91G40, 91G60\\

\smallskip
\noindent \textbf{JEL Classification:} C51, G12, G13 

\newpage

\section{Introduction}\label{sec:intro}

We introduce a novel class of flexible and tractable reduced form models for the term structure of credit risk, the linear credit risk models. We directly specify the survival process of a firm, that is, its conditional survival probability given the economic background information. Specifically, we assume a multivariate factor process with a linear drift and let the drift of the survival process be linear in the factors. Prices of defaultable bonds and credit default swaps (CDS) are given in closed form by linear-rational functions in the factors. 
By linearity, the same result holds for the prices of CDSs on indices (CDISs).
The implied default intensity is a linear-rational function of the factors. 
In contrast, the price of a CDS in an affine default intensity model is a sum of exponential-affine functions in the factors process and whose coefficients are given by the solutions of nonlinear ordinary differential equations that are not in closed form, in general.
In addition, the linear credit risk models offer new tractable features such as a multi-name model with negatively correlated default intensity.

Within the linear framework we define the linear hypercube (LHC) model which is a single-name model.
The factor process is diffusive with quadratic diffusion function so that it takes values in a hypercube whose edges' length is given by the survival process.
The quadratic diffusion function is concave and bi-monotonic.
This feature allows factors to virtually jump between low and high values.
This facilitates the persistence and likelihood of term structure shifts.
The factors' volatility parameters do not enter the bond and CDS pricing formulas, yet they impact the volatility of CDS spreads and thus affect CDS option prices. 
This may facilitate the joint calibration of credit spread and option price time series. 
We discuss in detail the one-factor LHC model and compare it with the one-factor affine default intensity model. We provide an identifiable canonical representation and the market price of risk specifications that preserve the linear drift of the factors.

We present a price approximation methodology for European style options on credit risky underlyings that exploits the compactness of the state space and the closed form of the conditional moments of the factor process. First, by the Stone--Weierstrass theorem, any continuous payoff function on the compact state space can be approximated by a polynomial to any given level of accuracy. Second, the conditional expectation of any polynomial in the factors is a polynomial in the prevailing factor values. In consequence, the price of a CDS option can be uniformly approximated by polynomials in the factors. This method also applies to the computation of credit valuation adjustments.

We build multi-name models by letting the survival processes be linear and polynomial combinations of independent LHC models.
Bonds and CDSs prices are still linear-rational but with respect to an extended factor representation.
These direct extensions can easily accommodate the inclusion of new factors and new firms.
Stochastic short rate models with a similar specification as the survival processes can be introduced while preserving the setup tractability.
Simultaneous defaults can be generated either by introducing a common jump process in the survival processes or a stochastic clock.

We perform an empirical and numerical analysis of the LHC model. Assuming a parsimonious cascading drift structure, we fit two-factor and three-factor LHC models to the ten-year long time series of weekly CDS spreads on an investment grade and a high yield firm. The three-factor model is able to capture the complex term structure dynamics remarkably well and performs significantly better than the two-factor model. 
We illustrate the numerical efficiency of the option pricing method by approximating the prices of CDS options with different moneyness. 
Polynomials of relatively low orders are sufficient to obtain accurate approximations for in-the-money options. Out-of-the money options typically require a higher order. 
We derive the pricing formulas for CDIS options and tranches on a homogeneous portfolio to illustrate that their prices can also be approximated with similar techniques.
In general, the pricing of CDIS options and tranches requires manipulating multivariate polynomial bases of possibly large dimensions.
In practice, computationally efficient multi-name credit derivative pricing necessitates the use of special algorithms which are not studied in this paper.

\smallskip

We now review some of the related literature.
Our approach follows a standard doubly stochastic construction of default times as described in \cite{elliott2000models} or \cite{Bielecki2002credit}.
The early contributions by \cite{Lando1998} and \cite{Duffie1999} already make use of affine factor processes. 
In contrast, the factor process in the LHC model is a strictly non-affine polynomial diffusion, whose general properties are studied in \cite{filipovic2016polynomial}.
The stochastic volatility models developed in \cite{hull1987pricing} and \cite{ackerer2016jacobi} are two other examples of non-affine polynomial models.
Factors in the LHC models have a compact support and can exhibit jump-like dynamics similar to the multivariate Jacobi process introduced by~\cite{gourieroux2006multivariate}.
Our approach has some similarities with the linearity generating process by \cite{Gabaix2007} and the linear-rational models by \cite{filipovic2017linear}.
These models also exploit the tractability of factor processes with linear drift but focus on the pricing of non-defaultable assets. 
To our knowledge, we are the first to model directly the survival process of a firm with linear drift characteristics.

Options on CDS contracts are complex derivatives and intricate to price.
The pricing and hedging of CDS options in a generic hazard process framework is discussed in \cite{bielecki2006hedging} and \cite{bielecki2008pricing}, and specialised to the square-root diffusion factor process in \cite{bielecki2011hedging}.
More recently \cite{brigo2010exact} developed a semi-analytical expression for CDS option prices in the context of a shifted square-root jump-diffusion default intensity model that was introduced in \cite{brigo2005credit}.
Another strand of the literature has focused on developing market models in the spirit of LIBOR market models. 
We refer the interested reader to \cite{Schonbucher2000}, \cite{hull2003valuation}, \cite{Schonbucher2004}, \cite{Jamshidian2004}, and \cite{Brigo2005}. 
Black-Scholes like formulas are then obtained for the prices of CDS options by assuming, for example, that the underlying CDS spread follows a geometric Brownian motion under the survival measure.
Although offering more tractability, this approach makes it difficult, if not impossible, to consistently price multiple instruments exposed to the same source of credit risk.
\cite{di2009dynamic} introduced a framework where they obtained closed form expressions similar to ours for CDS prices, but under the assumption that the firm default intensity is driven by a continuous-time finite-state irreducible Markov chain.

Another important approach to default risk modeling is the use of subordinators to model the cumulative hazard process.
It has been in particular shown that time-inhomogeneous models can reproduce well CDIS tranche prices.
For more details on these models we refer to \cite{kokholm2010sato}, \cite{sun2017marshall}, and references therein.

The simulation-based work by~\cite{peng2008connecting} shows that a hazard-rate model with systemic and idiosyncratic risk factors can fit both CDS and CDIS tranches, and therefore confirms that a bottom-up model with common risk factors can yield an accurate and fully consistent risk-management framework.
A tractable alternative to price multi-name credit derivatives is to model the dependence between defaults with a copula function, as for example in~\cite{li2000default}, \cite{laurent2005basket}, and \cite{ackerer2016dependent}. 
However these models are by construction static, require repeated calibration, and in general become intractable when combined with stochastic survival processes as in~\cite{schonbucher2001copula}.

The idea of approximating option prices by power series can be traced back to \cite{jarrow1982approximate}.
However, most of the previous literature has focussed on approximating the transition density function of the underlying process, see for example \cite{corrado1996skewness} and \cite{filipovic2013density}.
In contrast, we approximate directly the payoff function by a polynomial.

\smallskip

The remainder of the paper is structured as follows.
Section~\ref{sec:LF} presents the linear credit risk framework along with generic pricing formulas.
Section~\ref{sec:LHCM} describes the single-name LHC model.
The numerical and empirical analysis of the LHC model is in Section~\ref{sec:app}.  
Multi-name models as well as models with stochastic interest rates are discussed in Section~\ref{sec:extensions}.
Section~\ref{sec:ccl} concludes.
The proofs are collected in the Appendix, as well as some additional results on market price of risk specifications that preserve the linear drift of the factors, and on the two-dimensional Chebyshev interpolation.

\section{The linear framework}\label{sec:LF}

We introduce the linear credit risk model framework and derive closed form expressions for defaultable bond prices and credit default swap spreads. We also discuss the pricing of credit index tranches, credit default swap options, and credit valuation adjustments.

\subsection{Survival process specification} \label{sec:survival}

We fix a probability space $(\Omega, \Fcal, \Q)$ equipped with a right-continuous filtration ${\F=(\Fcal_t)_{t\ge0}}$ representing the economic background information, and where $\Q$ is the risk-neutral pricing measure. We consider $N$ firms and let $S^i$ be the survival process of firm $i$. 
This is a right-continuous $\F$-adapted and non-increasing positive process with ${S_0^i = 1}$.
Let $U^1,\dots,U^N$ be mutually independent standard uniform random variables that are independent from $\Fcal_\infty$.
For each firm $i$, we define the random default time
\begin{equation*}
\tau_i = \inf\{t\geq 0 : S^i_t \leq U_i \},
\end{equation*}
which is infinity if the set is empty.
Let $(\Hcal^i_t)_{t\ge0}$ be the filtration generated by the indicator process which is one as long as firm $i$ has not defaulted by time $t$ and zero afterwards, $H^i_t = \Ind{\tau_i>t}$ for $t\ge0$.
The default time $\tau_i$ is a stopping time in the enlarged filtration $(\Fcal_t \vee \Hcal^i_t)_{t\ge0} $.
It is $\F$-doubly stochastic in the sense that
\begin{equation*}
\Q[\,\tau_i>t \mid \Fcal_\infty\,] = \Q[\, S^i_t > U_i \mid \Fcal_\infty\,] = S^i_t.
\end{equation*}
The filtration $(\Gcal_t)_{t\ge0}=(\Fcal_t \vee \Hcal^1_t \vee\dots\vee \Hcal^N_t)_{t\ge0}$ contains all the information about the occurrence of firm defaults, as well as the economic background information.
Henceforward we omit the index $i$ of the firm and refer to any of the $N$ firms as long as there is no ambiguity. 

In a linear credit risk model the survival process of a firm is defined by
\begin{equation}\label{eq:S}
S_t =  a^\top Y_{t}, \quad t\ge0,
\end{equation}
for some firm specific parameter $a\in\R^n_+$, and some common factor process $(Y,X)$ taking values in $\R^n_+\times\R^m$ with linear drift of the form
\begin{align}
dY_t &= (  c \, Y_t + \gamma \, X_t) dt + dM^Y_t \label{eq:dY}\\
dX_t &= (  b \, Y_t + \beta \, X_t)dt + dM^X_t \label{eq:dX}
\end{align}
for some $c\in\R^{n\times n}$, $b\in\R^{m\times n}$, $\gamma\in\R^{n\times m}$, $\beta\in\R^{m\times m}$, $m$-dimensional $\F$-martingale $M^X$, and $n$-dimensional $\F$-martingale $M^Y$.
The process $S$ being positive and non-increasing, we necessarily have that its martingale component $M^{S} = a^\top M^Y$ is of finite variation and thus purely discontinuous, see \cite[Lemma I.4.14]{jacod2013limit}, and that $- S_{t-} < \Delta M_t^{S} \le 0 $ for all $t\ge0$
because $\Delta S_t = \Delta M_t^{S}$.
This observation motivates the decomposition of the factor process into a component $X$ and a component $Y$ with finite variation.
Although we do not specify further the dynamics of the factor process at the moment, it is important to emphasize that additional conditions should be satisfied to ensure that $S$ is a valid survival process.
\begin{remark}
In practice we will consider a componentwise non-increasing process $Y$ with $Y_0={\bm 1}$. 
Survival processes can then easily be constructed by choosing any vector $a\in\R^n_+$ such that $a^\top{\bm 1} =1$.
\end{remark}

The linear drift of the process $(Y,X)$ implies that the $\Fcal_t$-conditional expectation of $(Y_u,X_u)$ is linear of the form
\begin{equation}\label{lineq}
\E \Big[ \begin{pmatrix} Y_{u} \\ X_{u} \end{pmatrix} \; \Big| \; \Fcal_t \Big] = \e^{ A(u-t)}\begin{pmatrix} Y_{t} \\ X_{t} \end{pmatrix},\quad t\le u,
\end{equation}
where the $(m+n)\times(m+n)$-matrix $A$ is defined by
\begin{equation} \label{eq:Amat}
A=
\begin{pmatrix}
c & \gamma \\
b & \beta
\end{pmatrix}.
\end{equation}

\begin{remark}\label{remhara}
If $S$ is absolutely continuous, so that $a^\top dM^Y_t=0$ for all $t\ge0$, the corresponding default intensity $\lambda$, which derives from the relation $S_t = \e^{-\int_0^t \lambda_s ds}$, is linear-rational in $(Y,X)$ of the form
\begin{equation*}
\lambda_t = -\frac{a^\top ( c \, Y_t + \gamma \, X_t)}{S_t} .
\end{equation*}
\end{remark}

In this framework, the default times are correlated because the survival processes are driven by common factors.
Simultaneous defaults are possible and may be caused by the martingale components of $Y$ that forces the survival processes to jump downward at the same time.
Additionally, and to the contrary of affine default intensity models, the linear credit risk framework allows for negative correlation between default intensities as illustrated by the following stylized example.

\begin{example}\label{ex:negcor}
Consider the factor process $(Y,X)$ taking values in $\R_+^2\times\R$ defined by
\begin{align*}
dY_t &= \frac\epsilon2 \left(\begin{pmatrix} -1 & 0  \\ 0 & -1 \end{pmatrix}Y_t + \begin{pmatrix} -1 \\ 1 \end{pmatrix}X_t \right)dt \\
dX_t & = -\kappa X_t dt + \sigma\sqrt{(\e^{-\epsilon t}-X_t)(\e^{-\epsilon t}+X_t)}dW_t
\end{align*}
for some $\kappa>\epsilon>0$, $\sigma>0$, $X_0\in[-1,1]$, and $\F$-adapted univariate Brownian motion $W_t$.
The process $X$ takes values in the interval $[-\e^{-\epsilon t},\e^{-\epsilon t}]$ at time $t$.
Let $N=2$ survival processes be defined by $S^1_t = Y_{1t}$ and $S^2_t = Y_{2t}$ for all $t\ge0$ so that the implied default intensities of the two firms are given by
\[
\lambda^1_t = \frac{\epsilon}{2}\left(1 + \frac{X_t}{Y_{1t}}\right)
\quad
\text{and}
\quad
\lambda^2_t = \frac{\epsilon}{2}\left(1-\frac{X_t}{Y_{2t}}\right), \quad t\ge0.
\]
This results in $d\langle\lambda^1,\lambda^2\rangle_t \le 0$ and  $d\langle\lambda^1,\lambda^2\rangle_t < 0$ with positive probability, and $\lambda_t^1,\lambda_t^2\le\epsilon$. 
Moreover, the default intensities, $\lambda^1$ and $\lambda^2$, both mean-revert towards $\epsilon/2$.
The proof of these statements is given in Appendix~\ref{sec:LCRMproofs}.
\end{example}

%
%
%

\subsection{Defaultable bonds}\label{sec:bonds}

We consider securities with notional amount equal to one and exposed to the credit risk of a reference firm.
We assume a constant risk-free interest rate equal to $r$ so that the time-$t$ price of the risk-free zero-coupon bond with maturity $t_M$ and notional amount one is given by $\e^{-r(t_M-t)}$.
The following result gives a closed form expression for the price of a defaultable bond with constant recovery rate at maturity.

\begin{proposition} \label{prop:bondMat}
The time-$t$ price of a defaultable zero-coupon bond with maturity $t_M$ and recovery $\delta\in[0,1]$ at maturity is
\begin{align*}
B_{\rm M}(t,t_M) &= \cExp{\e^{-r(t_M-t)} \big(\Ind{\tau>t_M} + \delta \Ind{\tau \le t_M} \big)}{\Gcal_t}\\
& = (1-\delta)B_{\rm Z}(t,t_M) + \Ind{\tau>t} \delta \, \e^{-r(t_M-t)}
\end{align*}
where
$B_{\rm Z}(t,t_M) = \e^{-r(t_M-t)}\E[\Ind{\tau>t_M} \mid \Gcal_t]$
denotes the time-$t$ price of a defaultable zero-coupon bond with maturity $t_M$ and zero recovery. It is of the form
\begin{equation}\label{eq:Zbond}
B_{\rm Z}(t,t_M) = \Ind{\tau>t} \frac{1}{a^\top Y_t} \, \psi_{\rm Z}(t,t_M)^\top \begin{pmatrix} Y_t \\ X_t \end{pmatrix}
\end{equation}
where the vector $\psi_{\rm Z}(t,t_M)\in\R^{n+m}$ is given by
\begin{equation*}
\psi_{\rm Z}(t,t_M)^\top = \e^{-r(t_M-t)} 
\begin{pmatrix}
a^\top & {\bf 0}^\top_m
\end{pmatrix} \e^{ A(t_M-t)}
\end{equation*}
where the $m$-dimensional vector ${\bf 0}_m$ contains only zeros.
\end{proposition}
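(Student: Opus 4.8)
The plan is to split the bond into its recovery and zero-recovery components, observing that all the factor dependence is confined to the latter. Using $\Ind{\tau\le t_M}=1-\Ind{\tau>t_M}$, I would first rewrite the terminal payoff as $(1-\delta)\Ind{\tau>t_M}+\delta$. Taking $\Gcal_t$-conditional expectations and factoring out the deterministic discount $\e^{-r(t_M-t)}$ then expresses $B_{\rm M}(t,t_M)$ as $(1-\delta)B_{\rm Z}(t,t_M)$ plus a discounted recovery term, so the proposition reduces to establishing the closed form~\eqref{eq:Zbond} for the zero-recovery price $B_{\rm Z}(t,t_M)=\e^{-r(t_M-t)}\cExp{\Ind{\tau>t_M}}{\Gcal_t}$.

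The heart of the argument is the evaluation of $\cExp{\Ind{\tau>t_M}}{\Gcal_t}$, which I would carry out using the doubly stochastic structure of $\tau$. Since $S_t=a^\top Y_t$ is $\Fcal_t$-measurable and $\cPro{\tau>t}{\Fcal_\infty}=S_t$, the tower property gives $\cPro{\tau>t}{\Fcal_t}=S_t$ and, for the horizon $t_M\ge t$, $\cPro{\tau>t_M}{\Fcal_t}=\cExp{S_{t_M}}{\Fcal_t}$. The standard key lemma for the enlarged filtration $\Gcal_t=\Fcal_t\vee\Hcal_t$ then yields
\[
\cExp{\Ind{\tau>t_M}}{\Gcal_t}=\Ind{\tau>t}\,\frac{\cExp{S_{t_M}}{\Fcal_t}}{S_t}.
\]
This is the one place where the construction of $\tau$ through the independent randomisation variable $U$ is genuinely used: it ensures that conditioning on $\Hcal_t$ enters only through the survival indicator $\Ind{\tau>t}$ and the normalisation by $S_t$.

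It then remains to make the numerator explicit. Writing $S_{t_M}=\begin{pmatrix}a^\top & 0\end{pmatrix}\begin{pmatrix}Y_{t_M}\\ X_{t_M}\end{pmatrix}$ and applying the linear conditional expectation formula~\eqref{lineq} gives $\cExp{S_{t_M}}{\Fcal_t}=\begin{pmatrix}a^\top & 0\end{pmatrix}\e^{A(t_M-t)}\begin{pmatrix}Y_t\\ X_t\end{pmatrix}$. Substituting this into the previous display, multiplying by $\e^{-r(t_M-t)}$, and using $S_t=a^\top Y_t$ in the denominator reproduces~\eqref{eq:Zbond} with $\psi_{\rm Z}(t,t_M)^\top=\e^{-r(t_M-t)}\begin{pmatrix}a^\top & 0\end{pmatrix}\e^{A(t_M-t)}$, as claimed.

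I expect the main obstacle to be the second step, namely the reduction from the $\Gcal_t$-conditional expectation to the $\Fcal_t$-conditional expectation. Justifying the key lemma in this setting requires the $\Fcal_t$-doubly stochastic property of $\tau$ already recorded in the excerpt together with the independence of $U$ from $\Fcal_\infty$; once this filtration-enlargement step is in place, everything downstream is a direct and purely computational application of the linear drift assumptions~\eqref{eq:dY}--\eqref{eq:dX} through~\eqref{lineq}.
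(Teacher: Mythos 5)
Your argument is correct and follows essentially the same route as the paper: the reduction $\cExp{\Ind{\tau>t_M}}{\Gcal_t}=\Ind{\tau>t}\,\cExp{S_{t_M}}{\Fcal_t}/S_t$ that you identify as the heart of the matter is precisely the paper's Lemma~\ref{lem:fundpricing} (with $Y=1$), proved there via the Bielecki--Rutkowski filtration-enlargement identity and the doubly stochastic property, and the remaining step is the same application of~\eqref{lineq} to $S_{t_M}=a^\top Y_{t_M}$.
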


The next result shows that the price of a defaultable bond paying a constant recovery rate at default can also be retrieved in closed form.

\begin{proposition} \label{prop:bondDef}
The time-$t$ price of a defaultable zero-coupon bond with maturity $t_M$ and recovery $\delta\in[0,1]$ at default is
\begin{align*}
B_{\rm D}(t,t_M) &= \cExp{\e^{-r(t_M-t)}\Ind{\tau>t_M} + \delta \e^{-r(\tau-t)} \Ind{t < \tau \le t_M}}{\Gcal_t} \\
&= B_{\rm Z}(t,t_M) + \delta \, C_{\rm D}(t,t_M),
\end{align*}
where $C_{\rm D}(t,t_M) = \E[\e^{-r(\tau-t)} \Ind{t<\tau \le t_M} | \Gcal_t]$
denotes the time-$t$ price of a contingent claim paying one at default if it occurs between dates $t$ and $t_M$.  It is of the form
\begin{equation}\label{eq:CFdef}
C_{\rm D}(t,t_M) = \Ind{\tau > t}\frac{1}{a^\top Y_t} \, 
 \psi_{\rm D}(t,t_M)^\top  \begin{pmatrix} Y_t \\ X_t \end{pmatrix}
\end{equation}
where the vector $\psi_{\rm D}(t,t_M)\in\R^{n+m}$ is given by
\begin{equation} \label{eq:psiD}
\psi_{\rm D}(t,t_M)^\top =  -a ^\top \begin{pmatrix} c & \gamma \end{pmatrix} 
 \int_t^{t_M}  \e^{A_*(s-t)}ds 
\end{equation}
where $A_*=A-r \Id$.
\end{proposition}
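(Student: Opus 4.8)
The identity $B_{\rm D}(t,t_M) = B_{\rm Z}(t,t_M) + \delta\, C_{\rm D}(t,t_M)$ is immediate from the linearity of conditional expectation once the first term of the payoff is recognised as that of the zero-recovery bond of Proposition~\ref{prop:bondMat}. Hence the whole content of the statement is the closed-form expression \eqref{eq:CFdef} for the contingent claim $C_{\rm D}(t,t_M)=\cExp{\e^{-r(\tau-t)}\Ind{t<\tau\le t_M}}{\Gcal_t}$, and this is what the plan is to establish.

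The first step is to collapse the $\Gcal_t$-conditioning to an $\Fcal_t$-conditioning. On $\{\tau\le t\}$ the indicator vanishes, which accounts for the factor $\Ind{\tau>t}$. On $\{\tau>t\}$ the only information carried by $\Gcal_t$ beyond $\Fcal_t$ is the event $\{\tau>t\}$ itself, so that
\begin{equation*}
C_{\rm D}(t,t_M)=\Ind{\tau>t}\,\frac{1}{\cPro{\tau>t}{\Fcal_t}}\,\cExp{\e^{-r(\tau-t)}\Ind{t<\tau\le t_M}}{\Fcal_t},
\end{equation*}
and since $\cPro{\tau>t}{\Fcal_t}=\cExp{S_t}{\Fcal_t}=S_t=a^\top Y_t$ the prefactor is the one appearing in \eqref{eq:CFdef}. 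To evaluate the remaining $\Fcal_t$-expectation I would condition further on $\Fcal_\infty$ and invoke the doubly stochastic property $\cPro{\tau>u}{\Fcal_\infty}=S_u$: the $\Fcal_\infty$-conditional law of $\tau$ on $(0,\infty)$ is the positive Stieltjes measure $-dS_u$, whence
\begin{equation*}
\cExp{\e^{-r(\tau-t)}\Ind{t<\tau\le t_M}}{\Fcal_\infty}=\int_{(t,t_M]}\e^{-r(u-t)}\,(-dS_u),
\end{equation*}
and a tower-property step gives $C_{\rm D}(t,t_M)=\Ind{\tau>t}\,(a^\top Y_t)^{-1}\,\cExp{\int_{(t,t_M]}\e^{-r(u-t)}(-dS_u)}{\Fcal_t}$.

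It then remains to compute this last expectation explicitly. Using \eqref{eq:dY} I would write $-dS_u=-a^\top(c\,Y_u+\gamma\,X_u)\,du-a^\top dM^Y_u$. The stochastic integral of the bounded deterministic integrand $\e^{-r(u-t)}$ against the martingale $M^Y$ is again a martingale, so its $\Fcal_t$-conditional expectation is zero and only the drift term survives. For that term I would apply Fubini and the linear conditional-moment formula \eqref{lineq}, $\cExp{(Y_u,X_u)^\top}{\Fcal_t}=\e^{A(u-t)}(Y_t,X_t)^\top$, to obtain
\begin{equation*}
\cExp{\int_{(t,t_M]}\e^{-r(u-t)}(-dS_u)}{\Fcal_t}=-a^\top\begin{pmatrix}c&\gamma\end{pmatrix}\left(\int_t^{t_M}\e^{-r(u-t)}\e^{A(u-t)}\,du\right)\begin{pmatrix}Y_t\\X_t\end{pmatrix}.
\end{equation*}
Since $r\Id$ commutes with $A$, one has $\e^{-r(u-t)}\e^{A(u-t)}=\e^{A_*(u-t)}$ with $A_*=A-r\Id$, and collecting the terms reproduces exactly $\psi_{\rm D}(t,t_M)$ of \eqref{eq:psiD}.

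The genuinely delicate point is the doubly stochastic reduction of the second paragraph: the identification of the $\Fcal_\infty$-conditional law of the default time with $-dS_u$, together with the fact that the recovery is paid \emph{exactly} at the possibly discontinuous time $\tau$, is where the construction of $\tau$ from the independent uniform $U$ is used in full. The remaining two concerns are technical and rest on the integrability built into the model: the true (not merely local) martingale property needed to discard $\int \e^{-r(u-t)}\,dM^Y_u$ in conditional expectation, and the Fubini interchange of $\cExp{\cdot}{\Fcal_t}$ with $\int_t^{t_M}\!du$.
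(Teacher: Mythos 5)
Your proof is correct and follows essentially the same route as the paper: reduce the $\Gcal_t$-expectation to $\Ind{\tau>t}S_t^{-1}\,\cExp{\int_{(t,t_M]}-\e^{-r(u-t)}dS_u}{\Fcal_t}$, discard the martingale part of $dS_u$, and apply Fubini together with \eqref{lineq}. The only difference is presentational: the paper delegates the first reduction to Lemma~\ref{lem:fundpricing2} (cited from Bielecki--Rutkowski), whereas you derive it inline from the doubly stochastic construction by conditioning on $\Fcal_\infty$ and identifying the conditional law of $\tau$ with $-dS_u$.
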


The price of a security whose only cash flow is proportional to the default time is given in the following corollary.
It is used to compute the expected accrued interests at default for some contingent securities such as CDS.

\begin{corollary}\label{cor:condBond}
The time-$t$ price of a contingent claim paying $\tau$ at default if it occurs between date $t$ and $t_M$ is of the form
\begin{align}\label{eq:CFdefs}
\begin{split}
C_{\rm D_*}(t,t_M) &= \cExp{\tau \e^{-r(\tau-t)} \Ind{\tau \le t_M}}{\Gcal_t} \\
&= \Ind{\tau>t} \frac{1}{a^\top Y_t} \, \psi_{\rm D_*}(t,t_M)^\top \begin{pmatrix} Y_t \\ X_t \end{pmatrix}
\end{split}
\end{align}
where the vector $\psi_{\rm D_*}(t,t_M)\in\R^{n+m}$ is given by
\begin{equation}\label{eq:psiDs}
\psi_{\rm D_*}(t,t_M)^\top = -a ^\top \begin{pmatrix} c & \gamma \end{pmatrix}  \;  \int_t^{t_M} s \, \e^{A_*(s-t)} ds.
\end{equation}
\end{corollary}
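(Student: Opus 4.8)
The plan is to follow the proof of Proposition~\ref{prop:bondDef} line for line, the only change being an extra factor $\tau$ in the payoff, equivalently an extra factor $s$ in the integrand. Set $g(s)=s\,\e^{-r(s-t)}$. On the event $\{\tau>t\}$ the indicator $\Ind{\tau\le t_M}$ appearing in \eqref{eq:CFdefs} coincides with $\Ind{t<\tau\le t_M}$, so the claim is the same forward-looking payoff treated in Proposition~\ref{prop:bondDef} with $\e^{-r(s-t)}$ replaced by $g(s)$, and the $\Ind{\tau>t}$ prefactor simply records that nothing is paid once default has already occurred.

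First I would invoke the doubly stochastic reduction, which is the only genuinely probabilistic step. Since $\cPro{\tau>s}{\Fcal_\infty}=S_s$, the default time has $\Fcal_\infty$-conditional ``density'' $-dS_s$, and the standard enlargement projection that turns a $\Gcal_t$-conditional expectation on $\{\tau>t\}$ into an $\Fcal_t$-conditional expectation divided by $S_t=a^\top Y_t$ yields
\[
C_{\rm D_*}(t,t_M)=\cExp{g(\tau)\,\Ind{t<\tau\le t_M}}{\Gcal_t}=\Ind{\tau>t}\,\frac{1}{a^\top Y_t}\,\cExp{\int_t^{t_M} g(s)\,(-dS_s)}{\Fcal_t}.
\]
This is precisely the identity established for Proposition~\ref{prop:bondDef}, now read with $g$ in place of $\e^{-r(s-t)}$.

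Next I would substitute the decomposition $-dS_s=-a^\top(c\,Y_s+\gamma\,X_s)\,ds-a^\top dM^Y_s$ coming from \eqref{eq:dY}. Because $g$ is deterministic and $M^S=a^\top M^Y$ is a finite-variation martingale, the integral $\int_t^{t_M} g(s)\,(-a^\top dM^Y_s)$ is an $\Fcal_t$-martingale increment and drops out under $\cExp{\cdot}{\Fcal_t}$, leaving only the drift term. I would then interchange the $ds$-integral with the conditional expectation by Fubini, insert the linear moment formula \eqref{lineq}, and pull the constant matrices out:
\[
\cExp{\int_t^{t_M} g(s)\left(-a^\top\begin{pmatrix} c & \gamma\end{pmatrix}\begin{pmatrix} Y_s \\ X_s\end{pmatrix}\right)ds}{\Fcal_t}=-a^\top\begin{pmatrix} c & \gamma\end{pmatrix}\left(\int_t^{t_M} s\,\e^{-r(s-t)}\e^{A(s-t)}\,ds\right)\begin{pmatrix} Y_t \\ X_t\end{pmatrix}.
\]
Collapsing the exponentials through $\e^{-r(s-t)}\e^{A(s-t)}=\e^{A_*(s-t)}$ with $A_*=A-r\,\Id$ gives exactly $\psi_{\rm D_*}(t,t_M)^\top$ as in \eqref{eq:psiDs}, proving \eqref{eq:CFdefs}. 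The only points requiring care---the vanishing of the martingale term and the Fubini interchange---are controlled by the same integrability assumptions that make $S_t$ a genuine survival process and were already used in Proposition~\ref{prop:bondDef}; I therefore expect no real obstacle, the substantive content being entirely inherited from the bond pricing result.
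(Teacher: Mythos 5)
Your proposal is correct and follows essentially the same route as the paper: the paper's proof applies Lemma~\ref{lem:fundpricing2} to write $C_{\rm D_*}(t,t_M)=\frac{\Ind{\tau>t}}{S_t}\cExp{\int_t^{t_M}-s\,\e^{-r(s-t)}dS_s}{\Fcal_t}$ and then states that the result follows exactly as in Proposition~\ref{prop:bondDef}, i.e.\ by dropping the finite-variation martingale part of $dS_s$, applying the linear moment formula \eqref{lineq}, and merging $\e^{-r(s-t)}\e^{A(s-t)}=\e^{A_*(s-t)}$. Your handling of the indicator discrepancy and the martingale/Fubini steps matches the paper's argument.
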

Note the presence of the factor $s$ in the integrand on the right hand side of \eqref{eq:psiDs}, which is absent in \eqref{eq:psiD}.

\begin{remark}
By setting $r=0$ in~\eqref{eq:CFdefs}, we obtain a closed form expression for $\E[\tau \Ind{\tau\le t_M}\mid\Gcal_t]$. 
This expression can be used to price a defaultable bond whose recovery value at maturity $t_M$ depends on the default time $\tau$ in a linear way,
\[
B_{\rm D_0}(t, t_0, t_M) = B_{\rm Z}(t, t_M) + \e^{-r(t_M-t)} \E \Big[ \big( \delta_0 \frac{\tau-t_0}{t_M-t_0} + \delta_1 \big) \Ind{\tau\le t_M} \; \Big| \; \Gcal_t \Big]
 \]
for some parameters $\delta_0,\delta_1\ge0$ such that $\delta_0 + \delta_1 \le 1$, and for some time $t_0\le t$.
\end{remark}
The following Lemma shows that pricing formulas~\eqref{eq:CFdef}--\eqref{eq:psiDs} can further simplify with an additional condition.

\begin{lemma} \label{lem:simplerFormulas}
Assume that the matrix $A_*$ is invertible then we have the following closed form expressions
\begin{align*}
\psi_{\rm D}(t,t_M)^\top &= 
-a ^\top \begin{pmatrix} c & \gamma \end{pmatrix}   A_*^{-1} \left( \e^{ {A_*}(t_M-t)} - \Id \right) \\
\psi_{\rm D_*}(t,t_M)^\top &=
-a ^\top \begin{pmatrix} c & \gamma \end{pmatrix}  \Big( (t_M-t)   A_*^{-1}\e^{A_*(t_M-t)}  \\
& \quad + A_*^{-1}(\Id t - A_*^{-1})(\e^{A_*(t_M-t)} - \Id) \Big)
\end{align*}
where $\Id$ is the $(n+m)$-dimensional identity matrix.
\end{lemma}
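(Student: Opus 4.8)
The plan is to evaluate the two matrix integrals in \eqref{eq:psiD} and \eqref{eq:psiDs} in closed form, using the invertibility of $A_*$. Both identities rest on a single elementary observation: since $\frac{d}{ds}\e^{A_*(s-t)} = A_* \e^{A_*(s-t)}$ and $A_*$ is invertible, the matrix-valued function $s\mapsto A_*^{-1}\e^{A_*(s-t)}$ is an antiderivative of $\e^{A_*(s-t)}$. All matrices appearing below are polynomials in $A_*$ and its inverse, hence commute, so no ordering issues arise.

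First I would treat $\psi_{\rm D}$. Applying the fundamental theorem of calculus entrywise to the matrix exponential gives
\[
\int_t^{t_M} \e^{A_*(s-t)}\,ds
= A_*^{-1}\left(\e^{A_*(t_M-t)} - \Id\right),
\]
where the lower limit $s=t$ contributes $A_*^{-1}\e^{0}=A_*^{-1}$. Substituting this into \eqref{eq:psiD} yields the first claimed identity at once.

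For $\psi_{\rm D_*}$, the extra factor $s$ in the integrand in \eqref{eq:psiDs} calls for integration by parts. I would split $s = (s-t) + t$, so that the integral decomposes into the expression already handled above, scaled by $t$, plus the term $\int_t^{t_M}(s-t)\,\e^{A_*(s-t)}\,ds$. For this second term, integration by parts against the antiderivative $A_*^{-1}\e^{A_*(s-t)}$ leaves a boundary contribution $(t_M-t)\,A_*^{-1}\e^{A_*(t_M-t)}$ (the evaluation at $s=t$ vanishes because $s-t=0$) together with $-A_*^{-2}\left(\e^{A_*(t_M-t)} - \Id\right)$. Collecting the two pieces and combining the terms that multiply $\left(\e^{A_*(t_M-t)} - \Id\right)$, namely $-A_*^{-2} + t\,A_*^{-1} = A_*^{-1}\left(t\,\Id - A_*^{-1}\right)$, recovers exactly the stated second formula.

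The computation is entirely routine. The only point requiring care is the bookkeeping in the integration by parts: tracking the boundary evaluations at $s=t$ and $s=t_M$, and correctly assembling the factor $A_*^{-1}(\Id\,t - A_*^{-1})$ from the $-A_*^{-2}$ term and the $t\,A_*^{-1}$ term produced by the split. I do not anticipate any genuine obstacle beyond this algebra.
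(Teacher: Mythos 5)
Your proof is correct and follows essentially the same route as the paper's: both compute $\int_t^{t_M}\e^{A_*(s-t)}ds = A_*^{-1}(\e^{A_*(t_M-t)}-\Id)$, split the second integral via $s=(s-t)+t$ (the paper phrases this as the substitution $u=s-t$), and handle $\int_0^{t_M-t}u\,\e^{A_*u}du$ by integration by parts to obtain the boundary term $(t_M-t)A_*^{-1}\e^{A_*(t_M-t)}$ and the $-A_*^{-2}(\e^{A_*(t_M-t)}-\Id)$ correction. The only cosmetic difference is that the paper justifies the antiderivative identity by termwise integration of the power series rather than by citing the fundamental theorem of calculus directly.
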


This is a remarkable result since the prices of contingent cash flows become closed form expressions composed of basic matrix operations and are thus easily computed.
Closed form formulas for defaultables securities render the linear framework appealing for large scale applications, for example with a large number of firms and contracts, in comparison to standard affine default intensity models that in general require the use of additional numerical methods.
For illustration, assume that the survival process $S_t$ is absolutely continuous so that it admits the default intensity $\lambda_t$ as in Remark~\ref{remhara}. Then $C_{\rm D}(t,t_M)$ can be rewritten as
\begin{align*}
C_{\rm D}(t,t_M) = \Ind{\tau>t}\int_t^{t_M}\e^{-r(u-t)}\,\cExp{\lambda_u\e^{-\int_t^u \lambda_sds}}{\Fcal_t}du.
\end{align*}
With affine default intensity models the expectation to be integrated requires solving Riccati equations, which have a closed form solution only when the default intensity is driven by a sum of independent univariate CIR processes.
Numerical methods such as finite difference are usually employed to compute the expectation with time-$u$ cash flow for $u\in[t,t_M]$.
The integral can then only be approximated by means of another numerical method such as quadrature, that necessitates solving the corresponding ordinary differential equations at many different points $u$.
For more details on affine default intensity models we refer to~\cite{duffie2003credit}, \cite{filipovic2009term}, and~\cite{lando2009credit}.

\subsection{Credit default swaps} \label{sec:CDS}

We derive closed form expressions for credit default swaps (CDS) on a single firm and multiple firms.
We conclude the section with a discussion of factors unspanned by bonds and CDS prices.

A \emph{single-name CDS} is an insurance contract that pays at default the realized loss on a reference bond -- the protection leg -- in exchange of periodic payments that will stop after default -- the premium leg.
We consider the following discrete tenor structure $t \le t_0 < t_1 < \dots < t_M$ and a contract offering default protection from date $t_0$ to date $t_M$.
When $t<t_0$ the contract is usually called a knock out forward CDS and generates cash flows only if the firm has not defaulted by time $t_0$.
We consider a CDS contract with notional amount equal to one.
The time-$t$ value of the premium leg with spread $k$ is given by $k \, V_{\rm prem}(t,t_0,t_M)$ where
\[ V_{\rm prem}(t,t_0,t_M) =   V_{\rm coup}(t,t_0,t_M)+ V_{\rm ai}(t,t_0,t_M)  \]
is the sum of the value of coupon payments before default
\[ V_{\rm coup}(t,t_0,t_M) = \sum_{j=1}^M \; \cExp{  \e^{-r(t_j-t)}(t_j - t_{j-1})\Ind{t_j<\tau}}{\Gcal_t}\]
and the value of the accrued coupon payment at the time of default
\[ V_{\rm ai}(t,t_0,t_M) = \sum_{j=1}^M \; \cExp{ \e^{-r(\tau-t)}(\tau - t_{j-1})\Ind{t_{j-1}<\tau\leq t_j}}{\Gcal_t}.\]
The time-$t$ value of the protection leg is
\[ V_{\rm prot}(t,t_0,t_M) = (1-\delta)\cExp{\e^{-r(\tau-t)}\Ind{t_0<\tau\leq t_M}}{\Gcal_t} ,\]
where $\delta\in[0,1]$ denotes the constant recovery rate at default.
This specification of payments is in line with the ISDA model, see~\cite{white2013pricing}.
The (forward) CDS spread ${\rm CDS}(t,t_0,t_M)$ is the spread $k$ that makes the premium leg and the protection leg equal in value at time $t$. That is,
\[ {\rm CDS}(t,t_0,t_M) = \frac{V_{\rm prot}(t,t_0,t_M)}{V_{\rm prem}(t,t_0,t_M)}.\]

\begin{proposition}\label{prop:cds}
The values of the protection and premium legs are given by
\begin{align}
V_{\rm prot}(t,t_0,t_M) & =  \Ind{\tau>t}\frac{1}{S_t} \psi_{\rm prot}(t,t_0,t_M)^\top \begin{pmatrix} Y_t \\ X_t \end{pmatrix} \label{eq:Vprot}\\
V_{\rm prem}(t,t_0,t_M) & =  \Ind{\tau>t}\frac{1}{S_t} \psi_{\rm prem}(t,t_0,t_M)^\top \begin{pmatrix} Y_t \\ X_t \end{pmatrix}\label{eq:Vprem}
\end{align}
where the vectors $\psi_{\rm prot}(t,t_0,t_M),\,\psi_{\rm prem}(t,t_0,t_M)\in\R^{n+m}$ are given by
\begin{align*}
\psi_{\rm prot}(t,t_0,t_M) & = (1-\delta)\left(\psi_{\rm D}(t,t_M) - \psi_{\rm D}(t,t_0) \right),\\
\psi_{\rm prem}(t,t_0,t_M) & = \sum_{j=1}^M (t_j - t_{j-1})\psi_{\rm Z}(t,t_j) +  \psi_{\rm D_*}(t,t_M) - \psi_{\rm D_*}(t,t_0) \\
& \quad + t_{M-1} \psi_{\rm D}(t,t_M) - \sum_{j=1}^{M-1} (t_j - t_{j-1}) \psi_{\rm D}(t,t_j) - t_0 \psi_{\rm D}(t,t_0).
\end{align*}
\end{proposition}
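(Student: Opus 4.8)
The plan is to reduce all three leg values to the three elementary default-contingent claims already priced above: the zero-recovery bond $B_{\rm Z}(t,\cdot)$ of Proposition~\ref{prop:bondMat}, the pay-one-at-default claim $C_{\rm D}(t,\cdot)$ of Proposition~\ref{prop:bondDef}, and the pay-$\tau$-at-default claim $C_{\rm D_*}(t,\cdot)$ of Corollary~\ref{cor:condBond}. Each of these is of the form $\Ind{\tau>t}\,(a^\top Y_t)^{-1}\,\psi_\bullet(t,\cdot)^\top (Y_t^\top,X_t^\top)^\top$, sharing the common prefactor $\Ind{\tau>t}/(a^\top Y_t)=\Ind{\tau>t}/S_t$ from~\eqref{eq:S}. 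Hence any finite deterministic-coefficient linear combination of them is again of this form, with $\psi$-vector equal to the same linear combination of $\psi_{\rm Z}$, $\psi_{\rm D}$, $\psi_{\rm D_*}$. By linearity of the $\Gcal_t$-conditional expectation it therefore suffices to decompose the payoff of each leg into these building blocks; the representations~\eqref{eq:Vprot}--\eqref{eq:Vprem} and the stated vectors $\psi_{\rm prot}$, $\psi_{\rm prem}$ then follow simply by reading off coefficients, with no further probabilistic input.

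For the protection leg I would use the indicator identity $\Ind{t_0<\tau\le t_M}=\Ind{t<\tau\le t_M}-\Ind{t<\tau\le t_0}$, valid on $\{\tau>t\}$ since $t\le t_0$, so that $V_{\rm prot}=(1-\delta)(C_{\rm D}(t,t_M)-C_{\rm D}(t,t_0))$ and thus $\psi_{\rm prot}=(1-\delta)(\psi_{\rm D}(t,t_M)-\psi_{\rm D}(t,t_0))$. For the coupon part of the premium leg the deterministic discount factors pull out of each summand, and $\e^{-r(t_j-t)}\cExp{\Ind{\tau>t_j}}{\Gcal_t}$ is exactly $B_{\rm Z}(t,t_j)$; this produces the term $\sum_{j=1}^M(t_j-t_{j-1})\psi_{\rm Z}(t,t_j)$.

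The accrued-interest part $V_{\rm ai}$ is the crux. I would split the payoff $\tau-t_{j-1}$ into a $\tau$-piece and a $-t_{j-1}$-piece. Writing $\Ind{t_{j-1}<\tau\le t_j}=\Ind{t<\tau\le t_j}-\Ind{t<\tau\le t_{j-1}}$, the $\tau$-piece telescopes to $C_{\rm D_*}(t,t_M)-C_{\rm D_*}(t,t_0)$, contributing $\psi_{\rm D_*}(t,t_M)-\psi_{\rm D_*}(t,t_0)$. The $-t_{j-1}$-piece equals $-\sum_{j=1}^M t_{j-1}(C_{\rm D}(t,t_j)-C_{\rm D}(t,t_{j-1}))$, and an Abel summation by parts regroups it into the remaining $\psi_{\rm D}$ terms of the statement, namely those attached to $\psi_{\rm D}(t,t_M)$, to $\psi_{\rm D}(t,t_j)$ for $1\le j\le M-1$, and to $\psi_{\rm D}(t,t_0)$. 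This summation by parts, with its index shifts and the boundary contributions at $j=0$ and $j=M$, is the one delicate point where sign and indexing slips are easy to make, so I would pin down the coefficients by checking $M=1$ and $M=2$ explicitly before asserting the general pattern. Collecting the three contributions then yields $\psi_{\rm prem}$ and completes the proof.
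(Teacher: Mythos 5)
Your decomposition is exactly the one the paper uses: the protection leg and the coupon part are read off from $C_{\rm D}$ and $B_{\rm Z}$, the $\tau$-piece of the accrued interest telescopes to $C_{\rm D_*}(t,t_M)-C_{\rm D_*}(t,t_0)$, and the $-t_{j-1}$-piece is handled by summation by parts. Up to and including the identity
\begin{equation*}
-\cExp{\sum_{j=1}^M \e^{-r(\tau-t)}\,t_{j-1}\Ind{t_{j-1}<\tau\le t_j}}{\Gcal_t}
= -\sum_{j=1}^M t_{j-1}\bigl(C_{\rm D}(t,t_j)-C_{\rm D}(t,t_{j-1})\bigr)
\end{equation*}
your signs are correct. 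The step that fails is the very last one: Abel summation of the right-hand side gives
\begin{equation*}
-t_{M-1}\,C_{\rm D}(t,t_M)+t_0\,C_{\rm D}(t,t_0)+\sum_{j=1}^{M-1}(t_j-t_{j-1})\,C_{\rm D}(t,t_j),
\end{equation*}
which is the \emph{negative} of the three $\psi_{\rm D}$-terms appearing in the stated $\psi_{\rm prem}$, so the regrouping you assert does not land on the statement. The sanity check you propose would have exposed this: for $M=1$ the printed coefficients value the accrued leg as $\cExp{(\tau+t_0)\e^{-r(\tau-t)}\Ind{t_0<\tau\le t_1}}{\Gcal_t}$ rather than $\cExp{(\tau-t_0)\e^{-r(\tau-t)}\Ind{t_0<\tau\le t_1}}{\Gcal_t}$.

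The mismatch is not a defect of your strategy but of the target: the paper's own proof drops the minus sign at precisely this point, writing $-\cExp{\sum_j \e^{-r(\tau-t)}t_{j-1}\Ind{t_{j-1}<\tau\le t_j}}{\Gcal_t}=\sum_j t_{j-1}(C_{\rm D}(t,t_j)-C_{\rm D}(t,t_{j-1}))$ without the sign flip, and the stated $\psi_{\rm prem}$ inherits the error. Carried through correctly, your argument proves the proposition with the last three terms of $\psi_{\rm prem}$ replaced by $-t_{M-1}\psi_{\rm D}(t,t_M)+\sum_{j=1}^{M-1}(t_j-t_{j-1})\psi_{\rm D}(t,t_j)+t_0\,\psi_{\rm D}(t,t_0)$. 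So either keep your (correct) sign and state the corrected coefficients, or do not claim that the summation by parts reproduces the coefficients as printed; as written, the proposal asserts a false identity at its final step.
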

As a consequence of Proposition~\ref{prop:cds}, the CDS spread is given by a readily available linear-rational expression,
\begin{equation}\label{eq:CDSspread}
{\rm CDS}(t,t_0,t_M) = \Ind{\tau>t}\frac{\psi_{\rm prot}(t,t_0,t_M)^\top\begin{pmatrix} Y_t \\ X_t \end{pmatrix}}{\psi_{\rm prem}(t,t_0,t_M)^\top \begin{pmatrix} Y_t \\ X_t \end{pmatrix}}.
\end{equation} 
This is a remarkably simple expression that allows us to see how the factors $(Y,X)$ affect the CDS spread through the vectors $\psi_{\rm prot}(t,t_0,t_M)$ and $\psi_{\rm prem}(t,t_0,t_M)$.
For comparison, in an affine default intensity model the two legs $V_{\rm prot}(t,t_0,t_M)$ and $V_{\rm prem}(t,t_0,t_M)$ are given as sums of exponential-affine terms that cannot be simplified further.
In the following, we denote $V_{\rm CDS}(t,t_0,t_M,k)$ the time-$t$ price of a CDS contract starting at time $t_0$ with maturity $t_M$ and spread $k$,
\begin{equation}\label{eq:Vcds}
V_{\rm CDS}(t,t_0,t_M,k) = \Ind{\tau>t} \big( \psi_{\rm prot}(t,t_0,t_M) - \psi_{\rm prem}(t,t_0,t_M)  \big)^\top \begin{pmatrix} Y_t \\ X_t \end{pmatrix}.
\end{equation}

A \emph{multi-name CDS}, or credit default index swap (CDIS), is an insurance on a reference portfolio of $N$ firms with equal weight which we assume to be $1/N$ so that the portfolio total notional amount is equal to one.
The protection buyer pays a regular premium that is proportional to the current notional amount of the CDIS.
Let $\delta\in[0,1]$ be the recovery rate determined at inception.
Upon default of a firm the protection seller pays $1-\delta$ to the protection buyer and the notional amount of the CDIS decreases by $1/N$.
These steps repeat until maturity or until all firms in the reference portfolio have defaulted, whichever comes first.

Denote by $S_t^i=a_i^\top Y_t$ the survival process of firm $i$ as defined in~\eqref{eq:S}. The CDIS spread simplifies to a double linear-rational expression,
\begin{align*}
{\rm CDIS}(t,t_0,t_M) &= \frac{\sum_{i=1}^N \Ind{\tau_i>t} \, (1/a_i^\top Y_t) \, \psi^i_{\rm prot}(t,t_0,t_M)^\top\begin{pmatrix} Y_t \\ X_t \end{pmatrix}}{\sum_{i=1}^N \Ind{\tau_i>t} \, (1/a_i^\top Y_t) \, \psi^i_{\rm prem}(t,t_0,t_M)^\top \begin{pmatrix} Y_t \\ X_t \end{pmatrix}}
\end{align*}
where $\psi^i_{\rm prot}(t,t_0,t_M)$ and $\psi^i_{\rm prem}(t,t_0,t_M)$ are defined as in Proposition~\ref{prop:cds} for each firm $i$. 

\begin{remark}[Unspanned Factors]
The characteristics of the martingales $M^Y$ and $M^X$ do not appear explicitly in the bond, CDS and CDIS pricing formulas. This leaves the freedom to specify exogenous factors that feed into $M^Y$ and $M^X$. Such factors would be unspanned by the term structures of defaultable bonds and CDS and give rise to unspanned stochastic volatility, as described in \cite{filipovic2017linear}.
They provide additional flexibility for fitting time series of bond prices and CDS spreads.
These unspanned stochastic volatility factors affect the distribution of the survival and factor processes and therefore can be recovered from the prices of credit derivatives such as those discussed hereinafter.
\end{remark}

\subsection{CDIS tranche} \label{sec:tranche}

A CDIS tranche is a partial insurance on the losses of a reference portfolio in the sense that only losses larger than the attachment point $K_a$ and lower than the detachment point $K_d$ are insured.
We assume the same tenor structure and reference portfolio as for the CDIS contract, the protection buyer pays a periodic premium that is proportional to the current notional amount of the tranche, 
\begin{equation}\label{eq:trancheDef}
T_t = (K_d-K_a-(N_t(1-\delta)/N -K_a)^+)^+
\end{equation}
where $N_t=\sum_{i=1}^N \Ind{\tau_i \le t}$ is the total number of firms which have defaulted in the reference portfolio at time $t$.
The values of the protection leg and the premium leg at time $t$ are respectively given by
\begin{equation} \label{eq:tranchelegs}
\begin{split}
V_{\rm prot }(t,t_M,K_a,K_d) & =\int_t^{t_M} \e^{-r u} \; \cExp{  dT_u}{\Gcal_t}, \text{ and}\\
V_{\rm prem}(t,t_M,K_a,K_d) &= \sum_{j=1}^M \e^{-rt_j} \int_{t_{j-1}}^{t_j} \big( K_d-K_a- \cExp{ T_u}{\Gcal_t} \big) du.
\end{split}
\end{equation} 
The value of the tranche is then simply given by the difference of the cash flow values,
\begin{equation}\label{eq:tran_price}
V_{\rm T}(t,t_M,K_a,K_d,k) = V_{\rm prot }(t,t_M,K_a,K_d) - k \, V_{\rm prem}(t,t_M,K_a,K_d)
\end{equation}
where $k$ is the tranche spread.
The following proposition shows that the ${(\Fcal_\infty\vee\Gcal_t)}$-conditional distribution of the number of defaults at time $u>t$, can be exactly retrieved in closed form by applying the discrete Fourier transform as described in~\cite{ackerer2016dependent}.
\begin{proposition} \label{prop:distNt}
The $(\Fcal_\infty\vee\Gcal_t)$-conditional distribution of the number of defaults $N_u$ is given by
\begin{equation} \label{eq:distNt}
\Q [ N_u=n \mid \Fcal_\infty \vee \Gcal_t]=
\frac{1}{N+1}\sum_{j=0}^N \; \zeta^{nj} \; \prod_{i=1}^N \Big( \zeta^j +(1-\zeta^j)\Ind{\tau_i>t}\frac{a_i^\top Y_u}{a_i^\top Y_t}\Big)
\end{equation}
for $u>t$, for any $n=0,\dots,N$, and where $\zeta=\exp(2\im\pi/(N+1))$ with the imaginary number $\im$.
\end{proposition}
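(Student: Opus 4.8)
The plan is to compute the conditional probability generating function (PGF) of $N_u$ given $\Fcal_\infty\vee\Gcal_t$ in product form, and then extract the individual probabilities by inverting the discrete Fourier transform. The starting point is that, conditionally on $\Fcal_\infty$, the default times are mutually independent: by construction $\{\tau_i>u\}=\{S^i_u>U_i\}$, and the $U_i$ are mutually independent and independent of $\Fcal_\infty$. I would first argue that enlarging the conditioning from $\Fcal_\infty$ to $\Fcal_\infty\vee\Gcal_t$ preserves this cross-firm independence, because $\Gcal_t$ reveals, separately for each firm $i$, only the survival indicator $\Ind{\tau_i>t}$ (together with the value of $\tau_i$ on $\{\tau_i\le t\}$, which is irrelevant for $N_u$ when $u>t$).

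Next I would determine the conditional law of each default indicator $\Ind{\tau_i\le u}$ given $\Fcal_\infty\vee\Gcal_t$. On $\{\tau_i\le t\}$ the firm has already defaulted, so $\Ind{\tau_i\le u}=1$ almost surely. On $\{\tau_i>t\}$, the conditional law of $U_i$ given $\Fcal_\infty$ is uniform on $(0,S^i_t)$, so, using that $S^i$ is non-increasing (hence $S^i_u\le S^i_t$) together with the doubly stochastic relation, the conditional survival probability is $\cPro{\tau_i>u}{\Fcal_\infty,\tau_i>t}=S^i_u/S^i_t=a_i^\top Y_u/(a_i^\top Y_t)$. Both cases combine into the single factor appearing in~\eqref{eq:distNt}: evaluating $\cExp{z^{\Ind{\tau_i\le u}}}{\Fcal_\infty\vee\Gcal_t}$ at $z=\zeta^j$ yields $\zeta^j+(1-\zeta^j)\Ind{\tau_i>t}\,a_i^\top Y_u/(a_i^\top Y_t)$, which reduces to $\zeta^j$ on $\{\tau_i\le t\}$ and to the Bernoulli PGF $S^i_u/S^i_t+(1-S^i_u/S^i_t)\zeta^j$ on $\{\tau_i>t\}$.

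By the conditional independence established above, the conditional PGF factorizes, so for each $(N+1)$-th root of unity $\zeta^j$,
\begin{equation*}
\cExp{(\zeta^j)^{N_u}}{\Fcal_\infty\vee\Gcal_t}=\prod_{i=1}^N\left(\zeta^j+(1-\zeta^j)\Ind{\tau_i>t}\frac{a_i^\top Y_u}{a_i^\top Y_t}\right).
\end{equation*}
Since $N_u$ takes values in $\{0,\dots,N\}$, this conditional PGF is a polynomial of degree at most $N$ whose coefficients are precisely the probabilities $\cPro{N_u=n}{\Fcal_\infty\vee\Gcal_t}$. I would then invoke the orthogonality relation $\sum_{j=0}^N\zeta^{j(n-m)}=(N+1)\Ind{n=m}$ for $n,m\in\{0,\dots,N\}$ to invert the transform and recover~\eqref{eq:distNt}, exactly as in~\cite{ackerer2016dependent}.

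The main obstacle is the first step: rigorously justifying that the default indicators remain conditionally independent after enlarging $\Fcal_\infty$ to $\Fcal_\infty\vee\Gcal_t$, and correctly identifying the per-firm conditional survival probability $a_i^\top Y_u/(a_i^\top Y_t)$. Once this doubly stochastic, conditionally independent structure is established, the product form of the PGF and the Fourier inversion are routine.
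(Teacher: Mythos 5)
Your proposal is correct and follows essentially the same route as the paper: the paper computes the conditional characteristic function $\cExp{\e^{\im\xi N_u}}{\Fcal_\infty\vee\Gcal_t}$, factorizes it across firms into $\prod_{i=1}^N\bigl(\e^{\im\xi}+\Ind{\tau_i>t}(1-\e^{\im\xi})S^i_u/S^i_t\bigr)$ by invoking \cite[Lemma~9.1.3]{Bielecki2002credit} for the conditional independence, and then inverts by the discrete Fourier transform -- which is exactly your PGF-at-roots-of-unity argument with $\e^{\im\xi}=\zeta^j$. The only difference is that you sketch the conditional-independence step from the doubly stochastic construction directly, where the paper cites the standard lemma; both are sound.
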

From~\eqref{eq:trancheDef} it follows immediately that the conditional expectation of $T_u$ can be expressed as a function of the conditional distribution of $N_u$.
Assume for simplicity that $K_a=n_a (1-\delta)/N$ and $K_d=n_d (1-\delta)/N$ for some integers $0\le n_a<n_d\le N$. Then the conditional expectation of $T_u$ is given by
\begin{equation}\label{eq:TrDist}
\cExp{T_u}{\Fcal_\infty\vee\Gcal_t} =
 \sum_{j=1}^{N-n_a} \frac{(1-\delta) \min(j, \, n_d-n_a) }{N} \; \Q[N_u=n_a+j \mid \Fcal_\infty\vee\Gcal_t]
\end{equation}
for $u>t$.

The tranche price~\eqref{eq:tran_price} has therefore a closed form expression as long as the conditional probability ${\Q[N_u=j\mid\Gcal_t]}$ is available in closed form for all $t\le u \le t_M$ and $j=0,\dots,N$. 
An example is given in Section~\ref{sec:tranche_ex} for a polynomial model.

\subsection{CDS option and CDIS option} \label{sec:CDopt}

A CDS option with strike spread $k$ is a European call option on the CDS contract exercisable only if the firm has not defaulted before the option maturity date $t_0$. 
Its payoff at time $t_0$ is given by
$$
(V_{\rm CDS}(t_0,t_0,t_M))^+
 = \frac{\Ind{\tau>t_0}}{a^\top Y_{t_0}}\Big( \psi_{\rm cds}(t_0,t_0,t_M,k)^\top \begin{pmatrix} Y_{t_0} \\ X_{t_0} \end{pmatrix} \Big)^+
$$
with
\begin{equation}\label{eq:psicds}
\psi_{\rm cds}(t,t_0,t_M,k)  = \psi_{\rm prot}(t,t_0,t_M)-k \, \psi_{\rm prem}(t,t_0,t_M).
\end{equation}
Denote by $V_{\rm CDSO}(t,t_0,t_M,k)$ the price of the CDS option at time $t$,
\begin{align*}
V_{\rm CDSO}(t,t_0,t_M,k) 
&= \E \Big[ \e^{-r(t_0-t)} \frac{\Ind{\tau>t_0}}{a^\top Y_{t_0}} \Big( \psi_{\rm cds}(t_0,t_0,t_M,k)^\top \begin{pmatrix} Y_{t_0} \\ X_{t_0} \end{pmatrix} \Big)^+ \; \Big| \; \Gcal_{t}\Big] \\
&= \Ind{\tau>t}\frac{\e^{-r(t_0-t)}}{a^\top Y_{t}}\E \Big[  \Big( \psi_{\rm cds}(t_0,t_0,t_M,k)^\top \begin{pmatrix} Y_{t_0} \\ X_{t_0} \end{pmatrix} \Big)^+ \; \Big| \; \Fcal_{t} \Big]
\end{align*}
where the second equality follows directly from Lemma~\ref{lem:fundpricing}.

A CDIS option gives the right at time $t_0$ to enter a CDIS contract with strike spread $k$ and maturity $t_M$ on the firms in the reference portfolio which have not defaulted and, simultaneously, to receive the losses realized before the exercise date $t_0$.
Denote by $V_{\rm CDISO}(t,t_0,t_M,k)$ the price of the CDIS option at time $t \le t_0$,
$$
V_{\rm CDISO}(t,t_0,t_M,k) =  \frac{\e^{-r(t_0-t)}}{N} \E \Big[ \Big( \sum_{i=1}^N V^i_{\rm CDS}(t_0,t_0,t_M,k) + (1-\delta)\Ind{\tau_i\le t_0} \Big)^+ \; \Big| \; \Gcal_t \Big],
$$
where $V^i_{\rm CDS}(t_0,t_0,t_M,k)$ is defined as in~\eqref{eq:Vcds} for firm $i$.

\begin{proposition}\label{prop:CDISOprice}
The price of a CDIS option is given by
$$
V_{\rm CDISO}(t,t_0,t_M,k)= \sum_{\alpha\in\{0,1\}^N} \frac{\e^{-r(t_0-t)}}{N}  \cExp{ (V_{*}(\alpha,t_0,t_M,k))^+ q(\alpha,t, t_0)}{\Fcal_t}
$$
with the conditional payoffs
$$
V_{*}(\alpha,t_0,t_M,k) = \sum_{i=1}^N \; \frac{\alpha_i}{a_i^\top Y_{t_0}} \psi^i_{\rm cds}(t_0,t_0,t_M,k)^\top \begin{pmatrix} Y_{t_0} \\ X_{t_0} \end{pmatrix} + (1-\delta)(1-\alpha_i) 
$$
and the conditional probabilities
$$
q(\alpha,t,t_0) = \prod_{i=1}^N  \;  \frac{(a_i^\top Y_{t_0})^{\alpha_i}(a_i^\top (Y_t-Y_{t_0}))^{1-\alpha_i}}{a_i^\top Y_{t}} \Ind{\tau_i>t} + (\Ind{\tau_i\le t})^{1-\alpha_i}
$$
where $\alpha=(\alpha_1,\dots,\alpha_N)$ and with the convention $0^0=0$.
\end{proposition}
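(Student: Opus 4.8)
The plan is to convert the $\Gcal_t$-conditional expectation in~\eqref{CDISeq} into an $\Fcal_t$-conditional expectation by temporarily enlarging the conditioning $\sigma$-field to $\Fcal_\infty\vee\Gcal_t$, exploiting the conditional independence of the default times, and then collapsing back through the tower property. The difficulty is that the payoff is the positive part of a \emph{sum} over the $N$ firms and therefore does not factorize across firms; the summation over $\alpha\in\{0,1\}^N$ is introduced precisely to linearize the dependence of the payoff on the default indicators $\Ind{\tau_i>t_0}$, each of which takes only the values $0$ and $1$.

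First I would freeze the survival configuration at the exercise date. For $h_i\in\{0,1\}$ one has $\alpha_i h_i+(1-\alpha_i)(1-h_i)=\Ind{h_i=\alpha_i}$, so that $\sum_{\alpha\in\{0,1\}^N}\prod_{i=1}^N\bigl(\alpha_i h_i+(1-\alpha_i)(1-h_i)\bigr)=1$. Substituting $h_i=\Ind{\tau_i>t_0}$ and inserting this partition of unity into~\eqref{CDISeq}, on the event $\{\Ind{\tau_i>t_0}=\alpha_i,\ i=1,\dots,N\}$ the integrand reduces to the bracketed expression
\[
g_\alpha(Y_{t_0},X_{t_0}):=\left(\sum_{i=1}^N\frac{\alpha_i}{a_i^\top Y_{t_0}}\psi^i_{\rm cds}(t_0,t_0,t_M,k)^\top\begin{pmatrix}Y_{t_0}\\X_{t_0}\end{pmatrix}+(1-\delta)(1-\alpha_i)\right)^+,
\]
which for each fixed $\alpha$ is an $\Fcal_{t_0}$-measurable function of $(Y_{t_0},X_{t_0})$ alone. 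Hence the payoff equals $\sum_\alpha g_\alpha(Y_{t_0},X_{t_0})\prod_{i=1}^N\bigl(\alpha_i\Ind{\tau_i>t_0}+(1-\alpha_i)\Ind{\tau_i\le t_0}\bigr)$.

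The heart of the argument is the factorization step. Conditioning first on $\Fcal_\infty\vee\Gcal_t$, the factor $g_\alpha(Y_{t_0},X_{t_0})$ is measurable and pulls out, while the product of indicators factorizes because $U_1,\dots,U_N$ are mutually independent and independent of $\Fcal_\infty$. Each per-firm term is then evaluated from the doubly stochastic identity $\cPro{\tau_i>u}{\Fcal_\infty}=a_i^\top Y_u$. For a firm alive at $t$, conditioning on $\{\tau_i>t\}=\{U_i<a_i^\top Y_t\}$ renormalizes $U_i$ to be uniform on $[0,a_i^\top Y_t)$, giving $\cPro{\tau_i>t_0}{\Fcal_\infty\vee\Gcal_t}=a_i^\top Y_{t_0}/(a_i^\top Y_t)$ and $\cPro{\tau_i\le t_0}{\Fcal_\infty\vee\Gcal_t}=a_i^\top(Y_t-Y_{t_0})/(a_i^\top Y_t)$; since $\alpha_i\in\{0,1\}$, the two cases combine into the single factor $(a_i^\top Y_{t_0})^{\alpha_i}(a_i^\top(Y_t-Y_{t_0}))^{1-\alpha_i}/(a_i^\top Y_t)$ appearing in the statement, multiplied by $\Ind{\tau_i>t}$. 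For a firm already defaulted at $t$ one has $\Ind{\tau_i>t_0}=0$ surely, so only $\alpha_i=0$ is admissible and the contribution is $\Ind{\tau_i\le t}$ restricted to $\alpha_i=0$; the convention $0^0=0$ together with $\Ind{\tau_i>t}=0$ is exactly what annihilates the degenerate cross terms and produces the second summand $(\Ind{\tau_i\le t})^{1-\alpha_i}$ in the per-firm factor.

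Finally I would collapse the conditioning. Having pulled $g_\alpha$ out and replaced each indicator product by its explicit $\Fcal_\infty\vee\Gcal_t$-conditional value, the resulting expression is built only from $(Y_{t_0},X_{t_0})$, the $\Fcal_t$-measurable data, and the $\Gcal_t$-measurable default history at $t$; applying the tower property $\cExp{\cdot}{\Gcal_t}=\cExp{\cExp{\cdot}{\Fcal_\infty\vee\Gcal_t}}{\Gcal_t}$ and then reducing the outer $\Gcal_t$-expectation to an $\Fcal_t$-expectation via Lemma~\ref{lem:fundpricing} yields the stated formula after restoring the discount factor $\e^{-r(t_0-t)}/N$. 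I expect the main obstacle to be the bookkeeping in this factorization step: verifying the conditional independence of the defaults given $\Fcal_\infty\vee\Gcal_t$, correctly renormalizing $U_i$ on the survival event, and tracking the boundary conventions so that inadmissible configurations (a firm defaulted by $t$ yet assigned $\alpha_i=1$) carry zero weight.
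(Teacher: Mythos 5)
Your proposal is correct and follows essentially the same route as the paper: the same decomposition of the payoff over the $2^N$ default configurations $\alpha\in\{0,1\}^N$ (your partition of unity is exactly the paper's $q(\alpha)$), followed by computing $\E[q(\alpha)\mid\Fcal_\infty\vee\Gcal_t]$ as a product of per-firm factors. The only differences are cosmetic: the paper obtains the per-firm conditional survival probabilities by citing \cite[Lemma~9.1.3]{Bielecki2002credit} (equation~\eqref{eq:BRlem9.13}) where you re-derive them from the uniform-threshold construction, and the final passage from the $\Gcal_t$- to the $\Fcal_t$-conditional expectation rests on the conditional independence of $\Fcal_\infty$ and $\Gcal_t$ given $\Fcal_t$ rather than on Lemma~\ref{lem:fundpricing} itself, which is a special case of that fact.
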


The time-$t$ price of a CDS option, or of a CDIS option, is therefore given by the expected value of a non-smooth continuous function in $(Y_{t_0},X_{t_0})$ where $t<t_0$.
A methodology to price such contracts is presented in Section~\ref{sc:lhcm:optapp}.

\subsection{Credit valuation adjustment} \label{sec:CVA}

The unilateral credit valuation adjustment (UCVA) of a position in a bilateral contract is the present value of losses resulting from its cancellation when the counterparty defaults.

\begin{proposition}
The time-$t$ price of the UCVA with maturity $t_M$ and time-$u$ net positive exposure $f(u,Y_u,X_u)$, for some continuous function $f(u,y,x)$, is
\begin{align*}
{\rm UCVA}(t,t_M) &= \cExp{ \e^{-r(\tau-t)} \Ind{t<\tau\le t_M}f(\tau,Y_{\tau},X_{\tau} )}{\Gcal_t} \\
&= \frac{\Ind{\tau>t}}{a^\top Y_{t}}\int_t^{t_M} \e^{-r(u-t)}\cExp{f(u,Y_u,X_u)\,a^\top(c\, Y_u + \gamma \, X_u)}{\Fcal_t}du .
\end{align*}
where $\tau$ is the counterparty default time.
\end{proposition}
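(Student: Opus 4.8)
The plan is to reduce the $\Gcal_t$-conditional expectation of an at-default payoff to an $\Fcal_t$-conditional expectation over the economic background filtration, and then to exploit the linear drift of $(Y_t,X_t)$ exactly as in the proofs of Propositions~\ref{prop:bondMat}--\ref{prop:bondDef} and Corollary~\ref{cor:condBond}. The starting point is the doubly-stochastic structure: since $\cPro{\tau>u}{\Fcal_\infty}=S_u$ with $S$ right-continuous, non-increasing and $S_0=1$, the $\Fcal_\infty$-conditional law of $\tau$ on $(t,\infty)$ has ``density'' $-dS_u$. Concretely I would invoke the reduction underlying Lemma~\ref{lem:fundpricing}: for any bounded $\Fcal$-optional process $Z$ and $u\ge t$,
\[
\cExp{Z_\tau\,\Ind{t<\tau\le t_M}}{\Gcal_t}=\Ind{\tau>t}\frac{1}{S_t}\,\cExp{\int_{(t,t_M]}Z_u\,(-dS_u)}{\Fcal_t}.
\]

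I would apply this with $Z_u=\e^{-r(u-t)}f(u,Y_u,X_u)$, which is admissible because $f$ is continuous and the state space is compact, so $Z$ is bounded and optional. Substituting the semimartingale dynamics $dS_u=a^\top dY_u=a^\top(c\,Y_u+\gamma\,X_u)\,du+a^\top dM^Y_u$ splits the integral into an absolutely continuous piece, whose $du$-density is the negative drift $-a^\top(c\,Y_u+\gamma\,X_u)$ of $S$, plus a stochastic integral against the martingale $M^Y$. For the drift piece I would interchange $\cExp{\cdot}{\Fcal_t}$ and the $du$-integration by Fubini (again justified by boundedness on the compact state space); this reproduces the integral representation in the statement, and the linear-drift identity~\eqref{lineq} is what renders it explicit in applications (for instance $f\equiv1$ recovers $\psi_{\rm D}$).

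The crux is to show that the martingale contribution $\cExp{\int_{(t,t_M]}\e^{-r(u-t)}f(u,Y_u,X_u)\,a^\top dM^Y_u}{\Fcal_t}$ vanishes. Writing $Z_u=Z_{u-}+\Delta Z_u$, the predictable-integrand piece $\int Z_{u-}\,a^\top dM^Y_u$ is a true $\Fcal$-martingale once the usual integrability is checked (finite by boundedness of the continuous $f$ on the compact state space, which upgrades the local martingale to a genuine one), so its $\Fcal_t$-conditional expectation is zero. The delicate point is the residual co-jump term $\sum_{t<u\le t_M}\Delta Z_u\,a^\top\Delta M^Y_u$, which need not be mean-zero in general: it vanishes identically precisely when the survival martingale $a^\top M^Y$ is continuous. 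This is exactly the single-name diffusive regime (the LHC specification) relevant to this proposition, where $S$ is continuous, $Z$ inherits a continuous martingale part, and no correction survives. I would therefore state and carry out the argument under continuity of $a^\top M^Y_t$ for the counterparty; identifying and cleanly justifying the vanishing of this martingale (and co-jump) term is the main obstacle, as everything else reduces to the Fubini interchange and the linear-drift expectation already established.
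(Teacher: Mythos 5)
Your argument is essentially the paper's intended one: the paper proves this proposition exactly as it proves Proposition~\ref{prop:bondDef} and Corollary~\ref{cor:condBond}, namely by applying Lemma~\ref{lem:fundpricing2} with $Z_u=\e^{-r(u-t)}f(u,Y_u,X_u)$, substituting $-dS_u=-a^\top(c\,Y_u+\gamma\,X_u)\,du-a^\top dM^Y_u$, discarding the martingale integral, and interchanging the conditional expectation with the $du$-integration before invoking the linear-drift formula. Your extra care about the co-jump term when $a^\top M^Y$ is discontinuous is a point the paper glosses over (Lemma~\ref{lem:fundpricing2} requires a \emph{predictable} integrand, which $f(u,Y_u,X_u)$ need not be outside the continuous LHC setting), so restricting to a continuous survival martingale is a legitimate, arguably necessary, refinement rather than a deviation.
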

Computing the UCVA therefore boils down to a numerical integration of European style option prices. As is the case for CDS and CDIS options, these option prices can be uniformly approximated as described in Section~\ref{sc:lhcm:optapp}.
We refer to~\cite{brigo2014arbitrage} for a thorough analysis of bilateral counterparty risk valuation in a doubly stochastic default framework.

\section{The linear hypercube model}\label{sec:LHCM}

The linear hypercube (LHC) model is a single-name model, that is $n=1$ so that $S=Y$.
The survival process is absolutely continuous, as in Remark~\ref{remhara}, and the factor process $X$ is diffusive and takes values in a hypercube whose edges' length is given by $Y_t$, for all $t\ge0$.
More formally the state space of $(Y,X)$ is given by
\begin{equation*}
E = \big\{ (y,x)\in\R^{1+m} \, : \, y\in(0,1]\text{ and } x \in [0,y]^m \big\}.
\end{equation*}
The dynamics of $(Y,X)$ is
\begin{equation}\label{SXdyn}
\begin{aligned}
dY_t & = -\gamma^\top X_t \,dt \\
dX_t & = ( b Y_t + \beta X_t)\,dt + \Sigma(Y_t,X_t)\,dW_t
\end{aligned}
\end{equation}
for some $\gamma\in\R^m_+$ and some $m$-dimensional Brownian motion $W$, and where the volatility matrix $\Sigma(y,x)$ is given by
\begin{equation} \label{Xdisp}
\Sigma(y,x)={\rm diag}\big( \sigma_1\sqrt{x_1(y-x_1)} ,   \dots,\, \sigma_m\sqrt{x_m( y-x_m)} \big)
\end{equation}
with volatility parameters $\sigma_1, \dots,\, \sigma_m\ge 0$.

Let $(Y,X)$ be an $E$-valued solution of \eqref{SXdyn}. It is readily verified that $Y$ is non-increasing and that the parameter $\gamma$ controls the speed at which it decreases,
\[ 0 \le \gamma^\top X_t \le \gamma^\top \bm 1 \, Y_t,  \]
which implies
\begin{equation}\label{APbound}
0\le \lambda_t\le \gamma^\top \bm 1 \quad \text{and} \quad Y_t \ge Y_0 \, \e^{-\gamma^\top \bm 1 \, t} > 0, \quad\text{for any } t\ge0.
\end{equation}
Note that the default intensity upper bound $\gamma^\top\bm1$ depends on $\gamma$, which is estimated from data.
Therefore, a crucial step in the model validation procedure is to verify that the range of possible default intensities is sufficiently wide.

The following theorem gives conditions on the parameters such that the LHC model~\eqref{SXdyn} is well defined.

\begin{theorem}\label{thmexiuniLCRM}
Assume that, for all $i=1,\dots,\,m$,
\begin{align}
 b_i  - \sum_{j\neq i}   \beta_{ij}^-  &\ge  0 ,\label{eq:condatzero}\\
\gamma_i + \beta_{ii} +  b_i  + \sum_{j\neq i}  ( \gamma_j + \beta_{ij}) ^+ &\le 0.  \label{eq:condatLS}
\end{align}
Then for any initial law of $(Y_0,X_0)$ with support in $E$ there exists a unique in law $E$-valued solution $(Y,X)$ of~\eqref{SXdyn}. It satisfies the boundary non-attainment, for any $i=1,\dots,m$,
\begin{enumerate}
\item $X_{it}>0$ for all $t\ge 0$ if $X_{i0}>0$ and
\begin{equation}
b_i  - \sum_{j\neq i}   \beta_{ij}^-  \ge  \frac{\sigma_i^2}{2} ,\label{eq:condatzeroS}
\end{equation}

\item $X_{it}<Y_t$ for all $t\ge 0$ if $X_{i0}<Y_0$ and
\begin{equation} \label{eq:condatLSS}
\gamma_i + \beta_{ii} +  b_i  + \sum_{j\neq i}  ( \gamma_j + \beta_{ij}) ^+ \le - \frac{\sigma_i^2}{2}.
\end{equation}
\end{enumerate}
\end{theorem}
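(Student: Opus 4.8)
The plan is to recognize $(Y_t,X_t)$ as a polynomial diffusion on $E$ and to extract existence and uniqueness in law from the general theory of \cite{filipovic2016polynomial}, once conditions \eqref{eq:condatzero} and \eqref{eq:condatLS} are identified as the inward-drift (invariance) conditions along the boundary of $E$. Indeed the drift $(-\gamma^\top x,\ by+\beta x)$ is affine, and the diffusion matrix of the full state $(Y,X)$ is $\diag\big(0,\ \sigma_1^2 x_1(y-x_1),\dots,\sigma_m^2 x_m(y-x_m)\big)$, whose entries are quadratic polynomials that are nonnegative on $E$; hence $(Y,X)$ is a polynomial diffusion whose diffusion degenerates in the normal direction on each boundary face. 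By \eqref{APbound} the process never approaches $\{y=0\}$, so on every finite horizon it is confined to a compact subset of $E$, which is exactly the setting in which the polynomial existence and moment machinery applies.

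First I would verify invariance face by face. On $\{y=1\}$ the drift of $Y$ is $-\gamma^\top x\le 0$, and $\{y=0\}$ is unreachable by \eqref{APbound}. On $\{x_i=0\}$ the coefficient $\sigma_i^2x_i(y-x_i)$ vanishes and the drift of $X_i$ there is $b_iy+\sum_{j\neq i}\beta_{ij}x_j\ge\big(b_i-\sum_{j\neq i}\beta_{ij}^-\big)y$, nonnegative precisely under \eqref{eq:condatzero}. On $\{x_i=y\}$ I would examine $Z_i:=Y-X_i$, whose diffusion coefficient again vanishes there and whose drift, evaluated on that face, equals $-(\gamma_i+\beta_{ii}+b_i)y-\sum_{j\neq i}(\gamma_j+\beta_{ij})x_j\ge-\big(\gamma_i+\beta_{ii}+b_i+\sum_{j\neq i}(\gamma_j+\beta_{ij})^+\big)y$, nonnegative precisely under \eqref{eq:condatLS}. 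These checks make $E$ invariant and supply the hypotheses of the polynomial existence theorem, giving an $E$-valued solution. For uniqueness in law I would use that the extended generator maps polynomials to polynomials, so the moments of $(Y_u,X_u)$ solve a closed linear ODE and are pinned down by the initial data; since $E$ is bounded the moment problem is determinate, so any two solutions share identical one-dimensional marginals for every starting point, and the standard criterion then yields well-posedness of the martingale problem, i.e.\ uniqueness in law.

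For the boundary non-attainment I would run a McKean/Lyapunov argument coordinatewise. To prove $X_{it}>0$, apply It\^o to $f=-\log X_i$: its drift is $-\big(b_iY+\sum_j\beta_{ij}X_j\big)/X_i+\tfrac12\sigma_i^2(Y-X_i)/X_i$, which near $\{x_i=0\}$ is bounded above by $\frac{Y}{X_i}\big(\tfrac{\sigma_i^2}{2}-(b_i-\sum_{j\neq i}\beta_{ij}^-)\big)+O(1)$; under \eqref{eq:condatzeroS} the bracket is $\le0$ and $Y$ stays bounded away from $0$, so $f(X_{it})$ cannot reach $+\infty$ in finite time and $X_i$ never hits $0$. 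The same argument applied to $g=-\log(Y-X_i)$, using the drift of $Z_i$ computed above together with \eqref{eq:condatLSS}, gives $X_{it}<Y_t$. In each case one localizes away from the boundary, bounds the generator of the logarithmic barrier from above, and concludes by a supermartingale/Gronwall estimate that the barrier stays finite.

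The step I expect to be the main obstacle is uniqueness in law, not existence or non-attainment. Because the diffusion degenerates on the boundary and, under the weak conditions \eqref{eq:condatzero}--\eqref{eq:condatLS}, the boundary may actually be attained, classical pathwise uniqueness (Yamada--Watanabe) does not apply, and one must commit to the moment-determinacy and martingale-problem route of \cite{filipovic2016polynomial}, checking carefully that the compactness granted by \eqref{APbound} is genuinely available on each finite interval. The secondary difficulty is purely the bookkeeping in \eqref{eq:condatLS} and \eqref{eq:condatLSS}, where the $\gamma$-driven drift of $Y$ and the $\beta$-driven drift of $X$ combine and the worst case is taken through the $(\gamma_j+\beta_{ij})^+$ terms.
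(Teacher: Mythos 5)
Your boundary computations are exactly the ones the paper uses: the drift of $X_i$ on $\{x_i=0\}$ bounded below by $\bigl(b_i-\sum_{j\neq i}\beta_{ij}^-\bigr)y$, the drift of $Y-X_i$ on $\{x_i=y\}$ bounded below by $-\bigl(\gamma_i+\beta_{ii}+b_i+\sum_{j\neq i}(\gamma_j+\beta_{ij})^+\bigr)y$, uniqueness in law via the moment/martingale-problem machinery of the polynomial framework (the paper cites Theorem~4.2 of the Filipovi\'c--Larsson paper together with relative compactness of $E$), and boundary non-attainment via a McKean-type logarithmic barrier applied to $x_i$ and $y-x_i$ (the paper simply cites Theorem~5.7 there for these two polynomials, and your $-\log$ argument is the standard proof of that result). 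So on uniqueness and non-attainment you are on the paper's route.

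The one place where your argument is genuinely thinner than the paper's is existence. You assert that the face-by-face inward-drift checks ``supply the hypotheses of the polynomial existence theorem,'' but the conditions \eqref{eq:condatzero}--\eqref{eq:condatLS} are non-strict inequalities, and the general existence theorems for polynomial diffusions on semialgebraic state spaces typically require either strict inward-pointing drift on each face or a state space from a catalogued list; the hyperpyramid $E$ with non-strict boundary conditions is not an off-the-shelf case, which is precisely why the paper constructs the solution by hand. Its argument is: extend the coefficients to bounded continuous functions on all of $\R^{1+m}$ by composing with the clipping map $(\Ycal,\Xcal)(y,x)=\bigl(y^+\wedge 1,\,x^+\wedge y^+\wedge 1\bigr)$, obtain a weak solution of the extended SDE from the general existence theorem for bounded continuous coefficients, and then prove that this solution never leaves $E$ via a stopping-time contradiction (the times $\tau_{\delta,\epsilon}$ at which $X_i$ would first cross below $-\epsilon$ after lingering in $(-\epsilon,0)$), exploiting that the extended dispersion vanishes identically for $x_i\le 0$ and the extended drift of $X_i$ is nonnegative there. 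Going from ``drift points inward and diffusion degenerates on the boundary'' to actual invariance requires an argument of this kind for a degenerate diffusion, and your sketch skips it. Everything else in your proposal matches the paper's proof in substance.
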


The state space $E$ is a regular $(m+1)$-dimensional hyperpyramid. Figure~\ref{fig:1Fss} shows $E$ when $m=1$ and illustrates the drift inward pointing conditions~\eqref{eq:condatzero}--\eqref{eq:condatLS} at the boundaries of $E$.

In Section~\ref{sec:MPR} we describe all possible market price of risk specifications under which the drift function of $(Y,X)$ remains linear.

\begin{remark}
The volatility of $X_{i}$ is maximal at the center of its support when ${X_{i}=Y/2}$ and decreases to zero at its boundaries for $X_{i}\to$ 0 and $X_{i}\to Y$.
As a consequence, some factors may alternate visits to the lower part and upper part of their supports, and therefore may mimic regime-shifting behavior.
\end{remark}

\begin{remark}
Define the normalized process $Z = X/Y$, then the dynamics of $(Z, \lambda)$ is given by
\begin{align}
dZ_t &= \big(b +  (\beta + \diag(\gamma^\top Z_t) )Z_t \big)dt + \Sigma(1, \, Z_t)dW_t, \\
d\lambda_t &= \gamma^\top dZ_t.
\end{align}
We derive closed form expressions for the stationary points of the drift of $(Z,\lambda)$ in Sections~\ref{sec:lhc1n} and~\ref{sec:fit}, and in Example~\ref{ex:negcor}.
\end{remark}

\subsection{One-factor LHC model} \label{sec:lhc1n}

The default intensity of the one-factor LHC model, $m=1$, has autonomous dynamics of the form
$$
d\lambda_t = ( \lambda_t^2 + \beta \lambda_t + b \gamma )dt + \sigma\sqrt{\lambda_t(\gamma-\lambda_t)}dW_t.
$$
The diffusion function of $\lambda$ is the same as the diffusion function of a Jacobi process taking values in the compact interval $[0,\gamma]$.
However, the drift of $\lambda$ includes a quadratic term that is neither present in Jacobi nor in affine processes.\footnote{The Jacobi process has been used in \cite{delbaen2002interest} to model the short rate in which case the risk-free bond prices are given by weighted series of Jacobi polynomials in the short rate value.}
Conditions \eqref{eq:condatzero}--\eqref{eq:condatLS} in Theorem~\ref{thmexiuniLCRM} rewrite
\[
b \ge 0 \quad\text{and}\quad    (\gamma + b + \beta) \le 0.\]
That is, the drift of $\lambda$ is nonnegative at $\lambda=0$ and nonpositive at $\lambda=\gamma$.
We can factorize the drift as follows,
\[  \lambda_t^2 + \beta \lambda_t + b\gamma = (\lambda_t-\ell_1)(\lambda_t-\ell_2), \]
for some roots $0\le \ell_1\le \gamma\le \ell_2$. 
Hence $\lambda$ drifts towards $\ell_1$, as long as not ${\lambda_t=\ell_2=\gamma}$.
The corresponding original parameters are given by $\beta=-(\ell_1+\ell_2)$ and ${b\gamma=\ell_1\ell_2}$, so that the drift of the factor $X$  reads
\begin{equation}\label{mrcase}
  \beta Y_t+B X_t  = (\ell_1+\ell_2)\bigg(\frac{\ell_1\ell_2}{\gamma(\ell_1+\ell_2)} Y_t -X_t\bigg).
\end{equation}

As a sanity check we verify that the constant default intensity case, $\lambda_t = \gamma$ for all $t\ge0$, is nested as a special case.
This is equivalent to have $X= Y$, which can be obtained by specifying the dynamics $dX_t = -\gamma X_t\,dt $ for the factor process and the initial condition $X_0=1$.
This corresponds to the stationary points $\ell_1=0$ and $\ell_2=\gamma$.

The dynamics of the standard one-factor affine model on $\R_+$ is
\begin{equation*}
d\lambda_t = \ell_2(\ell_1-\lambda_t)dt +  \sigma\sqrt{\lambda_t}dW_t,
\end{equation*}
where $\ell_2$ is the mean-reversion speed and $\ell_1$ the mean-reversion level of $\lambda$.
Figure~\ref{fig:1Fdynamic} shows the drift and diffusion functions of the default intensity for the one-factor LHC and affine models. The drift function is affine in the affine model whereas it is quadratic in the LHC model.
However, for reasonable parameters values, the drift functions look similar when the default intensity is smaller than the mean-reversion level $\lambda<\ell_1$.
On the other hand when $\lambda>\ell_1$, the force of drifting towards $\ell_1$ is smaller and concave in the LHC model. The diffusion function is strictly increasing and concave for the affine model whereas it has a concave semi-ellipse shape in the LHC model.
The diffusion functions have the same shape on $[0,\gamma/2]$ but typically do not scale equivalently in the parameter $\sigma$.
Note that the parameter $\gamma$ can always be set sufficiently large so that the likelihood of $\lambda$ going above $\gamma/2$ is arbitrarily small.

\subsection{Option price approximation} \label{sc:lhcm:optapp}

We saw in Sections~\ref{sec:CDopt} and \ref{sec:CVA} that the pricing of a CDS option, a CDIS option, or a UCVA boils down to computing a $\Fcal_t$-conditional expectation of the form
\[ \Phi(f;t,t_M)=\cExp{f(Y_{t_M},X_{t_M})}{\Fcal_t} \]
for some continuous function $f(y,x)$ on $E$. We now show how to approximate $\Phi(f;t,t_M)$ in closed form form by means of a polynomial approximation of $f(y,x)$.
The methodology presented hereinafter applies to any linear credit risk model which has a compact state space $E$ and for which the $\Fcal_t$-conditional moments of $(Y_{t_M},X_{t_M})$ are computable.

To this end, we first recall how the $\Fcal_t$-conditional moments of $(Y_{t_M},X_{t_M})$ for $t\le t_M$ can be obtained in closed form as described in \cite{filipovic2016polynomial}.
Denote by $\text{Pol}_n(E)$ the set of polynomials $p(y,x)$ on $E$ of degree $n$ or less. 
It is readily seen that the generator of $(Y,X)$,
\begin{align}\label{eqgen}
 \Gcal f(y,x)&= \big( -\gamma^\top x \quad (\beta y+Bx)^\top \big) \nabla f(y,x)\\
& \quad +\frac{1}{2} \sum_{i=1}^m \frac{\partial^2 f(y,x)}{\partial x_i^2} \sigma_i^2 x_i(y-x_i),
\end{align}
is polynomial in the sense that
\[ \Gcal \text{Pol}_n(E) \subset \text{Pol}_n(E)  \quad \text{for any $n\in\N$.} \]
Let $N_n=\binom{n+1+m}{n}$ denote the dimension of $\text{Pol}_n(E)$ and fix a polynomial basis $\{h_1,\dots,h_{N_n}\}$  of $\text{Pol}_n(E)$. We define the function of $(y,x)$
\[ H_n(y,x)=(h_1(y,x),\dots,h_{N_n}(y,x))^\top \]
with values in $\R^{N_n}$.
There exists a unique matrix representation $G_n$ of $\Gcal\mid_{\text{Pol}_n(E) }$ with respect to this polynomial basis such that for any $p\in\text{Pol}_n(E) $ we can write
\[ \Gcal p(y,x) = H_n(y,x)^\top \, G_n \, \vec{p}\]
where $\vec p$ is the coordinate representation of $p$. 
This implies the moment formula
\begin{equation}\label{PPP}
\cExp{p(Y_{t_M},X_{t_M})}{\Fcal_t} = H_n(Y_t,X_t)^\top \e^{G_n (t_M-t)} \vec{p}
\end{equation}
for any $t\le t_M$ we have, see~\cite[Theorem~(3.1)]{filipovic2016polynomial}.

\begin{remark}
  The choice for the basis $H_n(y,x)$ of $\text{Pol}_n(E)$ is arbitrary and one may simply consider the monomial basis,
\[ H_n(y,x)=\{1,\,y,\,x_1,\dots,\,x_m,\,y^2,\,yx_1,\,x_1^2,\dots , \, x_m^n\}\]
in which $G_n$ is block-diagonal.
There are efficient algorithms to compute the matrix exponential $\e^{G_n (t_M-t)}$, see for example \cite{Higham2008}.
Note that only the action of the matrix exponential is required, that is $\e^{G_n (t_M-t)}\vec{p}$ for some $p\in\text{Pol}_n(E)$, for which specific algorithms exist as well, see for examples \cite{al2011computing} and \cite{sidje1998expokit} and references within.
\end{remark}

Now let $\epsilon>0$. From the Stone-Weierstrass approximation theorem \cite[Theorem 5.8]{rudin1974functional} there exists a polynomial $p\in{\rm Pol}_n(E)$ for some $n$ such that
\begin{equation}\label{WT}
\sup_{(y,x)\in E}\left\lvert f(y,x) - p(y,x) \right\lvert \le \epsilon.
\end{equation}
Combining \eqref{PPP} and \eqref{WT} we obtain the desired approximation of $\Phi(f;t,T)$.

\begin{theorem}
Let $p\in{\rm Pol}_n(E)$ be as in \eqref{WT}. Then $\Phi(f;t,t_M)$ is uniformly approximated by
\begin{equation}\label{eqEB}
\sup_{t\le t_M}\left\| \Phi(f;t,t_M)- H_n(Y_t,X_t)^\top \e^{G_n (t_M-t)} \vec{p}\right\|_{L^\infty}\le \epsilon.
\end{equation}
\end{theorem}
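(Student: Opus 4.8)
The plan is to reduce the statement to the moment formula \eqref{PPP} and the uniform approximation bound \eqref{WT}, using nothing more than the monotonicity of conditional expectation. First I would invoke \eqref{PPP} with the fixed polynomial $p$ to recognize the polynomial term as a genuine conditional expectation: for every $t \le t_M$,
\[
H_n(Y_t,X_t)^\top \e^{G_n(t_M-t)} \vec{p} = \cExp{p(Y_{t_M},X_{t_M})}{\Fcal_t}.
\]
Since $\Phi(f;t,t_M) = \cExp{f(Y_{t_M},X_{t_M})}{\Fcal_t}$ by definition, subtracting and using linearity of conditional expectation gives
\[
\Phi(f;t,t_M) - H_n(Y_t,X_t)^\top \e^{G_n(t_M-t)} \vec{p} = \cExp{f(Y_{t_M},X_{t_M}) - p(Y_{t_M},X_{t_M})}{\Fcal_t}.
\]

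Second, I would apply the conditional Jensen inequality (equivalently, the monotonicity of conditional expectation applied to the convex map $z \mapsto |z|$) to bound the absolute value of the right-hand side by $\cExp{\lvert f(Y_{t_M},X_{t_M}) - p(Y_{t_M},X_{t_M})\rvert}{\Fcal_t}$. The key input is that the solution takes values in the compact state space $E$ almost surely, which is exactly what Theorem~\ref{thmexiuniLCRM} guarantees via the existence of an $E$-valued solution. Combined with the Stone--Weierstrass bound \eqref{WT}, which asserts $\lvert f(y,x)-p(y,x)\rvert \le \epsilon$ for all $(y,x) \in E$, the integrand is dominated pointwise by the deterministic constant $\epsilon$, so its conditional expectation is at most $\epsilon$ almost surely.

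Third, since this almost sure bound by $\epsilon$ holds for each fixed $t \le t_M$ and the constant $\epsilon$ is deterministic and independent of $t$, it passes directly to the $L^\infty$ norm over $\Omega$ and then to the supremum over $t \le t_M$, which yields \eqref{eqEB}.

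I do not anticipate a genuine obstacle here: the argument is essentially a one-line application of monotonicity of conditional expectation combined with the two displayed ingredients already established. The only point deserving care is that the approximation in \eqref{WT} is uniform over the \emph{entire} state space $E$ rather than over some path-dependent region, which is precisely what keeps the error bound deterministic and therefore allows it to survive both the conditional expectation and the supremum over $t$; this step relies essentially on the almost sure confinement of $(Y_t,X_t)$ to $E$ supplied by the well-posedness result.
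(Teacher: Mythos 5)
Your argument is correct and is exactly the paper's intended proof: the paper disposes of this theorem with the single line ``Combining \eqref{PPP} and \eqref{WT} we obtain the desired approximation,'' and your write-up simply fills in the routine details (identifying the matrix-exponential term as $\cExp{p(Y_{t_M},X_{t_M})}{\Fcal_t}$ via the moment formula, bounding the difference by monotonicity of conditional expectation, and using that $(Y_t,X_t)$ stays in $E$ so the uniform bound $\epsilon$ applies pathwise). No discrepancy to report.
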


The approximating polynomial $p$ in \eqref{WT} needs to be found case by case.  We illustrate this for the CDS option in Section~\ref{sec:CDSopt} and for the CDIS option on an homogenous portfolio in Section~\ref{sec:CDISopt}. 

\begin{remark}
Approximating the payoff function $f(y,x)$ on a strict subset of the state space $E$ is sufficient to approximate an option price.
Indeed, for any times $t \le u \le s $ the process $(Y_u,X_u)_{t\le u\le s}$ takes values in
$$
\big\{(y,x)\in E \,:\, Y_t \ge y \ge \e^{-\gamma^\top\bm 1(s-t)}Y_t \big\} \subset E.
$$
A polynomial approximation on a compact subset of $E$ can be expected to be more precise and, as a result, to produce a more accurate price approximation. 
See Section~\ref{sec:CDSopt} for an implementation example.
\end{remark}

\section{Case studies}\label{sec:app}

We show that the LHC model can reproduce complex term structure dynamics, that option prices can be accurately approximated, and that the prices of derivatives on homogeneous portfolios can similarly be computed.
First, we fit a parsimonious LHC model specification to CDS data and discuss the estimated parameters and factors. Then, we accurately approximate the price of CDS options at different moneyness.
Finally, for a homogeneous portfolio, we derive closed form expressions for the payoff function of a CDIS option and for the tranche prices.

\subsection{CDS calibration} \label{sec:fit}

We calibrate the LHC model to a high yield firm, Bombardier Inc., and also to an investment grade firm, Walt Disney Co., in order to show that the model flexibly adjusts to different spread levels and dynamics.
We also present a fast filtering and calibration methodology which is specific to LHC models.

\paragraph{Data description.}
The empirical analysis is based on composite CDS spread data from Markit which are essentially averaged quotes provided by major market makers.
The sample starts on January 1\textsuperscript{th} 2005 and ends on January 1\textsuperscript{th} 2015.
The data set contains 552 weekly observations summing up to 3620 observed CDS spreads for each firm.
At each date we include the available spreads with the modified restructuring clause on contracts with maturities of 1, 2, 3, 4, 5, 7, and 10 years.

Time series of the 1-year, 5-year, and 10-year CDS spreads are displayed in Figure~\ref{fig:cdsts}, as well as the relative changes on the 5-year versus 1-year CDS spread.
The two term-structures of CDS spreads exhibit important fluctuations of their level, slope, and curvature.
The time series can be split into three time periods.
The first period, before the subprime crisis, exhibits low spreads in contango and low volatility.
The second period, during the subprime crisis, exhibits high volatility with skyrocketing spreads temporarily in backwardation.
The crisis had a significantly larger impact on the high yield firm for which the spreads have more than quadrupled.
The third period is characterized by a steep contango and a lot of volatility. Figure~\ref{fig:cdsts} also shows that CDS spread changes are strongly correlated across maturities.
Summary statistics are reported in Table~\ref{tab:cdsstats}.

\paragraph{Model specification.}
The risk neutral dynamics of each survival process is given by the LHC model of Section~\ref{sec:LHCM} with two and three factors.
We set $\gamma =  \gamma_1 \bm e_1$, for some $\gamma_1\ge  0$, and consider a cascading structure of the form
\begin{equation} \label{eq:LHCCi}
dX_{it} = \kappa_{i}(\theta_i X_{(i+1)t}-X_{it})\,dt + \sigma_i\sqrt{X_{it}(Y_t-X_{it})}\,dW_{it}
\end{equation}
for $i=1,\dots,\,m-1$ and
\begin{equation} \label{eq:LHCCm}
dX_{mt} = \kappa_{m}(\theta_m Y_t-X_{mt})\,dt + \sigma_m\sqrt{X_{mt}(Y_t-X_{mt})}\,dW_{mt}
\end{equation}
for some parameters $\kappa,\theta,\sigma\in\R^m_+$ satisfying
\begin{equation}\label{eq:LHCCconst}
\theta_i \le
1 - \frac{\gamma_1}{\kappa_i} 
\end{equation}
for $i=1,\dots,\,m$.
We have that $\beta_{ii}=-\kappa_i$, $\beta_{i,i+i}=\kappa_i\theta_i$, and $\beta_{ij}=0$ otherwise,  $b_m=\kappa_m \theta_m$ and $b_i=0$ otherwise.
It directly follows that
\[
0 \le b_i - \sum_{j\neq i} \beta_{ij}^- = \Ind{i=m} \kappa_m \theta_m = \Ind{i=m} \beta_{mm} 
\]
and for $i=1,\dots,\,m$
\begin{align*}
0  & \ge \gamma_i + \beta_{ii} +  b_i  + \sum_{j\neq i}  ( \gamma_j + \beta_{ij}) ^+  = \gamma_1 -\kappa_i + \kappa_i\theta_i\\
&  \quad =\gamma_1+\beta_{ii}+\Ind{i\neq m}\beta_{i,i+1} + \Ind{i=m}b_m.
\end{align*}
This shows that the parameter conditions~\eqref{eq:condatzero}-\eqref{eq:condatLS} are satisfied.
Note that~\eqref{eq:condatzero}-\eqref{eq:condatLS} boil down to standard linear parameter constraints when expressed in terms of $\beta$ and $b$.
They are therefore compatible with efficient optimization algorithms.

This specification allows default intensity values to persistently be close to zero over extended periods of time.
It also allows to work with a multidimensional model parsimoniously as the number of free parameters is equal to $3m+1$ whereas it is equal to $3m + m^2$ for the generic LHC model.
The default intensity is then proportional to the first factor and given by $\lambda=\gamma_1X_{1}/Y$.

We denote the two-factor and three-factor linear hypercube cascade models by ${\rm LHCC }(2)$ and ${\rm LHCC }(3)$, respectively.
In addition, we estimate a three-factor model, denoted ${\rm LHCC }(3)^*$, where parameter $\gamma_1$ is an exogenous fixed parameter.
This parameter value is fixed so as to be about twice as large as the estimated $\gamma_1$ from the ${\rm LHCC }(3)$ model.
We estimate the constrained model in order to determine whether the choice of the default intensity upper bound is critical for the empirical results.

We set the risk-free rate equal to the average 5-year risk-free yield over the sample, $r=2.52\%$.
We make the usual assumption that the recovery rate is equal to $\delta=40\%$.
We also use Lemma~\ref{lem:simplerFormulas} to compute efficiently the CDS spreads, which is justified by the following result.

\begin{lemma}\label{lhccA*inv}
Assume that $r>0$, then the matrix $A^*=A-r\Id$ with $A$ as in~\eqref{eq:Amat} is invertible for the cascade LHCC model defined in~\eqref{eq:LHCCi}--\eqref{eq:LHCCm} and with $\gamma=\gamma_1\e_1$.
\end{lemma}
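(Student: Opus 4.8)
The plan is to compute $\det A^*$ in closed form and read off that it is nonzero. First I would write the matrix out concretely. For the LHC model $n=1$, so in the notation of \eqref{eq:dY}--\eqref{eq:Amat} we have $c=0$, while the drift $dY_t=-\gamma^\top X_t\,dt$ forces the off-diagonal block of $A$ to be $-\gamma^\top=-\gamma_1\bm e_1^\top$. The cascade structure \eqref{eq:LHCCi}--\eqref{eq:LHCCm} gives $\beta_{ii}=-\kappa_i$, $\beta_{i,i+1}=\kappa_i\theta_i$ (and $\beta_{ij}=0$ otherwise), together with $b=\kappa_m\theta_m\bm e_m$. Ordering the coordinates as $(Y,X_1,\dots,X_m)$, the matrix $A^*=A-r\Id$ is therefore upper bidiagonal apart from one extra entry $\kappa_m\theta_m$ in the bottom-left corner: its diagonal is $(-r,-\kappa_1-r,\dots,-\kappa_m-r)$, its superdiagonal is $(-\gamma_1,\kappa_1\theta_1,\dots,\kappa_{m-1}\theta_{m-1})$, and the $(m+1,1)$ entry equals $\kappa_m\theta_m$.

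Second, I would compute the determinant by a Laplace expansion along the first column, which has only the two nonzero entries $(A^*)_{11}=-r$ and $(A^*)_{m+1,1}=\kappa_m\theta_m$. Deleting the first row and column leaves an upper triangular block with diagonal $-\kappa_i-r$, contributing $(-r)\prod_{i=1}^m(-\kappa_i-r)$. Deleting row $m+1$ and column $1$ leaves a lower bidiagonal matrix whose determinant is the product of its diagonal, namely $-\gamma_1\prod_{i=1}^{m-1}\kappa_i\theta_i$. Tracking the cofactor sign $(-1)^{(m+1)+1}$ and simplifying, the two contributions combine to
\begin{equation*}
\det A^* = (-1)^{m+1}\left( r\prod_{i=1}^m(\kappa_i+r) + \gamma_1\prod_{i=1}^m\kappa_i\theta_i\right).
\end{equation*}
Equivalently, this drops out of the Leibniz formula directly, since the only permutations giving a nonzero product are the identity (the diagonal) and the single $(m+1)$-cycle $(1\,2\,\cdots\,m+1)$ that uses the corner entry and then threads back up the superdiagonal.

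Third, I would conclude. Since $r>0$ and the parameters $\kappa,\theta$ and $\gamma_1$ are nonnegative by hypothesis, each factor $\kappa_i+r$ is strictly positive and both products are nonnegative, so the first term $r\prod_{i=1}^m(\kappa_i+r)$ is strictly positive and the bracket cannot vanish. Hence $\det A^*\neq 0$ and $A^*$ is invertible. The only real work here is bookkeeping rather than analysis: correctly reading the entries of $A^*$ off the cascade parameterization and keeping the signs straight in the expansion (in particular isolating the single non-diagonal permutation produced by the corner entry $\kappa_m\theta_m$). It is also worth noting that the assumption $r>0$ is used in exactly one place, to ensure the strictly positive term survives; with $r=0$ the determinant reduces to $(-1)^{m+1}\gamma_1\prod_{i=1}^m\kappa_i\theta_i$, which may degenerate.
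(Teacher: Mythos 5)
Your proof is correct and follows essentially the same route as the paper: both expand $\det A^*$ along the first column, observe that the triangular cofactors make the two contributions carry the common sign $(-1)^{m+1}$, and use $r>0$ to guarantee the strictly positive term $r\prod_{i=1}^m(\kappa_i+r)$ cannot be cancelled by the nonnegative term $\gamma_1\prod_{i=1}^m\kappa_i\theta_i$. Your version is in fact slightly more explicit than the paper's, which only tracks signs without writing the closed-form determinant.
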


\begin{remark}
The drift of the normalized process $Z=X/Y$ admits the stationary points $\bar{\mu}_t$ given by the system of equations
\begin{equation}\label{eq:lhcc_mu}
\bar{\mu}_{it} = (-1)^{m-i+1}\prod_{j=i}^m \frac{\kappa_j\theta_j}{\bar{\mu}_{1t}\gamma_1 - \kappa_j}, \quad i=1,\dots,m
\end{equation}
as shown in Appendix~\ref{sec:LCRMproofs}. 
In fact, $\bar{\mu}_{1t}$ implies the values of $\bar{\mu}_{it}$ for $i=2,\dots,m$.
The stationary point of the drift of $\lambda$ is given by $\gamma_1\bar{\mu}_{1t}$.
\end{remark}

\paragraph{Filtering and calibration.}

We present an efficient methodology to filter the factors from the CDS spreads.
We recall that the CDS spread ${\rm CDS}(t,t_0,t_M)$ is the strike spread that renders the initial values of the CDS contract equal to zero. We therefore obtain the affine equation
\begin{equation}
\label{eq:ZtEq}
\psi_{\rm cds}(t,t_0,t_M,{\rm CDS}(t,t_0,t_M))^\top \begin{pmatrix} 1 \\ Z_t \end{pmatrix} = 0
\end{equation}
conditional on $\tau>t$, and with the normalized process $Z = X / Y \in [0,1]^m$.
Therefore, in theory we could extract the value $Z_t$ from the observation of at least $m$ spreads with different maturities.
The factor value $(S_t,X_t)$ at time $t$ can in turn be inferred, for example, by applying the Euler scheme to compute the survival process value, for example, and then rescaling the pseudo factor  $Z_t$,
\begin{equation}\label{eq:filter:S}
Y_{t_i} = Y_{t_{i-1}} - \gamma^\top  X_{t_{i-1}} \Delta t \quad \text{and} \quad X_{t_i} = Y_{t_i} Z_{t_i} 
\end{equation}
for all the observation dates $t_i$, and with $Y_{t_0}=1$.
In practice, there might not be a value $Z_t$ such that \eqref{eq:ZtEq} is satisfied for all observed market spreads. Therefore, we consider all the observable spreads and minimize the following weighted mean squared error
\begin{equation}\label{eq:filter:Z}
\begin{aligned}
& \underset{z}{\min} 
& &  \frac{1}{2} \; \sum_{k=1}^{n_{i}} \; \Bigg(\, \frac{ \psi_{\rm cds}(t_i,t_i,t_M^k,{\rm CDS}(t_i,t_i,t_M^k))^\top \begin{pmatrix} 1 \\ z \end{pmatrix}}{\psi_{\rm prem}(t_{i},t_i,t_M^k)^\top \begin{pmatrix} 1 \\ Z_{t_{i-1}} \end{pmatrix}}
\,\Bigg)^2\\
& \text{s.t.}
& & 0\le z_i  \leq 1, \; i = 1, \ldots, m
\end{aligned}
\end{equation}
where $t_M^1,\dots,t_M^{n_i}$ are the maturities of the $n_{i}$ observed spreads at date $t_i$, and $t_{i-1}$ is the previous observation date.
Dividing the CDS price error by an approximation of the CDS premium leg value gives an accurate approximation of the CDS spread error when $Z_{t_i}\approx Z_{t_{i-1}}$.
The above minimization problem is a linearly constrained quadratic optimization problem which can be solved virtually instantaneously numerically.

For any parameter set we can extract the observable factor process at each date by recursively solving~\eqref{eq:filter:Z} and applying~\eqref{eq:filter:S}.
With the parameters and the factor process values we can in turn compute the difference between the model and market CDS spreads.
Therefore, we numerically search the parameter set that minimizes the aggregated CDS spread root-mean-squared-error (RMSE) by using the gradient-free Nelder-Mead algorithm together with a penalty term to enforce the parameter constraints and starting from several randomized initial parameter sets.

Note that we do not calibrate the volatility parameters $\sigma_i$ for $i=1,\dots,m$ since CDS spreads do not depend on the martingale components with linear credit risk models and since the factor process is observable directly from the CDS spreads.
Furthermore, we only fit the risk-neutral drift parameters $\kappa$ and $\theta$ implied by the CDS spreads.
The total number of parameters for LHCC(2), LHCC(3), and LHCC(3)$^*$ model is therefore equal to 5, 7, and 6 respectively. 
Equipped with a fast filter and a low dimensional parameter space, the calibration procedure is swift.

\begin{remark}
Alternatively one could estimate the parameters by performing a quasi-maximum likelihood estimation or a more advanced generalized method of moments estimation.
This can be implemented in a straightforward manner with the LHC model if the market price of risk specification preserves the polynomial property of the factors as the real--world conditional moments of $(Y,X)$ are then given in closed form, see Appendix~\ref{sec:MPR}.
The availability of conditional moments also enables direct usage of the Unscented Kalman Filter to recover the factor values at each date.
However this approach comes at the cost of more parameters and possibly more stringent conditions on them, as well as unnecessary computational costs if we are only interested in market prices.
\end{remark}

\paragraph{Parameters, fitted spreads, and factors.}

The fitted parameters are reported in Table~\ref{tab:paramsLCRM}.
An important observation is that the parameter constraint in~\eqref{eq:LHCCconst} is binding for each dimension in all the fitted models.
The calibrated parameter values are similar across the different specifications which is comforting, and the calibrated default intensity upper bounds appear large enough to cover the high spread values observed during the subprime crisis.

The fitted factors extracted from the calibration are used as input to compute the fitted spreads.
With the fitted spreads we compute the fitting errors for each date and maturity.
Not surprisingly the more flexible specification ${\rm LHCC }(3)$ performs the best.
Estimating the default intensity upper bound $\gamma_1$ instead of setting an arbitrarily large value improves the calibration.
Table~\ref{tab:rmse} reports summary statistics of the errors by maturity.
The ${\rm LHCC }(3)$ model has the smallest RMSE for each maturity.
In particular, its overall RMSE is half the one of the two-factor model.
The ${\rm LHCC }(3)^*$ model faces difficulties in reproducing long-term spreads as, for example, its RMSE is twice as large as the one of the unconstrained ${\rm LHCC }(3)$ for the 10-year maturity spread for both firms.
Figure~\ref{fig:fitLCRM} displays the fitted spreads and the RMSE time series.
Again, the ${\rm LHCC }(3)$ appears to have the smallest level of errors over time.
The two other models do not perform as good during the low spreads period before the financial crisis, and during the recent volatile period.
Overall, the fitted models appear to reproduce relatively well the observed CDS spread values.

Figure~\ref{fig:factors} shows the estimated factors. They are remarkably similar across the different specifications.
The default intensity explodes and the survival process decreases rapidly during the financial crisis.
The $m$-th factor controls the long term default intensity level.
The second factors controls the medium term behavior of the term-structure of credit risk in the  ${\rm LHCC }(3)$ and  ${\rm LHCC }(3)^*$  models.
The  ${\rm LHCC }(2)$  model requires an almost equal to zero default intensity to capture the steep contango of the term structure at the end of the sample period, even lower than before the financial crisis.
This seems counterfactual and illustrates the limitations of the  ${\rm LHCC }(2)$  model in capturing changing dynamics.
The $m$-th factor visits the second half of its support $[0,Y_t]$ and appears to stabilize in this region for the three models.

\subsection{CDS option pricing} \label{sec:CDSopt}

We describe an accurate and efficient methodology to price CDS options that builds on the payoff approximation approach presented in Section~\ref{sc:lhcm:optapp}, and illustrate it with numerical examples.
The model used for the numerical illustration is the one-factor LHC model from Section~\ref{sec:lhc1n} with stylized but realistic parameters $\gamma=0.25$, $\ell_1=0.05$, $\ell_2=1$, $\sigma=0.75$, $X_0=0.2$, and $r=0$.

From Section~\ref{sec:CDopt}, we know that the time-$t$ CDS option price with strike spread $k$ is of the form
\[
V_{\rm CDSO}(t,t_0,t_M,k) = \Ind{\tau>t} \cExp{f(Z(t_0,t_M,k))}{\Fcal_t}
\]
with the payoff function $f(z) = \e^{-r(t_0-t)} z^+ / Y_t$ and where the random variable $Z(t_0,t_M,k)$ is defined by
\begin{equation} \label{eq:Zdef}
Z(t_0,t_M,k)=\psi_{\rm cds}(t_0,t_0,t_M,k)^\top \begin{pmatrix}
  Y_{t_0} \\ X_{t_0}
\end{pmatrix}
\end{equation}
with $\psi_{\rm cds}(t_0,t_0,t_M,k)$ as in~\eqref{eq:psicds}.
Furthermore, the random variable $Z(t_0,t_M,k)$ takes values in the interval $[b_{min},b_{max}]$ with the LHC model which is given by 
\begin{align*}
b_{min} &= \sum_{i=1}^{m+1} \min(0,\psi_{\rm cds}(t_0,t_0,t_M,k)_i), \text{ and}\\
b_{max} &= \sum_{i=1}^{m+1} \max(0,\psi_{\rm cds}(t_0,t_0,t_M,k)_i).
\end{align*}
We now show how to approximate the payoff function $f$ with a polynomial by truncating its Fourier-Legendre series, and then how the conditional moments of $Z(t_0,t_M,k)$ can be computed recursively from the conditional moments of $(Y_{t_0},X_{t_0})$.

Let $\Lcal e_n(x)$ denote the generalized Legendre polynomials taking values on the closed interval $[b_{min},b_{max}]$ and given by
\[
\mathcal{L}e_n(x) = \sqrt{\frac{1+2n}{2\sigma^2}} \, Le_{n}\Big(\frac{x-\mu}{\sigma}\Big)
\]
where $\mu=(b_{max}+b_{min})/2$, $\sigma=(b_{max}-b_{min})/2$, and the standard Legendre polynomials $Le_n(x)$ on $[-1,1]$ are defined recursively by 
\[
Le_{n+1}(x) = \frac{2n+1}{n+1} x \,Le_n(x) - \frac{n}{n+1} Le_{n-1}(x)
\]
with $Le_0=1$ and $Le_1(x)=x$.
The generalized Legendre polynomials form a complete orthonormal system on $[b_{min},b_{max}]$ in the sense that the mean squared error of the Fourier-Legendre series approximation $f^{(n)}(x)$ of any piecewise continuous function $f(x)$, defined by
\begin{equation}\label{eq:payoffapp}
f^{(n)}(x) = \sum_{k=0}^n \, f_n \, \Lcal e_n(x), \quad \text{where} \; f_n = \int_{b_{min}}^{b_{max}} \, f(x) \, \Lcal e_n(x) \, dx,
\end{equation}
converges to zero, 
\[
\lim_{n\rightarrow \infty} \; \int_{b_{min}}^{b_{max}} \big(f(x) - f^{(n)}(x) \big)^2 dx = 0.
\]
The coefficients for the CDS option payoff are given in closed form,
\[
f_n = \Ind{\tau>t} \frac{\e^{-r(t_0-t)}}{Y_t} \int_0^{b_{max}} z \, \Lcal e_n(z) \, dz,
\]
since the integrands are polynomial functions.
Note that a similar approach is followed in~\cite{ackerer2016jacobi} on the unbounded interval $\R$ with a Gaussian weight function.

The $\Fcal_t$-conditional moments of $Z(t_0,t_M,k)$ can be computed recursively from the conditional moments of $(Y_{t_0},X_{t_0})$.
Let $\pi:\Ecal\mapsto\{1,\dots,N_{n}\}$ be an enumeration of the set of exponents with total order less or equal to $n$, that is
$$
\Ecal = \big\{ \bm{\alpha} \in\N^{1+m} \,:\, \sum_{i=1}^{1+m} {\alpha}_i \le n \big\}.
$$
Define the polynomials 
$$
h_{\pi(\bm{\alpha})}(s,x)=s^{\alpha_1} \, \prod_{i=1}^m x_i^{\mathbf{\alpha}_{1+i}},
$$
which form a basis of ${\rm Pol}_n(E)$.
Denote by $\bm{1}$ the $(1+m)$-dimensional vector of ones and by $\e_i$ the $(1+m)$-dimensional vector whose $i$-th coordinate is equal to one and zero otherwise.
\begin{lemma}\label{lem:coefsrec}
For all $n\ge 2$ we have
\[
\cExp{Z(t_0,t_M,k)^n}{\Fcal_t} = \sum_{\bm{\alpha}^\top \bm{1}=n} c_{\pi(\bm{\alpha})} \; \cExp{h_{\pi(\bm \alpha)}(Y_{t_0},X_{t_0})}{\Fcal_t}
\]
where the coefficients $c_{\pi(\bm{\alpha})}$ are recursively given by
\[
c_{\pi(\bm{\alpha})} = \sum_{i=1}^{1+m} \Ind{\alpha_i-1\ge0} \, c_{\pi(\bm{\alpha}-\e_i)} \,\psi_{\rm cds}(t_0,t_0,t_M,k)_i.
\]
\end{lemma}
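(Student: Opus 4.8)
The plan is to reduce the statement to a single pathwise algebraic identity and then invoke linearity of conditional expectation. The starting observation is that $\psi:=\psi_{\rm cds}(t_0,t_0,t_M,k)\in\R^{1+m}$ is a \emph{deterministic} vector, so that, writing $(Y_{t_0},X_{t_0})=(\xi_1,\dots,\xi_{1+m})$ and $\xi^{\bm\alpha}:=\prod_{i=1}^{1+m}\xi_i^{\alpha_i}=h_{\pi(\bm\alpha)}(Y_{t_0},X_{t_0})$, the random variable $Z(t_0,t_M,k)=\sum_{i=1}^{1+m}\psi_i\,\xi_i$ is a degree-one polynomial in the factors with deterministic coefficients $\psi_i=\psi_{\rm cds}(t_0,t_0,t_M,k)_i$. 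I would then prove the expansion
\begin{equation*}
Z(t_0,t_M,k)^n=\sum_{\bm\alpha^\top\bm 1=n}c_{\pi(\bm\alpha)}\,h_{\pi(\bm\alpha)}(Y_{t_0},X_{t_0}),
\end{equation*}
with the $c_{\pi(\bm\alpha)}$ obeying the stated recursion, and finally take the $\Fcal_t$-conditional expectation on both sides. Because the coefficients are deterministic, linearity immediately yields the asserted formula for $\cExp{Z(t_0,t_M,k)^n}{\Fcal_t}$.

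The expansion and the recursion I would establish simultaneously by induction on $n$. For the base case $n=1$ one has $Z=\sum_i\psi_i\xi_i$, so $c_{\pi(\e_i)}=\psi_i$; adopting the convention $c_{\pi(\bm 0)}=1$ this is consistent with the recursion, since $\sum_j\Ind{(\e_i)_j\ge1}\,c_{\pi(\e_i-\e_j)}\psi_j=c_{\pi(\bm 0)}\psi_i=\psi_i$. For the inductive step I would write $Z^n=Z\cdot Z^{n-1}$, substitute the degree-$(n-1)$ expansion, and use the monomial bookkeeping $\xi_i\,\xi^{\bm\beta}=\xi^{\bm\beta+\e_i}$ to obtain
\begin{equation*}
Z^n=\Bigl(\sum_{i=1}^{1+m}\psi_i\xi_i\Bigr)\sum_{\bm\beta^\top\bm1=n-1}c_{\pi(\bm\beta)}\,\xi^{\bm\beta}=\sum_{\bm\beta^\top\bm1=n-1}\sum_{i=1}^{1+m}\psi_i\,c_{\pi(\bm\beta)}\,\xi^{\bm\beta+\e_i}.
\end{equation*}
Re-indexing the double sum by $\bm\alpha=\bm\beta+\e_i$, so that $\bm\beta=\bm\alpha-\e_i$ is admissible precisely when $\alpha_i\ge1$, and collecting the coefficient of each monomial $\xi^{\bm\alpha}$ with $\bm\alpha^\top\bm1=n$ produces exactly $\sum_{i:\alpha_i\ge1}\psi_i\,c_{\pi(\bm\alpha-\e_i)}$, which is the stated recursion with the indicator $\Ind{\alpha_i-1\ge0}$ enforcing admissibility. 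This closes the induction.

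As a sanity check I would note that the recursion is solved in closed form by the multinomial coefficient $c_{\pi(\bm\alpha)}=\binom{n}{\bm\alpha}\prod_{i=1}^{1+m}\psi_i^{\alpha_i}$, since $\sum_{i:\alpha_i\ge1}\binom{n-1}{\bm\alpha-\e_i}=\binom{n}{\bm\alpha}$ is the multinomial analogue of Pascal's rule; this also identifies the whole expansion with the multinomial theorem applied to $\bigl(\sum_i\psi_i\xi_i\bigr)^n$. I do not expect a genuine obstacle: the content is a clean algebraic identity combined with linearity of conditional expectation. The only points needing care are the re-indexing $\bm\alpha=\bm\beta+\e_i$ and the treatment of boundary multi-indices having a vanishing component, which is exactly what $\Ind{\alpha_i-1\ge0}$ records. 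Finally, since each $h_{\pi(\bm\alpha)}\in{\rm Pol}_n(E)$, every term $\cExp{h_{\pi(\bm\alpha)}(Y_{t_0},X_{t_0})}{\Fcal_t}$ on the right-hand side is explicitly computable through the moment formula~\eqref{PPP}, which is the use the lemma is ultimately intended for.
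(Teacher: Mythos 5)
Your proposal is correct and follows essentially the same route as the paper: both expand $Z^n = Z\cdot Z^{n-1}$ inductively, absorb the extra factor into the monomials via $h_{\pi(\bm\alpha)}\mapsto h_{\pi(\bm\alpha+\e_i)}$, and collect coefficients to read off the recursion before applying linearity of conditional expectation. Your write-up is if anything more careful than the paper's (explicit base case, the re-indexing $\bm\alpha=\bm\beta+\e_i$, and the closed-form multinomial check), but the underlying argument is identical.
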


We now report the main numerical findings.
We take $t_0=1$, $t_M=t_0+5$, and three reference strike spreads $k\in\{250,300,350\}$ basis points that represent in, at, and out of the money CDS options.
The first row in Figure~\ref{fig:CDSO_payoff} shows the payoff approximation $f^{(n)}(z)$ in~\eqref{eq:payoffapp} for the polynomial orders $n\in\{1,5,30\}$ and the strike spreads $k\in\{250,300,350\}$.
A more accurate approximation of the hockey stick payoff function is naturally obtained by increasing the order $n$, especially around the kink.
The width of the support $[b_{min},b_{max}]$ increases with the strike spread $k$, hence the uniform error bound should be expected to be larger for out of the money options.
This is confirmed by the second row of Figure~\ref{fig:CDSO_payoff} that shows the error bound~\eqref{eqEB} as a function of the approximation order $n$ for the Fourier-Legendre approach described above.
It also displays the error bound when the CDS option payoff function is interpolated by means of Chebyshev  polynomials, see Appendix~\ref{sec:chebint} for more details.
The error bound is approximated by taking the maximum distance between the payoff function and the polynomial approximation on a regular grid of $10^4$ points over $[b_{min},b_{max}]$.
We remark that the error bound of the Chebyshev approach is oscillating around the error bound of the Fourier-Legendre approach.
This seems to be caused by variation of the polynomial approximation accuracy around the payoff kink as the Chebyshev nodes change.
Note that the error bound is typically non tight in practice, as illustrated in the following pricing application in which the pricing error is far lower than the error bound at least for $n\le20$.

Figure~\ref{fig:CDSO_pricecpu} shows the price approximation as a function of the polynomial order, up to $n=30$.
The price approximations stabilize rapidly with the Fourier-Legendre approach so that a price approximation using the first $n=10$ moments appear to be accurate up to a basis point.
On the other hand, the price approximations exhibit large oscillations with the Chebyshev approach.
Figure~\ref{fig:CDSO_pricecpu} also shows that it takes a fraction of a second on a standard desktop to compute the price approximation. Note that almost all of the CPU time is spent on the computation of the moments of $Z(t_0,t_M,k)$.

We recall that the volatility parameter $\sigma$ of the LHC model does not affect the CDS spreads, and can therefore be used to improve the joint calibration of CDS and CDS options. We illustrate this in the left panel of Figure~\ref{fig:CDSoptSens} where the CDS option price is displayed as a function of the volatility parameter for different strike spreads. As expected, the option price is an increasing function of the volatility parameter. The right panel of Figure~\ref{fig:CDSoptSens} also shows that $X_0$ has an almost linear impact on the CDS option price.

Note that the dimension of the polynomial basis $\binom{1+m+n}{n}$ becomes a programming and computational challenge when both the expansion order $n$ and the number of factors $1+m$ are large.
For example, for $n=20$ and $1+m=2$ the basis has dimension 231 whereas it has dimension 10'626 when $1+m=4$.
In practice, we successfully implemented examples with $1+m=4$ and $n=50$ on a standard desktop computer, in which case the basis dimension is 316'251.

\subsection{CDIS option pricing} \label{sec:CDISopt}

We discuss the approximation of the payoff function by means of Chebyshev polynomials for a CDIS option on a homogeneous portfolio.
Let ${N_t=\sum_{i=0}^N \Ind{\tau_i \le t}}$ denotes the number of firms which have defaulted by time $t$.
Consider a CDIS option on a homogeneous portfolio so that $S_t^i=a^\top Y_t$ for all $i=1,\dots,N$.
From Proposition~\ref{prop:CDISOprice} it follows that the time-$t$ price of the CDIS option is given by
$$
V_{\rm CDISO}(t,t_0,t_M,k) = \frac{\e^{-r(t_0-t)}}{N} \sum_{j=0}^{N-N_t}  \cExp{ (V_*(j,t_0,t_m))^+ \, q(j,t,t_0)}{\Fcal_t}
$$
with the conditional payoffs 
$$
V_*(j,t_0,t_m) = \frac{j}{a^\top Y_{t_0}} \, \psi_{\rm cds}(t_0,t_0,t_M,k)^\top \begin{pmatrix}    Y_{t_0} \\ X_{t_0} \end{pmatrix} + (1-\delta)(N-j)
$$
and the conditional probabilities
\begin{equation}\label{eq:N_surv_homo}
q(j,t,t_0) = \binom{N-N_t}{j} \, \frac{(a^\top Y_{t_0})^j(a^\top Y_t-a^\top Y_{t_0})^{N-N_t-j}}{(a^\top Y_t)^{N-N_t}}
\end{equation}
with the notable difference that now the summation contains at most $N+1$ terms because the defaults are symmetric and thus interchangeable.
Define the random variables
\[
Y(t_0) = a^\top Y_{t_0} \quad \text{and} \quad
X(t_0,t_M,k)= \psi_{\rm cds}(t_0,t_0,t_M,k)^\top \begin{pmatrix}    Y_{t_0} \\ X_{t_0} \end{pmatrix}.
\]
The CDIS option price then rewrites
\[
V_{\rm CDISO}(t,t_0,t_M,k) = \cExp{f(Y(t_0),X(t_0,t_M,k))}{\Fcal_t \vee N_t}
\]
where the bivariate payoff function $f(y,x)$ is given by
\begin{align*}
f(y,x) & = \frac{\e^{-r(t_0-t)}}{N \, (a^\top Y_t)^{N-N_t}} \bigg( (1-\delta) N (a^\top Y_t-y)^{N-N_t} \\
& \quad + \sum_{j=1}^{N-N_t} \binom{N-N_t}{j}  \left(j \, x +y(1-\delta)(N-j)\right)^+ y^{j-1}(a^\top Y_t-y)^{N-N_t-j} \bigg).
\end{align*}
The $\Fcal_t$-conditional moments of $(Y(t_0),\,X(t_0,t_M,k))$ can be computed recursively in a similar way as in Lemma~\ref{lem:coefsrec}.
The payoff function $f(y,x)$ can be approximated using Chebyshev polynomials and nodes, see~Appendix~\ref{sec:chebint}, or using its two-dimensional Fourier-Legendre series representation.

\subsection{CDIS tranche pricing}\label{sec:tranche_ex}

As in Section~\ref{sec:CDISopt}, we consider a homogeneous portfolio so that $S^i=a^\top Y$ for all $i=1,\dots,N$.
In this case, a simpler expression for~\eqref{eq:distNt} can be derived,
\begin{equation}\label{eq:NtHomo}
\begin{aligned}
\Q[ N_u=j \mid \Fcal_\infty\vee\Gcal_t]  &=  \Q[ N-N_u=N -j \mid \Fcal_\infty\vee\Gcal_t] \\
& = q(N-j,t,u)
\end{aligned}
\end{equation}
for $ u>t$ and $j=N_t,\dots,N$, and where $q(N-j,t,u)$ is defined as in~\eqref{eq:N_surv_homo}. 
We fix the attachment point to ${K_a = n_a(1-\delta)/N}$ and  the detachment point to ${K_d =  n_d(1-\delta)/N}$, for some integers $0\le n_a < n_d\le  N$.
Assuming for simplicity that $N_t \le n_a$, then from~\eqref{eq:TrDist} and~\eqref{eq:NtHomo} we obtain that
$$
\cExp{T_u}{\Fcal_\infty\vee\Gcal_t} = \sum_{{j=n_a+1}}^{N} \frac{(1-\delta) \min(j -n_a, \, n_d-n_a) }{N} \; q(N-j, t, u)
$$
and, by differentiating with respect to $u$,
\begin{align*}
\frac{d\cExp{T_u}{\Fcal_\infty\vee \Gcal_t}}{du} &= \sum_{j=n_a+1}^{N} \frac{(1-\delta) \min(j-n_a,\,n_d-n_a) }{N} \\
& \quad \times \binom{N-N_t}{N-j}  \frac{ (a^\top Y_{u})^{N-j-1}(a^\top Y_{t}-a^\top Y_{u})^{j-N_t-1}}{(a^\top Y_{t})^{N-N_t}}  \\ 
& \quad \times \big((N-j)a^\top Y_{t} - (N-N_t)a^\top Y_{u}\big) \, a^\top(c \, Y_{u} + \gamma \, X_u)
\end{align*}
for any $u>t$.
The protection and premium legs in~\eqref{eq:tranchelegs} can thus, in principle, be computed in closed form using the moments formula~\eqref{PPP}.

\section{Extensions} \label{sec:extensions}

We present several model extensions offering additional features.
We first construct multi-name models, then include stochastic interest rates possibly correlated with credit spreads, and conclude by discussing jumps and stochastic clocks to generate simultaneous defaults.

\subsection{Multi-name models} \label{sec:multiname}

We build upon the LHC model to construct multi-name models with correlated default intensities and which can easily accommodate the inclusion of new factors and firms.
This approach can be applied to other linear credit risk models, as long as they belong to the class of polynomial models.
We consider $n$ independent LHC processes
\begin{equation}\label{eq:YXiid}
(Y^{1},X^{1}), \dots, \, (Y^{n},X^{n})
\end{equation}
where each $(Y^{j},X^{j})$ is defined as in~\eqref{SXdyn}--\eqref{Xdisp}.
We defined the stacked processes ${Y=(Y^{1},\dots,Y^{n})}$ with $Y_0=\bm 1$ and ${X=(X^1,\dots,X^{n})}$ with $X_0\in[0,1]^m$ where $m=\sum_{j=1}^n m_j$.
We denote $E$ the state space of $(Y,X)$.

Let $h=(h^1,\dots,h^n)$ be the $\R^n_+$-valued process whose $j$-th component is given by
\begin{equation}\label{eq:ht}
h_t^j = \frac{{\gamma^j}^\top X_t^j}{Y_t^j}, \quad t\ge0
\end{equation}
where the vector $\gamma^j\in\R^{m_j}$ is the drift parameter of $Y^j$, see~\eqref{SXdyn}. 

\paragraph{Linear Construction.} 
The survival process of the firm $i=1,\dots,N$ can be defined as in~\eqref{eq:S}, $S^{i} = a_i^\top \, Y$,
for some vector $a_i\in\R^n_+$ satisfying $a^\top {\bm 1} =1$.
The corresponding default intensity $\lambda^{i}$ of firm $i$ is for all $t\ge0$ given by a weighted sum of $h$, that is $\lambda_t^{i} =  {w^i_t}^\top h_t$ with stochastic weights $w^i_{jt} = a_{ij}Y^j_t / S^i_t>0$ satisfying $\sum_{j=1}^d w_{jt}^i=1$.

\paragraph{Polynomial Construction.} 
Fix a degree $d$ and define the survival process of each firm $i=1,\dots,N$ by
$S_t^i = p_i(Y_t)$ for all $t\ge0$,
for some polynomial $p_i(y)\in{\rm Pol}_d([0,1]^n)$ which is componentwise non-increasing and positive on $[0,1]^n$, and such that $p_i({\bm 1})=1$.
Let $H_d(y,x)$ be a polynomial basis of ${\rm Pol}_d(E)$ stacked in a row vector and of the form 
$H_d(y,x)=  (H_d(y), \, H^*_d(y,x))$
where $H_d(y)$ is itself a polynomial basis of ${\rm Pol}_d([0,1]^n)$.
The survival process of firm $i$ then rewrites $S^i = a_i^\top \Ycal$ with the finite variation process $\Ycal=H_d(Y)$, the factor process $\Xcal=H_d^*(Y,X)$ and where the vector $a_i$ is given by the equation $p_i(y)=H_d(y) \, a_i$.
It follows from the polynomial property that the process $(\Ycal,\Xcal)$ has a linear drift as in~\eqref{eq:dY}--\eqref{eq:dX}, see~\cite[Theorem~4.3]{filipovic2017polynomial}.
The specific values for the drift of $(\Ycal,\Xcal)$ depend on the choice of the polynomial basis $H_d(y,x)$.

\begin{example} \label{exa:Spol}
Take $p(y)=y^{\alpha}=\prod_{i=1}^n y_i^{\alpha_{i}}$ for some $\alpha\in\N^n$, then the implied default intensity is a weighted sum $\lambda_t = \alpha^\top h_t$ with $h_t$ as defined in~\eqref{eq:ht}.
The weights are constant as opposed to the stochastic weights in the linear construction.
\end{example}

\begin{remark}
The dimension of $H_d(y,x)$ is $\binom{d+n+m}{d}$ and may be large depending on the values of $m+n$ and $d$. However, given that the pairs $(Y_t^i,X_t^i)$ in~\eqref{eq:YXiid} are independent, the conditional expectation of a monomial in $(Y_{u},X_{u})$ rewrites
\[
\E \Big[ \prod_{i=1}^n(Y^i_{u})^{\alpha_{i}}(X^i_{u})^{\beta_i} \; \Big|\; \Fcal_t\Big] = \prod_{i=1}^n \; \E \Big[ (Y^i_{u})^{\alpha_{i}}(X^i_{u})^{\beta_i} \; \Big|\; \Fcal_t\Big], \quad u>t,
\]
for some $\alpha_i\in\N$ and $\beta_i\in\N^{m_j}$ for all $i=1,\dots,n$.
Hence, to compute bonds and CDSs prices we only need to consider $n$ independent polynomial bases of total dimension equal to $\sum_{i=1}^n \binom{d+1+m_i}{d}$.
\end{remark}

\subsection{Stochastic interest rates} 

We include stochastic interest rates possibly correlated with credit spreads.
We denote the discount process $D_t = \exp(-\int_0^t r_s ds)$, for $t\ge0$, where $r_s$ is the short rate value at time $s$.
We specify that $D=a_r^\top Y$ for some vector $a_r\in\R^n$.
This is similar to the specification of the survival process of a firm, but we do not require that $D$ is non-increasing. 
That is, we allow for negative interest rates. 
We follow Section~\ref{sec:multiname} and let $H_2(y,x)$ be a polynomial basis of ${\rm Pol}_2(E)$ which defines a new linear credit risk model $(\Ycal,\Xcal)=(H_2(Y),H_2^*(Y,X))$ whose linear drift is given by a matrix $\Acal$ as in~\eqref{eq:Amat}.

\begin{proposition} \label{prop:DtSt}
The pricing formulas~\eqref{eq:Zbond}, \eqref{eq:CFdef}, and \eqref{eq:CFdefs} also apply with $(\Ycal_t,\Xcal_t)$ in place of $(Y_t,X_t)$, with $r=0$, by using the vector
\begin{equation*}
\psi_{\rm Z}(t,t_M)^\top = \begin{pmatrix}
a_{\rm Z}^\top & 0
\end{pmatrix}  \e^{ \Acal(t_M-t)}
\end{equation*}
where the vector $a_{\rm Z}$ is given by $H_2(y)^\top a_{\rm Z} = (a_r^\top y)(a^\top y)$, and the vectors
\begin{align*}
\psi_{\rm D}(t,t_M)^\top &=  a_{\rm D} ^\top 
\int_t^{t_M}  \e^{\Acal(s-t)}ds ,\\
\psi_{\rm D_*}(t,{t_M})^\top &=  a_{\rm D} ^\top 
 \int_t^{t_M}  s \,\e^{\Acal(s-t)}ds,
\end{align*}
where the vector $a_{\rm D}$ is given by 
$
H_2(y,x)\, a_{\rm D} = 
(a_r^\top y ) ( -a^\top (cy  \; \gamma x))$.
\end{proposition}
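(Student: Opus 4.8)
The plan is to transfer the stochastic discounting into the factor process and then reduce each of the three identities to its constant-rate counterpart \eqref{eq:Zbond}, \eqref{eq:CFdef}, \eqref{eq:CFdefs}. Two structural facts drive the argument. First, $D_t=a_r^\top Y_t$ is $\Fcal_t$-measurable, so $1/D_t$ may be pulled out of every conditional expectation. Second, in the LHC-based construction of Section~\ref{sec:multiname} the process $Y_t$ is of finite variation (indeed absolutely continuous); hence both $D_t$ and $S_t=a^\top Y_t$ are absolutely continuous and $d\langle D,S\rangle_t=0$. This second point is what makes the reduction exact: products of the form $D_u\,(-dS_u)$ carry no It\^o covariation correction, and the only surviving contribution under $\cExp{\cdot}{\Fcal_t}$ is the drift of $-S_u$.

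First I would treat the zero-recovery bond. Writing $B_{\rm Z}(t,t_M)=D_t^{-1}\cExp{D_{t_M}\Ind{\tau>t_M}}{\Gcal_t}$, conditioning on $\Fcal_\infty\vee\Gcal_t$, and using the fundamental survival relation $\cExp{\Ind{\tau>t_M}}{\Fcal_\infty\vee\Gcal_t}=\Ind{\tau>t}\,S_{t_M}/S_t$ of Lemma~\ref{lem:fundpricing}, one obtains
\[
B_{\rm Z}(t,t_M)=\frac{\Ind{\tau>t}}{D_tS_t}\,\cExp{(a_r^\top Y_{t_M})(a^\top Y_{t_M})}{\Fcal_t}.
\]
The integrand is a quadratic in $Y_{t_M}$, hence an element of ${\rm Pol}_2(E)$, and by definition of $a_{\rm Z}$ it equals $a_{\rm Z}^\top\Ycal_{t_M}$. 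Applying the analogue of the linear moment formula \eqref{lineq} for the extended process $(\Ycal_t,\Xcal_t)$ with drift matrix $\Acal$ turns the numerator into $\psi_{\rm Z}(t,t_M)^\top(\Ycal_t,\Xcal_t)$ with the stated $\psi_{\rm Z}$, while the denominator is itself $D_tS_t=a_{\rm Z}^\top\Ycal_t$. Thus the firm's survival process $a^\top Y_t$ in \eqref{eq:Zbond} is replaced by the numéraire-adjusted quantity $a_{\rm Z}^\top\Ycal_t$, and the identity is exactly \eqref{eq:Zbond} read in the variables $(\Ycal_t,\Xcal_t)$ with $r=0$.

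For the two default-time claims $C_{\rm D}$ and $C_{\rm D_*}$ I would follow the same route, replacing $\e^{-r(\tau-t)}$ by $D_\tau/D_t$ and conditioning on $\Fcal_\infty\vee\Gcal_t$, so that the default time enters through the increment $-dS_u/S_t$ on $(t,t_M]$. Because $Y$ is absolutely continuous, $-dS_u=-a^\top(c\,Y_u+\gamma\,X_u)\,du$ has no martingale part, whence
\[
C_{\rm D}(t,t_M)=\frac{\Ind{\tau>t}}{D_tS_t}\,\cExp{\int_t^{t_M}(a_r^\top Y_u)\bigl(-a^\top(c\,Y_u+\gamma\,X_u)\bigr)\,du}{\Fcal_t},
\]
and the $C_{\rm D_*}$ case differs only by the extra factor $u$ under the integral. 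Recognizing the integrand as the ${\rm Pol}_2(E)$ element $H_2(Y_u,X_u)\,a_{\rm D}$, interchanging expectation and integral, and invoking the moment formula for $(\Ycal_t,\Xcal_t)$ yields $\psi_{\rm D}$ and $\psi_{\rm D_*}$ as stated; note that the minus sign of \eqref{eq:psiD} is absorbed into $a_{\rm D}$ and the shift $A_*=A-r\,\Id$ disappears because $r=0$.

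The main obstacle is to make rigorous the claim that no covariation or martingale term survives, i.e.\ that the finite-variation component $\Ycal_t=H_2(Y_t)$ carries all of $D_tS_t$ and of the relevant drift with zero martingale part, so that $\cExp{a_{\rm Z}^\top\Ycal_{t_M}}{\Fcal_t}$ and $\cExp{H_2(Y_u,X_u)\,a_{\rm D}}{\Fcal_t}$ are governed purely by $\e^{\Acal(\cdot)}$. This is precisely where the absolute continuity of $Y$ and the polynomial property $\Gcal\,{\rm Pol}_2(E)\subset{\rm Pol}_2(E)$ established in Section~\ref{sec:multiname} are essential. The remaining steps—identifying the coordinate vectors $a_{\rm Z},a_{\rm D}$ and reassembling the $\psi$-vectors—are the same bookkeeping as in the constant-rate proofs and present no additional difficulty.
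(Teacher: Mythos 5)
Your proposal is correct and follows essentially the same route as the paper: apply Lemma~\ref{lem:fundpricing} (resp.\ Lemma~\ref{lem:fundpricing2}) to reduce each price to an $\Fcal_t$-conditional expectation of the quadratic polynomials $D_{t_M}S_{t_M}$ and $-D_u\,a^\top(c\,Y_u+\gamma\,X_u)$, identify these with $a_{\rm Z}$ and $a_{\rm D}$ in the basis $H_2$, and invoke the linear moment formula for $(\Ycal_t,\Xcal_t)$ with drift matrix $\Acal$. The only cosmetic difference is that the paper disposes of the martingale part of $S$ by noting that $\int f(s)D_s\,dM^S_s$ is a martingale, rather than by appealing to absolute continuity of $Y$ as you do.
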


In practice it can be sufficient to consider a basis strictly smaller than $H_2(y,x)$, as the following example suggests.

\begin{example} \label{exa:DtSt}
Consider two independent LHC processes $(Y^j,X^j)$ with $m_j=1$ for $j\in\{1,2\}$, and consider the following linear credit risk model with stochastic interest rate given by,
\begin{equation*}\label{eq:psiZrate}
D_t = Y^1_t \quad \text{and} \quad S_t = \nu \, Y_t^1 + (1-\nu) \, Y_t^2, \quad \text{for all $t \ge0,$}
\end{equation*}
for some parameter $\nu\in(0,1)$.
The calculation of bond and CDS prices only requires the subbases
\[
H_0(y,x)  = \begin{pmatrix}
y_1^2 & y_1\,y_2
\end{pmatrix}, \quad
H_1(y,x)  = \begin{pmatrix}
y_1 x_1 & y_1 x_2 & x_1 y_2 & x_1^2 & x_1 x_2
\end{pmatrix},
\]
whose total dimension is $\dim((H_0(y,x), H_1(y,x))) = 7 < \dim({\rm Pol}_2(E))=15$. 
The drift term of the process $(H_0(Y,X),\,H_1(Y,X))$ is
\[
\Acal = \begin{pmatrix}
0 & 0 & -2 \gamma_1 & 0 & 0 & 0 & 0 \\
0 & 0 & 0 & -\gamma_2 & -\gamma_1 & 0 & 0 \\
b_1 & 0 & \beta_1 & 0 & 0 & -\gamma_1 & 0\\
0 & b_2 & 0 & \beta_2 & 0 & 0 & -\gamma_1 \\
0 & b_1 & 0 & 0 & \beta_1 & 0 & 0 \\
\sigma_1^2 & 0 & 2b_1 - \sigma_1^2 & 0 & 0 & 2 \beta_1 & 0 \\
0 & 0 & 0 & b_1 & b_2 & 0 & \beta_1 + \beta_2\\
\end{pmatrix}
\]
where the subscripts indicate the LHC model identity.
The pricing vectors in this basis are
\[
a_{\rm Z} = \begin{pmatrix}
\nu & 1-\nu 
\end{pmatrix} 
\quad \text{and} \quad
a_{\rm D} = \begin{pmatrix}
0 & 0 & -\nu \, \gamma_1 & -(1-\nu) \, \gamma_2 & 0 & 0 & 0 
\end{pmatrix}.
\]

\end{example}

\subsection{Jumps and simultaneous defaults}

There are two ways to include jumps in the survival process dynamics that may result in the simultaneous default of several firms.
The first is to let the martingale part of $Y$ be driven by a jump process so that multiple survival processes may jump at the same time.
The second is to let time run with a stochastic clock leaping forward hence producing synchronous jumps in the factors and the survival processes.

The survival process remains defined as in~\eqref{eq:S} but the factors are extensions of the LHC process in what follows.
For simplicity, we discuss a unique pair $(Y, X)$ as in~\eqref{SXdyn} whose parameters $\gamma,\beta,B$ satisfy~\eqref{eq:condatzero}--\eqref{eq:condatLS}.
Let $Z$ be a nondecreasing L\'evy process with L\'evy measure $\nu^Z(d\zeta)$ and drift $b^Z\ge 0$ that is independent from the Brownian motion $W$ and the uniform random variables $U^1,\dots,U^N$. 

\paragraph{Jump-Diffusion Model.}
Assume that $\Delta Z_t \le 1$ for all $t\ge0$. We define the dynamics of the LHC model with jumps as follows
\begin{align*}
d\begin{pmatrix}
Y_t \\ X_t
\end{pmatrix} & = \begin{pmatrix}
-c  & -\gamma^\top - \delta^\top\E[Z_1] \\ b & \beta-\diag(\nu)\,\E[Z_1]
\end{pmatrix} 
\begin{pmatrix}
Y_{t-} \\ X_{t-}
\end{pmatrix} dt 
+ \begin{pmatrix}
0 \\ \Sigma(Y_{t-},X_{t-})
\end{pmatrix}dW_t \\
& \quad -  \begin{pmatrix}
c\,Y_{t-} + \delta^\top X_{t-} \\ \diag(\nu) X_{t-}
\end{pmatrix} dN_t
\end{align*}
with the martingale $N$ given by $N_t = Z_t - \E[Z_1] t$ for $t\ge0$, for some $c>0$, $\delta\in\R^m_+$, and $\nu\in\R^m_+$ such that
\begin{align}
c + \delta^\top{\bm 1}<1, \quad c+ \delta^\top{\bm 1} \le \nu_i \le 1, \quad \quad i=1,\dots,m \label{eq:condjump}\\
\text{and $\nu_i < 1$ if \eqref{eq:condatzeroS} applies,}\quad i=1,\dots,m \label{eq:condjumpat0}
\end{align}
Conditions~\eqref{eq:condjump}--\eqref{eq:condjumpat0} ensure that the process always jumps inside its state space.
Note that the same process $Z$ can affect the dynamics of multiple LHC processes $(Y^i,X^i)$.

\paragraph{Stochastic Clock.} 
We consider the time-changed process $(\bar Y_t, \bar X_t)_{t\ge 0}=(Y_{Z_t},X_{Z_t})_{t\ge 0}$ that will directly feed into~\eqref{eq:S} in place of $(Y_t,X_t)$ and whose factor dynamics is given by
\[
\begin{pmatrix}
d\bar Y_t \\ d\bar X_t
\end{pmatrix} = \bar{A} \begin{pmatrix}
\bar Y_t \\ \bar X_t
\end{pmatrix} dt + \begin{pmatrix}
dM^{\bar{Y}}_t \\ dM^{\bar{X}}_t
\end{pmatrix}
\]
where the $(m+n)\times(m+n)$-matrix $\bar{A}$ is now given by
\begin{equation} \label{eq:AmatTC}
\bar{A} = b^Z \, A + \int_0^\infty ( \e^{A\zeta} - \Id )\nu^Z (d\zeta)
\end{equation}
with the matrix $A$ as in Equation~\eqref{eq:Amat}, see \cite[Chapter 6]{sato1999levy} and \cite[Theorem~6.1]{filipovic2017polynomial}.
The time-changed LHC model remains a linear credit risk model.
The background filtration $\F$ is now the natural filtration of the process $(Y_{Z},X_{Z})$.
Denote $\Psi(\cdot)$ the Laplace exponent of $Z$ defined by $\mathbb{E}[\exp(-u Z_t)]=\exp(-t\Psi(u))$.
The following Proposition shows that the matrix $\bar A$ may be computed in closed form\footnote{We thank an anonymous referee for suggesting this result.}.
\begin{proposition}\label{prop:lhc_zt}
Assume that $A=UDU^{-1}$ where $U$ is a unitary matrix and $D$ is a diagonal matrix with nonpositive entries, then $\bar{A} = -U \Psi(-D) U^{-1}$.
\end{proposition}
In some cases, the expression for $\bar A$ simplifies and does not require factoring the matrix $A$ as shown in the following example.
\begin{example}
Let $Z$ be a Gamma process such that $\nu^Z(d\zeta)=\gamma_Z \zeta^{-1} \e^{-\lambda_Z \zeta}d\zeta$ for some constants $\lambda_Z,\gamma_Z>0$ and $b^Z=0$. If the eigenvalues of the matrix $A$ have nonpositive real parts, the drift of the time changed process $(Y_{Z},X_{Z})$ is then equal to
\begin{equation}\label{eq:AmatGammaTC}
\bar{A} = -\gamma_Z \, \log \big(\Id - A \lambda_Z^{-1} \big)
\end{equation}
as shown in Appendix~\ref{sec:LCRMproofs}.
\end{example}

Survival processes built from independent LHC models can be time changed with the same stochastic clock $Z$ in order to generate simultaneous defaults and thus default correlation.
Note that the idea of using time change to generate simultaneous jumps in the cumulative hazard or the survival processes is not new, see for example~\cite{mendoza2016multivariate} for an earlier contribution where a multi-name unified credit-equity model with simultaneous defaults is developed.

\begin{remark}
One could use the Additive subordinators presented in~\cite{li2016additive} in order to increase the model's flexibility. 
These subordinators are time-dependent and may therefore help to better fit term structures at the cost of introducing additional parameters.
In this case, the drift of the factor process $(\bar Y, \bar X)$ remains linear but the matrix $\bar A$ in~\eqref{eq:AmatTC} may then be time-dependent and may not have a closed form representation which would in turn lead to higher computational costs.
\end{remark}

\section{Conclusion}\label{sec:ccl}

The class of linear credit risk models is rich and offers new modeling possibilities.
The survival process and its drift are linear in the factor process whose drift is also linear.
Consequently, the prices of defaultable bonds, credit default swaps (CDSs), and credit default index swaps (CDISs) become linear-rational expressions in the factors.
We introduce and study the single-name linear hypercube (LHC) model which consists of a diffusive factor process with a quadratic diffusion function and taking values in a compact state space.
These features are employed to develop an efficient European option pricing methodology.
By building upon the LHC model, we construct parsimonious and versatile multi-name models.
The setup can accommodate stochastic interest rates correlated with credit spreads by constructing the discount process similarly as a survival process.
Jumps in the factor dynamics as well as stochastic clocks can be used to generate simultaneous defaults.
An empirical analysis shows that the LHC model can reproduce complex CDS term structure dynamics.
We numerically verify that CDS option prices at different moneyness can be accurately approximated for the LHC model.
We also show that CDIS option prices and tranche prices on a homogeneous portfolio can be approximated with the same approach.
Future research directions include the development of efficient algorithms to price multi-name credit derivatives, and the joint empirical study of single-name and multi-name credit contracts.


\begin{appendix}
\appendix\normalsize

\section{Proofs} \label{sec:LCRMproofs}

This Appendix contains the proofs of all theorems and propositions in the main text.

\subsection*{Proof of \eqref{lineq}}
This follows as in \cite[Lemma 3]{filipovic2017linear}.

\subsection*{Proof of Example~\ref{ex:negcor}}

The autonomous process $X$ admits a solution taking values in $[-e^{-\epsilon t}, e^{-\epsilon t}]$ at time $t$ with $\epsilon>0$ and $X_0\in[-1,1]$ if and only if $\kappa>\epsilon$, see~\cite[Theorem~5.1]{filipovic2016polynomial}.
The two coordinates of $Y$ are lower bounded by $X$.
Indeed for $i=1,2$ we have
\[
dY_{it} = -\frac\epsilon2(Y_{it} \pm X_t) dt \ge -\frac\epsilon2(Y_{it} + \e^{-\epsilon t})dt
\]
The solution of $dZ_t =  -(\epsilon/2)(Z_t + \e^{-\epsilon t})dt$ with $Z_0=1$ is given by $Z_t=\e^{-\epsilon t}$, $t\ge0$, which proves that $Y_{it}\ge Z_t \ge |X_t|$ for $i=1,2$.
Finally, by applying Ito's lemma we obtain
\[
d\langle\lambda^1,\lambda^2\rangle_t = - \frac{\epsilon^2}{4} \frac{\sigma^2(\e^{-\epsilon t}-X_t)(\e^{-\epsilon t}+X_t)}{Y_{1t}Y_{2t}}, \quad t\ge0,
\]
which is negative with positive probability.
The dynamics of $\lambda^i$ is given by,
\begin{align*}
d\lambda^i_t & = ({\epsilon^2}/{4}) \big(\pm(1 - 2\kappa/\epsilon) ({X_t}/{Y_{it}}) + ({X_t}/{Y_{it}})^2\big) dt \pm dM_{it} \\
& = \big( ({\epsilon}/{2}) (1 - 2\kappa/\epsilon) (\lambda^i_t - \epsilon/2) + (\lambda^i_t - \epsilon/2)^2 \big) dt \pm dM_{it}
\end{align*}
where $dM_{it} = \epsilon\,\sigma/(2Y_{it})\sqrt{(\e^{-\epsilon t}-X_t)(\e^{-\epsilon t}+X_t)}dW_t$, and $\kappa>\epsilon$. 
The quadratic drift of $\lambda^i$ has two positive roots, $\kappa$ and $\epsilon/2$, is positive at zero, and is negative at $\epsilon$.
Since $\kappa>\epsilon$, this shows that $\lambda^i$ mean reverts towards $\epsilon/2$ for $i=1,2$.

\subsection*{Proof of Proposition~\ref{prop:bondMat}}

Proposition~\ref{prop:bondMat} is an immediate consequence of \eqref{lineq} and the following lemma.
\begin{lemma}\label{lem:fundpricing}
Let $Y$ be a nonnegative $\Fcal_\infty$-measurable random variable.
For any  time ${t\le t_M<\infty}$,
\begin{equation*}
\cExp{\Ind{\tau>t_M} Y}{\Gcal_t} = \Ind{\tau>t} \frac{1}{S_t}\cExp{S_{t_M} Y}{\Fcal_t}.
\end{equation*}
Note that $t_M<\infty$ is essential unless we assume that $S_\infty = 0$.
\end{lemma}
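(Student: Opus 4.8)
The plan is to verify directly that the proposed right-hand side is a version of $\cExp{\Ind{\tau>t_M}Y}{\Gcal_t}$, by checking the two defining properties: $\Gcal_t$-measurability, and correct integration against a generating $\pi$-system of $\Gcal_t$. Writing $\tau=\tau_i$ and $S=S^i$ for the firm in question, the candidate $\Ind{\tau>t}\,S_t^{-1}\cExp{S_{t_M}Y}{\Fcal_t}$ is the product of the $\Hcal^i_t$-measurable factor $H_t=\Ind{\tau>t}$ and an $\Fcal_t$-measurable factor, hence is $\Gcal_t$-measurable. The division causes no difficulty: since $\cPro{\tau>t}{\Fcal_\infty}=S_t$, integrating $\Ind{S_t=0}$ shows $\Q(\{\tau>t\}\cap\{S_t=0\})=\E[\Ind{S_t=0}\,S_t]=0$, so $H_t=0$ a.s. on $\{S_t=0\}$ and $H_t/S_t$ is read as $0$ there.

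For the integral identity, the key structural observation is that $\Hcal^i_t$ carries no information on $\{\tau>t\}$: every $A\in\Hcal^i_t$ satisfies $A\cap\{\tau>t\}\in\{\emptyset,\{\tau>t\}\}$, because each generator $\Ind{\tau>s}$, $s\le t$, equals $1$ there. Hence it suffices to test against the $\pi$-system of sets $F\cap A\cap D$ with $F\in\Fcal_t$, $A\in\Hcal^i_t$, and $D\in\bigvee_{j\ne i}\Hcal^j_t$; discarding the part lying in $\{\tau\le t\}\subseteq\{\tau\le t_M\}$, where both $\Ind{\tau>t_M}Y$ and the candidate vanish, this reduces to verifying equality when integrated over sets of the form $F\cap\{\tau>t\}\cap D$. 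I would evaluate both sides by conditioning on $\Fcal_\infty$. Since $\Ind{\tau>t_M}=\Ind{S_{t_M}>U_i}$ depends only on $\Fcal_\infty$ and $U_i$ while $\Ind{D}$ depends on $\Fcal_\infty$ and $(U_j)_{j\ne i}$, and the $U_j$ are mutually independent and independent of $\Fcal_\infty$, the doubly stochastic identity gives $\cPro{\tau>s}{\Fcal_\infty}=S_s$ and conditional independence across firms. This collapses the left-hand side to $\E[\Ind F\,Y\,S_{t_M}\,q_D]$ and the candidate to $\E[\Ind F\,\cExp{S_{t_M}Y}{\Fcal_t}\,q_D]$, where $q_D:=\cPro{D}{\Fcal_\infty}$.

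The crux, which I expect to be the main obstacle, is showing that $q_D$ is actually $\Fcal_t$-measurable, so that the $\Fcal_t$-conditioning on the candidate is legitimate; without this the two expectations would generally differ. Here I would use that, by right-continuity and monotonicity of each survival process, $\{\tau_j>s\}=\{S^j_s>U_j\}$ for $s\le t$, so every generator of $\Hcal^j_t$ lies in $\Fcal_t\vee\sigma(U_j)$; conditioning on $\Fcal_\infty$ then integrates out $U_j$ and expresses $\cPro{\tau_j\in B_j}{\Fcal_\infty}$ through the $\Fcal_t$-measurable values $\{S^j_s:s\le t\}$. Thus $q_D$ is $\Fcal_t$-measurable, and pulling the $\Fcal_t$-measurable factor $\Ind F\,q_D$ inside the conditional expectation and invoking the tower property makes the two expectations coincide. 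This proves the claim for each integrable $Y$, and the general nonnegative case follows by monotone convergence. Finiteness of $t_M$ enters only through $\{\tau>t_M\}=\{S_{t_M}>U_i\}$, which does not persist at $t_M=\infty$ unless $S_\infty=0$, consistent with the remark appended to the statement.
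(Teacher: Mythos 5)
Your argument is correct, and it takes a genuinely more self-contained route than the paper. The paper's proof invokes the standard ``key lemma'' of Bielecki and Rutkowski (their Corollary 5.1.1 / Lemma 5.1.2), namely $\cExp{\Ind{\tau>t}Z}{\Hcal_t\vee\Fcal_t}=\Ind{\tau>t}\frac{1}{S_t}\cExp{\Ind{\tau>t}Z}{\Fcal_t}$, applies it with $Z=\Ind{\tau>t_M}Y$, and then uses the tower property together with the doubly stochastic identity $\cExp{\Ind{\tau>t_M}}{\Fcal_\infty}=S_{t_M}$. You instead verify the defining properties of conditional expectation directly, testing against the $\pi$-system $F\cap A\cap D$ and conditioning on $\Fcal_\infty$ to exploit the independence of the $U_j$. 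What your approach buys is a rigorous treatment of a point the paper's sketch silently elides: the cited formula is stated for $\Fcal_t\vee\Hcal^i_t$, whereas the lemma is asserted for the full multi-name filtration $\Gcal_t$, and passing from one to the other requires exactly the observation you make --- that $q_D=\cPro{D}{\Fcal_\infty}$ is $\Fcal_t$-measurable for $D\in\bigvee_{j\neq i}\Hcal^j_t$, which rests on the mutual independence of the $U_j$ and their independence from $\Fcal_\infty$. You correctly identify this as the crux. The paper's route is shorter and leans on a textbook reference; yours is longer but closes the gap between the single-name and multi-name conditioning explicitly, and also handles the null set $\{S_t=0\}$ cleanly (which in this paper is moot since $S_t$ is assumed positive).
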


Lemma~\ref{lem:fundpricing} follows from \cite[Corollary 5.1.1]{Bielecki2002credit}. For the convenience of the reader we provide here a sketch of its proof. As in \cite[Lemma 5.1.2]{Bielecki2002credit} one can show that, for any nonnegative random variable $Z$, we have
\begin{equation*}
\cExp{\Ind{\tau>t}Z}{\Hcal_t \vee \Fcal_t} = \Ind{\tau>t}\frac{1}{S_t}\cExp{\Ind{\tau>t} Z}{\Fcal_t}.
\end{equation*}
Setting $Z=\Ind{\tau>t_M}Y$ we can now derive
\begin{align*}
\cExp{\Ind{\tau>t_M}Y}{\Gcal_t} &= \cExp{\Ind{\tau>t} Y \Ind{\tau>t_M} }{\Gcal_t}  =\Ind{\tau>t}\frac{1}{S_t}\cExp{\Ind{\tau>t_M}Y}{\Fcal_t} \\
&= \Ind{\tau>t}\frac{1}{S_t}\cExp{\cExp{\Ind{\tau>t_M}}{\Fcal_\infty}Y}{\Fcal_t} \\
&= \Ind{\tau>t}\frac{1}{S_t}\cExp{S_{t_M} Y}{\Fcal_t}.
\end{align*}

\subsection*{Proof of Proposition~\ref{prop:bondDef}}

The subsequent proofs build on the following lemma that follows from \cite[Proposition~5.1.1]{Bielecki2002credit}.
\begin{lemma}\label{lem:fundpricing2}
Let $Z$ be a bounded $\F$-predictable process. 
For any ${t\le {t_M}<\infty}$,
\begin{equation*}
\cExp{\Ind{t<\tau\le {t_M}}Z_{\tau}}{\Gcal_t} = \Ind{t<\tau} \frac{1}{S_t}\int_{(t,{t_M}]}\cExp{ -Z_u dS_u}{\Fcal_t}.
\end{equation*}
Note that ${t_M}<\infty$ is essential unless we assume that $S_\infty = 0$.
\end{lemma}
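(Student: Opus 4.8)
The plan is to mirror the two-stage strategy used for Lemma~\ref{lem:fundpricing}: first strip away the default filtration via the doubly stochastic structure, and then evaluate the remaining $\Fcal_t$-conditional expectation by conditioning on $\Fcal_\infty$ and identifying the conditional law of $\tau$. First I would note that $\Ind{t<\tau\le t_M}Z_\tau=\Ind{\tau>t}\,\Ind{t<\tau\le t_M}Z_\tau$, since $\{t<\tau\le t_M\}\subseteq\{\tau>t\}$. Applying the key identity established in the proof of Lemma~\ref{lem:fundpricing}, namely $\cExp{\Ind{\tau>t}W}{\Gcal_t}=\Ind{\tau>t}\frac{1}{S_t}\cExp{\Ind{\tau>t}W}{\Fcal_t}$ for nonnegative $W$, with the choice $W=\Ind{t<\tau\le t_M}Z_\tau$, yields
\[
\cExp{\Ind{t<\tau\le t_M}Z_\tau}{\Gcal_t}=\Ind{\tau>t}\frac{1}{S_t}\cExp{\Ind{t<\tau\le t_M}Z_\tau}{\Fcal_t}.
\]
It then remains to show that the inner $\Fcal_t$-expectation equals $\cExp{\int_{(t,t_M]}-Z_u\,dS_u}{\Fcal_t}$.

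For this I would use the tower property to condition first on $\Fcal_\infty$. The crucial input is that $U$ is independent of $\Fcal_\infty$ with $\cPro{\tau>u}{\Fcal_\infty}=S_u$, so that, conditionally on $\Fcal_\infty$, the random time $\tau$ has survival function $u\mapsto S_u$ and hence conditional law equal to the Lebesgue--Stieltjes measure $-dS_u$ on $(0,\infty)$. Since $Z$ is $\Fcal$-predictable, the path $u\mapsto Z_u$ is $\Fcal_\infty$-measurable, and a pathwise disintegration against this conditional law gives $\cExp{\Ind{t<\tau\le t_M}Z_\tau}{\Fcal_\infty}=\int_{(t,t_M]}Z_u\,(-dS_u)$. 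A further conditioning on $\Fcal_t$ and substitution back into the first display then delivers the claim.

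The main obstacle, and the reason predictability of $Z$ is assumed rather than mere measurability, is the correct treatment of the atoms of $-dS$: when $S$ jumps at some time $u$, the default occurs exactly at $u$ with conditional probability $-\Delta S_u$, and the Stieltjes integral assigns to this atom the predictable value $Z_u$, which is precisely the value that $Z_\tau$ picks up on $\{\tau=u\}$. I would make this matching rigorous by a functional monotone-class argument: for simple predictable integrands $Z_u=\xi\,\Ind{s_1<u\le s_2}$ with $\xi$ bounded and $\Fcal_{s_1}$-measurable, both sides compute explicitly to $\xi\,(S_{t\vee s_1}-S_{t_M\wedge s_2})$ and hence agree, including at jumps; one then extends to all bounded predictable $Z$ by the monotone class theorem. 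The restriction $t_M<\infty$ is exactly what ensures the residual mass $S_{t_M}$ on $\{\tau>t_M\}$ is excluded by the indicator $\Ind{t<\tau\le t_M}$; were $t_M=\infty$ with $S_\infty>0$, the event $\{\tau=\infty\}$ would carry the positive conditional mass $S_\infty$ not captured by $\int_{(t,\infty)}(-dS_u)=S_t-S_\infty$, and the identity would fail.
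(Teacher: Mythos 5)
Your proof is correct, but it follows a genuinely different route from the paper: the paper offers no argument for this lemma at all, simply asserting that it "follows from [Bielecki and Rutkowski (2002), Proposition~5.1.1]", i.e.\ it imports the identity wholesale from the general hazard-process machinery. You instead give a self-contained derivation that exploits the explicit doubly stochastic construction of Section~2.1: after stripping off the default filtration with the same key identity the paper uses for Lemma~\ref{lem:fundpricing}, you condition on $\Fcal_\infty$, use the independence of $U$ to identify the conditional law of $\tau$ as the Lebesgue--Stieltjes measure $-dS_u$ (plus an atom of mass $S_\infty$ at infinity, which is exactly where your reading of the $t_M<\infty$ caveat comes from), and close the gap with a monotone-class argument over simple predictable integrands. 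What the paper's citation buys is generality --- the referenced proposition holds in the abstract hazard-process framework without assuming the doubly stochastic construction --- whereas your argument buys transparency and self-containedness within the model actually used here. Two cosmetic points you may wish to tidy: the key identity is stated for nonnegative integrands, so one should split $Z=Z^+-Z^-$ (harmless since $Z$ is bounded); and in your simple-integrand computation the value $\xi\,(S_{t\vee s_1}-S_{t_M\wedge s_2})$ should be read as zero when the intervals $(t,t_M]$ and $(s_1,s_2]$ do not overlap. Neither affects the validity of the argument.
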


We can now proceed to the proof of Proposition~\ref{prop:bondDef}. The value of the contingent cash flow is given by the expression
\begin{align*}
C_{\rm D}(t,t_M) & = \cExp{\e^{-r(\tau-t)}\Ind{t\le\tau\le t_M}}{\Gcal_t}
\end{align*}
By applying Lemma~\ref{lem:fundpricing2} we get
\begin{align*}
C_{\rm D}(t,t_M) & = \frac{\Ind{\tau>t}}{S_t}\int_t^{t_M}\cExp{ -\e^{-r(s-t)}dS_s}{\Fcal_t} \\
&= \frac{\Ind{\tau>t}}{S_t}\int_t^{t_M} \e^{-r(s-t)}\cExp{-a^\top(cY_s +\gamma X_s)}{\Fcal_t}ds \\
&= \frac{\Ind{\tau>t}}{S_t}\int_t^{t_M} \e^{-r(s-t)} -a ^\top \begin{pmatrix} c & \gamma \end{pmatrix} \e^{A(s-t)} \begin{pmatrix} Y_t \\ X_t \end{pmatrix} ds
\end{align*}
where the second equality comes from the fact that $\int_0^t \e^{-ru}\,dM^S_u$ is a martingale. The third equality follows from~\eqref{lineq}. 

\subsection*{Proof of Corollary~\ref{cor:condBond}}
The value of this contingent bond is given by
\begin{align*}
C_{\rm D_*}(t,{t_M}) & = \cExp{\tau\,  \e^{-r(\tau-t)}\Ind{t<\tau\le {t_M}}}{\Gcal_t} = \frac{\Ind{\tau>t}}{S_t}\int_t^{t_M} \cExp{ -s \, \e^{-r(s-t)}dS_s}{\Fcal_t} 
\end{align*}
and the result follows as in the proof of Proposition~\ref{prop:bondDef}.

\subsection*{Proof of Lemma~\ref{lem:simplerFormulas}}

Observe that for any matrix $A$ and real $r$ we have $\e^{r}\e^{A}=\e^{\diag(r)+A}$, and that the matrix exponential integration can be computed closed form as follows
\begin{align*}
\int_0^u \e^{As} ds & =  \int_0^u (I + As + A^2 \frac{s^2}{2}+ \dots) ds  =  Iu + A\frac{u^2}{2} + A^2 \frac{u^3}{6}+ \dots \\
& =  A^{-1}(\e^{Au} -I ).
\end{align*}
By change of variable $u=s-t$ we obtain
\[ \int_t^{t_M} s \e^{A_*(s-t)}ds = \int_0^{{t_M}-t}u\e^{A_*u}du + t \int_0^{{t_M}-t}\e^{A_*u}du, \]
where the second term on the RHS is given in Lemma~\ref{prop:bondDef}.
The first term can be derived using integration by parts
\begin{align*}
\int_0^{{t_M}-t} u\e^{A_*u}du &= ({t_M}-t) A_*^{-1}\e^{A_*({t_M}-t)} -  A_*^{-1} A_*^{-1} (\e^{A_*({t_M}-t)} - I).
\end{align*}

\subsection*{Proof of Proposition~\ref{prop:cds}}

The calculation of the protection leg and the coupon part, $V^i_{\rm prot}(t,t_0,t_M)$  and $V^i_{\rm coup}(t,t_0,t_M)$ respectively, follows from Propositions~\ref{prop:bondMat}~and~\ref{prop:bondDef}.
The accrued interest $V^i_{\rm ai}(t,t_0,t_M)$ is given by the sum of contingent cash flows and of weighted zero-recovery coupon bonds, and thus its calculation follows from Propositions~\ref{prop:bondDef} and~\ref{cor:condBond}.
The series of contingent cash flow is in fact equal to a single contingent payment paying $\tau$ at default,
\begin{align*}
C_{\rm D_*}(t,t_M) &= \sum_{j=1}^M \; \cExp{ \tau \, \e^{-r(\tau-t)}  \Ind{t_{j-1}<\tau\leq t_j}}{\Gcal_t}  \\
&=  \cExp{ \tau \, \e^{-r(\tau-t)}  \Ind{t<\tau\le t_M}}{\Gcal_t}.
\end{align*} 
Using the identity $\Ind{t_{j-1} < \tau \le t_j} = \Ind{\tau > t_{j-1}} - \Ind{\tau > t_{j}}$ we obtain that
the second term of $V^i_{\rm ai}(t,t_0,t_M)$ is given by
\begin{align*}
&-\sum_{j=1}^M \cExp{\e^{-r(\tau-t)} t_{j-1}\Ind{t_{j-1}<\tau\leq t_j}}{\Gcal_t} 
=\sum_{j=1}^M t_{j-1} \left(C_{\rm D}(t,t_{j}) - C_{\rm D}(t,t_{j-1}) \right) \\
 &= t_{M-1} C_{\rm D}(t,t_M) - T_{0} C_{\rm D}(t,t_0)
  - \sum_{j=1}^{M-1} (t_j-t_{j-1}) C_{\rm D}(t,t_j).
\end{align*}

\subsection*{Proof of Proposition~\ref{prop:distNt}}

The conditional characteristic function of $N_u$ is given by
\begin{align*}
\phi(t,\xi) &= \E \Big[ \exp\big( \im \xi N_u\big)\;\Big|\; \Fcal_\infty \vee\Gcal_t \Big] = \E\Big[\exp\big( \im \xi \sum_{i=1}^N \Ind{\tau_i\le u}\big) \;\Big|\; \Fcal_\infty \vee\Gcal_t \Big] \\
&= \E \Big[ \prod_{i=1}^N \big(\Ind{\tau_i>u} +  \e^{\im\xi}(1-\Ind{\tau_i>u}) \big) \; \Big| \; \Fcal_\infty \vee\Gcal_t \Big] \\
&= \prod_{i=1}^N \; \Big(\frac{\Ind{\tau_i>t}}{S_t^i} (S^i_u+  \e^{\im\xi}(S_t^i-S_u^i )) + \Ind{\tau_i\le t} \e^{\im\xi} \Big) \\
& = \prod_{i=1}^N \;   \Big( \e^{\im\xi} + \Ind{\tau_i>t} (1- \e^{\im\xi}) \frac{S_u^i}{S_t^i} \Big)
\end{align*}
where the first equality in the third line follows from~\cite[Lemma~9.1.3]{Bielecki2002credit}, which gives the expression
\begin{equation}\label{eq:BRlem9.13}
\E\left[\Ind{\tau_1>t_0,\dots,\,\tau_N>t_0} \mid\Fcal_{t_0}\vee\Gcal_t\right] = \prod_{i=1}^N \Ind{\tau_i>t}\frac{S_{t_0}^i}{S_t^i}.
\end{equation}
The expression~\eqref{eq:distNt} then directly follows by applying the discrete Fourier transform, see~\cite[Section~3]{ackerer2016dependent} for more details.

\subsection*{Proof of Proposition~\ref{prop:CDISOprice}}

The payoff at time $t_0$ of the CDIS option can always be decomposed into $2^N$ terms by conditioning on all the possible default events
\begin{equation} \label{eq:qal}
q(\alpha) = \prod_{i=1}^N \Big( (\Ind{\tau_i>t_0})^{\alpha_i} +   (\Ind{\tau_i\le t_0})^{1-\alpha_i} \Big)
\end{equation}
for $\alpha\in\{0,1\}^N$, and with the convention $0^0=0$, so that the payoff function rewrites
\begin{align*}
&\Big( \sum_{i=1}^N \frac{\Ind{\tau_i>t_0}}{S^i_{t_0}} \, \psi^i_{\rm cds}(t_0,t_0,t_M,k)^\top \begin{pmatrix} Y_{t_0} \\ X_{t_0} \end{pmatrix}+ (1-\delta)\Ind{\tau_i\le t_0} \Big)^+ \\
&= \sum_{\alpha\in\{0,1\}^N}\Big( \sum_{i=1}^N \frac{\alpha_i}{S^i_{t_0}} \, \psi^i_{\rm cds}(t_0,t_0,t_M,k)^\top \begin{pmatrix} Y_{t_0} \\ X_{t_0} \end{pmatrix} + (1-\delta)(1-\alpha_i) \Big)^+ q(\alpha).
\end{align*}
We can apply \cite[Lemma~9.1.3]{Bielecki2002credit} to compute the probability~\eqref{eq:BRlem9.13} so that by writing~\eqref{eq:qal} as a linear combination of indicator functions we obtain
\begin{align*}
q(\alpha,t,t_0)&=\E\left[q(\alpha)\mid\Fcal_{t_0}\vee\Gcal_t\right] \\
&= \prod_{i=1}^N \bigg( \frac{(S^i_{t_0})^{\alpha_i}(S_t^i-S^i_{t_0})^{1-\alpha_i}}{S_t^i} \Ind{\tau_i>t} + (\Ind{\tau_i\le t})^{1-\alpha_i} \bigg)
\end{align*}
which completes the proof.

\subsection*{Proof of Theorem~\ref{thmexiuniLCRM}}
We define the bounded continuous map $(\Ycal,\Xcal):R^{1+m}\to R^{1+m}$ by
\[  \Ycal(y,x)=y^+\wedge 1,\quad
  \Xcal_i(y,x)=x_i^+\wedge y^+\wedge 1,\quad i=1,\dots,m,\]
such that $(\Ycal,\Xcal)(y,x)=(y,x)$ on $E$. In a similar vein, extend the dispersion matrix $\Sigma(y,x)$ to a bounded continuous mapping $\Sigma((\Ycal,\Xcal)(y,x))$ on $\R^{1+m}$. The stochastic differential equation~\eqref{SXdyn} then extends to $\R^{1+m}$ by
\begin{equation}\label{SXdynext}
\begin{aligned}
dY_t & = -\gamma^\top \Xcal(Y_t,X_t) \,dt \\
dX_t & = \left( b \Ycal(Y_t)+ \beta \Xcal(Y_t,X_t)\right) dt + \Sigma\left((\Ycal,\Xcal)(Y_t,X_t)\right)dW_t.
\end{aligned}
\end{equation}
Since drift and dispersion of \eqref{SXdynext} are bounded and continuous on $\R^{1+m}$, there exists a weak solution $(Y,X)$ of \eqref{SXdynext} for any initial law of $(Y_0,X_0)$ with support in $E$, see \cite[Theorem~V.4.22]{karatzas1991brownian}.

We now show that any weak solution $(Y,X)$ of \eqref{SXdynext} with $(Y_0,X_0)\in E$ stays in $E$,
\begin{equation}\label{claimEinv}
\text{$(Y_t,X_t)\in E$ for all $t\ge 0$.}
\end{equation}
To this end, for $i=1,\dots,m$, note that
\begin{equation}\label{Snull}
\text{$\Sigma_{ii}\left((\Ycal,\Xcal)(y,x)\right)= 0$ for all $(y,x)$ with $x_i\le 0$ or $x_i\ge y$.}
\end{equation}
Conditon \eqref{eq:condatzero} implies that
\begin{equation}\label{eq:condatzero1}
\text{$\left( b \Ycal(y)+ \beta \Xcal(y,x) \right)_i\ge 0$ for all $(y,x)$ with $x_i\le 0$.}
\end{equation}
For $\delta,\epsilon>0$ we define
\[ \tau_{\delta,\epsilon}=\inf\left\{ t\ge 0\mid \text{$X_{it}\le -\epsilon$ and $-\epsilon < X_{is}<0$ for all $s\in [t-\delta,t)$}\right\} .\]
Then on $\{\tau_{\delta,\epsilon}<\infty\}$ we have, in view of~\eqref{Snull} and \eqref{eq:condatzero1},
\[ 0> X_{i\tau_{\delta,\epsilon}} - X_{i\tau_{\delta,\epsilon}-\delta} = \int_{\tau_{\delta,\epsilon}-\delta}^{\tau_{\delta,\epsilon}}  \left(b \Ycal(Y_u)+ \beta \Xcal(Y_u,X_u) \right)_i du \ge 0,\]
which is absurd. Hence $\tau_{\delta,\epsilon}=\infty$ a.s.\ and therefore $X_{it}\ge 0$ for all $t\ge 0$. Similarly, conditon~\eqref{eq:condatLS} implies that
\begin{equation}\label{eq:condatLS1}
\text{$-\gamma^\top\Xcal(y,x)-\left(b \Ycal(y)+ \beta \Xcal(y,x)\right)_i\ge 0$ for all $(y,x)$ with $x_i\ge y$.}
\end{equation}
Using the same argument as above for $Y_t-X_{it}$ in lieu of $X_{it}$, and \eqref{eq:condatLS1} in lieu of \eqref{eq:condatzero1}, we see that $Y_t-X_{it}\ge 0$ for all $t\ge 0$. 
Note that $0\le \gamma^\top\Xcal(y,x)\le \gamma^\top\bm 1 y^+$ for all $(y,x)$, and thus $1\ge Y_t\ge \e^{-\gamma^\top\bm 1 t}>0$ for all $t\ge 0$. 
This proves \eqref{claimEinv} and thus the existence of an $E$-valued solution of \eqref{SXdyn}.

Uniqueness in law of the $E$-valued solution $(Y,X)$ of \eqref{SXdyn} follows from \cite[Theorem~4.2]{filipovic2016polynomial} and the fact that $E$ is relatively compact.

The boundary non-attainment conditions~\eqref{eq:condatzeroS}--\eqref{eq:condatLSS} follow from \cite[Theorem~5.7(i) and (ii)]{filipovic2016polynomial} for the polynomials $p(y,x)=x_i$ and $y-x_i$, for $i=1,\dots,m$.

\subsection*{Proof of Lemma~\ref{lhccA*inv}}

The matrix $A_*$ in the LHCC model is given by
\[ A_* = 
\begin{pmatrix}
-r & -\gamma_1 & 0 & 0 &  \\
0 & -(\kappa_{1} + r) & \kappa_{1}\theta_1 & 0 &\cdots \\
\vdots &  & & \ddots & \\ 
\theta_m & & & 0 & - (\kappa_{m}+r)
\end{pmatrix}
\]
and its determinant is therefore equal to
\begin{align*}
|A_*| &= -r \, \begin{vmatrix} -(\kappa_{1} + r) & \kappa_{1}\theta_1 & 0 &\cdots \\
\vdots & & \ddots & \\ 
0 & &  0& - (\kappa_{m}+r) \end{vmatrix}\\
& \quad + \, (-1)^m \,
\begin{vmatrix}
-\gamma_1 & 0 & 0 &  \\
-(\kappa_{1} + r) & \kappa_{1}\theta_1 & 0 &\cdots \\
\vdots  & & \ddots & \\ 
0 & & - (\kappa_{m}+r) & \kappa_{m} \theta_m
\end{vmatrix}.
\end{align*}
With $r>0$, the first element on the right hand side is nonzero with sign equal to $(-1)^{1+m}$ and the second element also has a sign equal to $(-1)^{1+m}$.
This is because the determinant of a triangular matrix is equal to the product of its diagonal elements.
As a result, the determinant of $A_*$ is nonzero which concludes the proof.

\subsection*{Proof of Equation~\eqref{eq:lhcc_mu}}

For $i=1,\dots,m$ we have that $d(1/Y_t) = \gamma_1 Z_{1t}/Y_t$ for all $t\ge0$. 
The dynamics of $Z$ is thus given by
$$
dZ_{it} = (\kappa_i \theta_i Z_{(i+1)t} - \kappa_i Z_{it} + \gamma_1 Z_{1t} Z_{it})dt + \sigma_i\sqrt{Z_{it}(1-Z_{it})}\,dW_{it},
$$
for $ i=1,\dots,m-1$, and by
$$
dZ_{mt} = (\kappa_m \theta_m - \kappa_m Z_{mt} + \gamma_1 Z_{1t} Z_{mt})dt + \sigma_m\sqrt{Z_{mt}(1-Z_{mt})}\,dW_{mt}
$$
Fixing $Z_{1t}=\bar{\mu}_{1t}$ and solving for the value of $Z_{mt}$ which cancels its drift we obtain
\[
\bar{\mu}_{mt} = \frac{-\kappa_m\theta_m}{\bar{\mu}_{1t} \gamma_1 - \kappa_m},
\] 
and solving recursively for $i=m-1,\dots,1$ gives~\eqref{eq:lhcc_mu}.

\subsection*{Proof of Lemma~\ref{lem:coefsrec}}

We $n$-th power of $Z(t_0,t_M,k)$ rewrites
\begin{align*}
Z(t_0,t_M,k)^n &= \Big(\psi_{\rm cds}(t_0,t_0,t_M,k)^\top \begin{pmatrix} Y_{t_0} \\ X_{t_0} \end{pmatrix} \Big)^n \\
& = \psi_{\rm cds}(t_0,t_0,t_M,k)^\top \begin{pmatrix} Y_{t_0} \\ X_{t_0} \end{pmatrix} \; \sum_{\mathbf{\alpha}^\top \bm 1=n-1} c_{\pi(\bm{\alpha})} \; h_{\pi(\bm \alpha)}(Y_{t_0},X_{t_0}) \\
&= \sum_{i=1}^{1+m} \sum_{\mathbf{\alpha}^\top \bm 1=n-1} c_{\pi(\bm{\alpha})} \psi_{\rm cds}(t_0,t_0,t_M,k)_i \; h_{\pi(\bm \alpha + \e_i)}(Y_{t_0},X_{t_0})
\end{align*}
which is a polynomial containing all and only polynomials of degree $n$, the lemma follows by rearranging the terms.

\subsection*{Proof of Proposition~\ref{prop:DtSt}}

The time-$t$ price of the zero-coupon zero-recovery bond is now given by
\begin{align*}
B_Z(t,{t_M}) &= \E \Big[ \frac{D_{t_M}}{D_t} \Ind{\tau>{t_M}} \; \Big| \; \Gcal_t \Big]
= \frac{\Ind{\tau>t}}{D_t S_t} \; \E \Big[ D_{t_M} S_{t_M} \; \Big| \;  \Fcal_t \Big]\\
&= \frac{\Ind{\tau>t}}{(a_r^\top Y_t)(a^\top Y_t)} \; \E \Big[ (a_r^\top Y_{t_M})(a^\top Y_{t_M}) \; \Big| \;  \Fcal_t \Big]  \\
&= \frac{\Ind{\tau>t}}{a_{Z}^\top \Ycal_t} \begin{pmatrix}
a_{\rm Z}^\top & 0
\end{pmatrix} \e^{\Acal({t_M}-t)}\begin{pmatrix}
\Ycal_t \\ \Xcal_t
\end{pmatrix}
\end{align*}
by applying Lemma~\ref{lem:fundpricing}.
Similarly for contingent cash flows by Lemma~\ref{lem:fundpricing2} we have
\begin{align*}
& \cExp{\e^{-r(\tau-t)}\Ind{t\le\tau\le {t_M}}}{\Gcal_t} =\frac{f(\tau) \Ind{\tau>t}}{S_tD_t}\int_t^{t_M} \E \Big[-f(s)D_sdS_s\; \Big| \; \Fcal_t \Big]\\
&= \frac{\Ind{\tau>t}}{(a_r^\top Y_t)(a^\top Y_t)}\int_t^{t_M} f(s) \, \cExp{ -(a_r^\top Y_s)(cY_s +\gamma X_s) }{\Fcal_t}ds\\
&= \frac{\Ind{\tau>t}}{a_{Z}^\top \Ycal_t} \int_t^{t_M} f(s)\, a_{\rm D}^\top \,\e^{\Acal(s-t)} ds\begin{pmatrix}
\Ycal_t \\ \Xcal_t
\end{pmatrix}
\end{align*}
with $f(s)$ being equal to $s$ or $1$, which completes the proof.

\subsection*{Proof of Proposition~\ref{prop:lhc_zt}}

The L\'evy-Kintchine theorem shows that 
\begin{equation}\label{eq:LK}
\Psi(u)=b^Z u + \int_0^\infty (1- \e^{-u\xi})\nu^Z d\xi.
\end{equation}
We conclude the proof by applying the Sylvester's formula $\e^{UDU^{-1}}=U\e^{D}U^{-1}$ and by using~\eqref{eq:LK} in~\eqref{eq:AmatTC} as follows
\begin{align*}
\bar A &= b^Z UDU^{-1} + \int_0^\infty (\e^{UDU^{-1}\xi} - \Id)\nu^Z d\xi \\
&= b^Z UDU^{-1} + \int_0^\infty (U\e^{D\xi}U^{-1} - UU^{-1})\nu^Z d\xi \\
&= -U\Big( b^Z (-D) + \int_0^\infty (\Id - \e^{-(-D)\xi})\nu^Z d\xi\Big)U^{-1} \\
&= -U \Psi(D) U^{-1}.
\end{align*}

\subsection*{Proof of Equation~\eqref{eq:AmatGammaTC}}

The matrix $\bar A$ in Equation~\eqref{eq:AmatTC} rewrites
\begin{align*}
\bar A &= \int_0^\infty (\e^{At}-\Id)\gamma_Z t^{-1} \e^{-\lambda_Z t} dt = \gamma_Z \; \sum_{k=1}^\infty \; \frac{A^k}{k!}  \; \int_0^\infty  t^{k-1} \e^{-\lambda_Z t} dt \\
&= \gamma_Z \; \sum_{k=1}^\infty \; \frac{A^k}{k!} \frac{\Gamma(k)}{\lambda_Z^k} = \gamma_Z \; \sum_{k=1}^\infty \;\frac{\left(A \lambda_Z^{-1}\right)^k}{k} = -\gamma_Z \, \log (\Id - A \lambda_Z^{-1} )
\end{align*}
where the second equality follows from the definition of the matrix exponential, the third from the definition of the Gamma function and its values for integer values, and the last one from the definition of the matrix logarithm.

\section{Market price of risk specifications} \label{sec:MPR}

We discuss market price of risk (MPR) specifications such that $X$ has a linear drift also under the real-world measure $\Pa\sim\Q$. This may further facilitate the empirical estimation of the LHC model.

Let $\Lambda(Y_t,X_t)$ denote the time-$t$ MPR such that the drift of $X_t$ under $\Pa$ becomes
\[ \mu^\Pa_t= b  Y_t +\beta  X_t +  \Sigma(Y_t,X_t) \Lambda(Y_t,X_t) .\]
It is linear in $(Y_t,X_t)$ of the form
\[ \mu^\Pa_t=  b^\Pa Y_t + \beta^\Pa X_t  ,\]
for some vector $b^\Pa\in\R^m$ and matrix $\beta^\Pa\in\R^{m\times m}$, if and only if
\begin{equation}\label{MPRlinear}
  \Lambda_i(y,x) = \frac{ ( (b^\Pa-b) s+ (\beta^\Pa-\beta) x )_i }{\sigma_i\sqrt{x_i(y-x_i)}},\quad i=1,\dots,m.
\end{equation}

In order that $\Lambda(Y_t,X_t)$ is well defined and induces an equivalent measure change, that is, the candidate Radon--Nikodym density process
\begin{equation}\label{RNd}
  \exp\bigg( \int_0^t \Lambda(Y_u,X_u)\,dW_u - \frac{1}{2}\int_0^t \left\|\Lambda(Y_u,X_u)\right\|^2du\bigg)
\end{equation}
is a uniformly integrable $\Q$-martingale, we need that $(Y,X)$ does not attain all parts of the boundary of $E$. 
This is clarified by the following theorem, which follows from \cite{Cheridito2005}.

\begin{theorem}
The MPR $\Lambda(Y_t,X_t)$ in \eqref{MPRlinear} is well defined and induces an equivalent measure $\Pa\sim\Q$ with Radon-Nikodym density process \eqref{RNd} if, for all $i=1,\dots,m$, $X_{i0} \in (0,Y_0)$ and \eqref{eq:condatzeroS}--\eqref{eq:condatLSS} hold for the $\Q$-drift parameters $\beta,b$ and for the $\Pa$-drift parameters $\beta^\Pa,b^\Pa$ in lieu of $\beta,b$.

If, for some $i=1,\dots,m$, $\beta^\Pa_{ij}=\beta_{ij}$ for all $j\neq i$ and
\begin{enumerate}
\item $b^\Pa_i=b_i$, such that
\[ \Lambda_i(y,x) = \frac{ (\beta^\Pa_{ii}-\beta_{ii}) \sqrt{x_i}  }{\sigma_i\sqrt{y-x_i}},\]
then it is enough if $X_{i0}\in [0,Y_0)$ instead of $X_{i0}\in (0,Y_0)$ and \eqref{eq:condatzero} instead of \eqref{eq:condatzeroS} holds for $\beta_{ij},b_i$, and thus for $\beta^\Pa_{ij},b^\Pa_i$.

\item $b^\Pa_i-b_i=\beta^\Pa_{ii}-\beta_{ii}$, such that
\[ \Lambda_i(y,x) = \frac{ (\beta^\Pa_{ii}-\beta_{ii}) \sqrt{y-x_i}  }{\sigma_i\sqrt{x_i}},\]
then it is enough if $X_{i0}\in (0,Y_0]$ instead of $X_{i0}\in (0,Y_0)$ and \eqref{eq:condatLS} instead of \eqref{eq:condatLSS} holds for $\beta_{ij},b_i$, and thus for $\beta^\Pa_{ij},b^\Pa_i$.

\end{enumerate}

\end{theorem}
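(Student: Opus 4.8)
The plan is to reduce the claim to the boundary non-attainment part of Theorem~\ref{thmexiuniLCRM} by way of the density-process criterion of \cite{Cheridito2005}. Writing $Z_t$ for the candidate density \eqref{RNd}, I first identify the dynamics under the tentative measure $\Pa$. Assuming provisionally that $Z_t$ is a martingale up to each finite horizon, Girsanov's theorem makes $W^\Pa_t=W_t-\int_0^t\Lambda(Y_u,X_u)\,du$ a $\Pa$-Brownian motion; substituting $dW_t=dW^\Pa_t+\Lambda(Y_t,X_t)\,dt$ into \eqref{SXdyn} turns the $X$-drift into $bY_t+\beta X_t+\Sigma(Y_t,X_t)\Lambda(Y_t,X_t)=b^\Pa Y_t+\beta^\Pa X_t$, precisely by the definition \eqref{MPRlinear}, while $\Sigma$ and the $Y$-drift coefficient $\gamma$ are unchanged. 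Hence under $\Pa$ the pair $(Y_t,X_t)$ again solves the LHC equation \eqref{SXdyn}, now with $(b^\Pa,\beta^\Pa)$ in place of $(b,\beta)$.

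The only obstruction to $Z_t$ being a genuine martingale is that $\Lambda$ in \eqref{MPRlinear} blows up exactly on the set where some $x_i\in\{0,y\}$; on every compact subset of the interior $\{(y,x)\in E:0<x_i<y,\ i=1,\dots,m\}$ it is bounded, and since $E$ is bounded there is no escape to infinity. I would therefore apply Theorem~\ref{thmexiuniLCRM} twice: with $X_{i0}\in(0,Y_0)$ and the $\Q$-parameters $\beta,b$ obeying \eqref{eq:condatzeroS}--\eqref{eq:condatLSS}, its boundary non-attainment conclusion gives $0<X_{it}<Y_t$ for all $t$, $\Q$-a.s.; with the $\Pa$-parameters $\beta^\Pa,b^\Pa$, which satisfy the same conditions by hypothesis, the same conclusion holds for the $\Pa$-equation. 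Feeding these two statements into the criterion of \cite{Cheridito2005} shows that $Z_t$ is a uniformly integrable $\Q$-martingale, so $\Pa\ll\Q$, and that its reciprocal is a $\Pa$-martingale, so $\Q\ll\Pa$; together this gives $\Pa\sim\Q$ with density \eqref{RNd}.

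For the two refinements I would track which factor of $\sqrt{x_i(y-x_i)}$ remains in $\Lambda_i$. Under $\beta^\Pa_{ij}=\beta_{ij}$ for $j\neq i$ the $i$-th numerator of \eqref{MPRlinear} collapses to a single term: in case (i), where $b^\Pa_i=b_i$, it equals $(\beta^\Pa_{ii}-\beta_{ii})x_i$, so $\Lambda_i=(\beta^\Pa_{ii}-\beta_{ii})\sqrt{x_i}/(\sigma_i\sqrt{y-x_i})$ is regular and in fact vanishes at $x_i=0$, being singular only at $x_i=y$; in case (ii) the stated relation between $b^\Pa_i-b_i$ and $\beta^\Pa_{ii}-\beta_{ii}$ makes the numerator proportional to $(y-x_i)$, so $\Lambda_i=(\beta^\Pa_{ii}-\beta_{ii})\sqrt{y-x_i}/(\sigma_i\sqrt{x_i})$ is singular only at $x_i=0$. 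Since the measure change no longer needs the regular endpoint to be avoided, the corresponding strong non-attainment hypothesis weakens to the inward-pointing drift condition of Theorem~\ref{thmexiuniLCRM}---\eqref{eq:condatzero} in case (i) and \eqref{eq:condatLS} in case (ii)---and the endpoint value ($X_{i0}=0$, respectively $X_{i0}=Y_0$) becomes admissible, as that component may touch the boundary without $\Lambda$ exploding.

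The step I expect to be the main obstacle is matching the exact hypotheses of \cite{Cheridito2005}: one must produce a localizing sequence---naturally the exit times from an increasing family of compacts exhausting the interior of $E$---on which $Z_t$ is a true martingale, and then verify that these exit times increase to $+\infty$ almost surely under both $\Q$ and $\Pa$. That almost-sure exhaustion is exactly the boundary non-attainment delivered by Theorem~\ref{thmexiuniLCRM}, so once the correspondence between the singular set of $\Lambda$ and the boundary of $E$ is established, the remainder is bookkeeping.
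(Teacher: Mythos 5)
Your argument is correct and follows essentially the same route as the paper, which proves this theorem simply by invoking the criterion of \cite{Cheridito2005} together with the boundary non-attainment statements of Theorem~\ref{thmexiuniLCRM} under both the $\Q$- and $\Pa$-drift parameters; you have merely made explicit the Girsanov computation, the identification of the singular set of $\Lambda$ with the relevant faces $\{x_i=0\}$ and $\{x_i=y\}$ of $E$, and the localization along exit times from compacts exhausting the interior. Your tracking of which single boundary face remains singular in the two special cases, and hence which non-attainment hypothesis can be relaxed to the corresponding inward-pointing drift condition, matches the intended reading of the statement.
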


The assumption of linear-drift preserving change of measure is often made for parsimony and to facilitate the empirical estimation procedure. For example, the specification of MPRs that preserve the affine nature of risk-factors has been theoretically and empirically investigated in \cite{Duffee2002}, \cite{Duarte2004}, and \cite{Cheridito2007} among others.

\section{Chebyshev interpolation}\label{sec:chebint}

This Appendix describes how to perform a Chebyshev interpolation of an arbitrary function on a rectangle $[a,b]\times[c,d]\subset\R^2$.
The Chebyshev polynomials of the first kind take values in $[-1,1]$ but can be shifted and scaled so as to form a basis of $[a,b]$.
In this case they are given by the following recursion formula,
\begin{align*}
t_0^{a,b}(x) & = 1 \\
T_1^{a,b}(x) & = \frac{x-\mu}{\sigma} \\
T_{n+1}^{a,b}(x) & = \frac{2(x-\mu)}{\sigma}T_n^{a,b}(x) - T_{n-1}^{a,b}(x)
\end{align*}
with $\mu=(a+b)/2$ and $\sigma=(b-a)/2$.
The Chebyshev nodes for the interval $[a,b]$ are then given by 
\[ 
{x}^{a,b}_j = \mu + \sigma \cos\left(z_j\right), \quad z_j=\frac{(1/2+j)\pi}{N+1}, \quad \text{for $j=0, \ldots, N$.}
\]
The polynomial interpolation of order $N$ is
\[ p_N(s,x) = \sum_{n=0}^N \sum_{m=0}^N c_{n,m}\; T^{a,b}_n(s)\,T^{c,d}_m(x)\]
where the coefficients are given by
\[
c_{n,m} = 2^{\Ind{n\neq 0}+\Ind{m \neq 0}} \sum_{i=0}^N \sum_{j=0}^N \frac{f\big(x^{a,b}_i, x^{c,d}_j\big) \cos(n\,z_i) \cos(m\,z_j)}{(N+1)^2}.
\]
The coefficients can be computed in an effective way by applying Clenshaw's method, or by applying discrete cosine transform.
This straightforward interpolation has the advantage to prevent the Runge's phenomenon.
We refer to \cite{gass2018chebyshev} for more details on the multidimensional Chebyshev interpolation, and for an interesting financial application of multivariate function interpolation in the context of fast model estimation or calibration.

\end{appendix}


\begin{table}
\begin{center}
\subfloat[Bombardier Inc.]{
\begin{tabular}{rrrrrrrrr}
  \hline
 & all & 1 yr & 2 yrs & 3 yrs & 4 yrs & 5 yrs & 7 yrs & 10 yrs \\ 
  \hline
Mean & 274.51 & 144.07 & 194.80 & 243.38 & 279.43 & 329.40 & 357.10 & 373.71 \\ 
  Vol & 165.23 & 156.66 & 158.95 & 153.31 & 147.95 & 141.14 & 130.46 & 121.64 \\ 
  Median & 244.76 & 94.79 & 145.71 & 189.55 & 232.44 & 295.51 & 353.01 & 376.58 \\ 
  Min & 28.02 & 28.02 & 39.22 & 59.50 & 86.64 & 109.58 & 146.32 & 171.29 \\ 
  Max & 1288.71 & 1288.71 & 1151.92 & 1092.74 & 1062.57 & 1048.33 & 960.16 & 887.06 \\ 
   \hline
\end{tabular}

} \\
\subfloat[Walt Disney Co.]{
\begin{tabular}{rrrrrrrrr}
  \hline
 & all & 1 yr & 2 yrs & 3 yrs & 4 yrs & 5 yrs & 7 yrs & 10 yrs \\ 
  \hline
Mean & 31.01 & 11.97 & 17.53 & 22.74 & 28.90 & 34.59 & 45.00 & 56.18 \\ 
  Vol & 21.85 & 12.93 & 15.73 & 17.18 & 18.18 & 18.15 & 16.13 & 15.66 \\ 
  Median & 26.30 & 7.70 & 12.42 & 17.39 & 24.31 & 30.45 & 42.98 & 55.58 \\ 
  Min & 1.63 & 1.63 & 3.24 & 4.47 & 5.81 & 8.18 & 12.92 & 17.51 \\ 
  Max & 133.02 & 79.38 & 102.20 & 115.19 & 120.62 & 126.43 & 127.22 & 133.02 \\ 
   \hline
\end{tabular}

}
\end{center}
\caption{CDS spreads summary statistics.}
The sample contains 552 weekly observations collected between January 1\textsuperscript{st} 2005 and January 1\textsuperscript{st} 2015 summing up to 3620 CDS spreads in basis point for each firm.\label{tab:cdsstats}
\end{table}

\begin{table}
\begin{center}
\subfloat[Bombardier Inc.]{
\begin{tabular}{rrrr}
  \hline
 & LHCC(2) & LHCC(3) & LHCC(3)$^*$ \\ 
  \hline
$\gamma_1$ & 0.205 & 0.201 & \textbf{0.400} \\ 
  $\kappa_1$ & 0.546 & 1.263 & 1.316 \\ 
  $\kappa_2$ & 0.421 & 0.668 & 0.884 \\ 
  $\kappa_3$ &  & 0.385 & 0.668 \\ 
  $\theta_1$ & 0.624 & 0.841 & 0.696 \\ 
  $\theta_2$ & 0.512 & 0.699 & 0.548 \\ 
  $\theta_3$ &  & 0.478 & 0.401 \\ 
   \hline
\end{tabular}

} \\
\subfloat[Walt Disney Co.]{
\begin{tabular}{rrrr}
  \hline
 & LHCC(2) & LHCC(3) & LHCC(3)$^*$ \\ 
  \hline
$\gamma_1$ & 0.056 & 0.064 & \textbf{0.130} \\ 
  $\kappa_1$ & 0.167 & 0.258 & 0.294 \\ 
  $\kappa_2$ & 0.165 & 0.229 & 0.280 \\ 
  $\kappa_3$ &  & 0.091 & 0.212 \\ 
  $\theta_1$ & 0.666 & 0.753 & 0.558 \\ 
  $\theta_2$ & 0.662 & 0.721 & 0.536 \\ 
  $\theta_3$ &  & 0.298 & 0.387 \\ 
   \hline
\end{tabular}

}
\end{center}
\caption{Fitted and fixed (in bold) parameters for the LHC models.}
\label{tab:paramsLCRM}
\end{table}

\begin{table}
\begin{center}
\subfloat[Bombardier Inc.]{
\begin{tabular}{rrrrrrrrrr}
  \hline
  & & all & 1 yr & 2 yrs & 3 yrs & 4 yrs & 5 yrs & 7 yrs & 10 yrs \\ 
  \hline
\multirow{4}{*}{${\rm LHCC}(2)$}
& RMSE & 26.24 & 23.87 & 31.79 & 24.13 & 12.31 & 24.36 & 27.70 & 33.33 \\ 
& Median & -0.22 & -13.90 & -3.16 & -1.23 & 4.63 & 20.20 & -0.17 & -18.90 \\ 
& Min & -83.96 & -64.23 & -83.96 & -65.09 & -22.09 & -20.50 & -38.64 & -79.80 \\ 
& Max & 123.86 & 123.86 & 43.98 & 32.90 & 39.31 & 57.07 & 75.58 & 54.45
\\[10pt]
\multirow{4}{*}{${\rm LHCC}(3)$} 
& RMSE & 16.10 & 8.90 & 19.63 & 19.46 & 11.01 & 17.35 & 15.93 & 16.94 \\ 
& Median & -0.25 & 1.14 & -7.69 & -5.47 & 1.06 & 16.46 & 2.06 & -9.42 \\ 
& Min & -56.64 & -24.62 & -56.64 & -52.93 & -31.01 & -0.66 & -12.85 & -46.56 \\ 
& Max & 107.23 & 107.23 & 23.86 & 15.42 & 20.38 & 41.61 & 49.57 & 31.94
 \\[10pt]
\multirow{4}{*}{${\rm LHCC}(3)^*$}
& RMSE & 21.87 & 9.07 & 23.52 & 24.01 & 12.67 & 16.56 & 25.15 & 32.37 \\ 
& Median & -0.42 & 0.02 & -4.22 & -3.94 & -3.12 & 14.22 & -0.66 & -4.80 \\ 
& Min & -82.13 & -24.32 & -66.96 & -68.24 & -32.91 & -31.95 & -54.44 & -82.13 \\ 
& Max & 67.51 & 24.43 & 25.10 & 26.16 & 22.24 & 42.51 & 67.51 & 59.33  \\
   \hline
\end{tabular}
} \\
\subfloat[Walt Disney Co.]{
\begin{tabular}{rrrrrrrrrr}
  \hline
  & & all & 1 yr & 2 yrs & 3 yrs & 4 yrs & 5 yrs & 7 yrs & 10 yrs \\ 
  \hline
\multirow{4}{*}{${\rm LHCC}(2)$}
& RMSE & 2.88 & 3.09 & 1.66 & 2.73 & 2.82 & 2.82 & 2.00 & 4.30 \\ 
& Median & -0.33 & -0.13 & -0.86 & -1.99 & -1.40 & -0.43 & 1.40 & 1.10 \\ 
&  Min & -12.65 & -12.65 & -4.15 & -5.21 & -4.34 & -4.32 & -5.54 & -12.64 \\ 
&  Max & 8.81 & 3.58 & 5.11 & 8.81 & 8.70 & 8.22 & 4.62 & 6.43
\\[10pt]
\multirow{4}{*}{${\rm LHCC}(3)$} 
& RMSE & 1.06 & 0.85 & 1.09 & 1.02 & 0.89 & 1.31 & 1.33 & 0.75 \\ 
& Median & -0.03 & 0.35 & 0.19 & -0.55 & -0.43 & 0.14 & 0.70 & -0.26 \\ 
& Min & -5.57 & -4.87 & -5.57 & -3.53 & -3.55 & -4.34 & -4.62 & -1.97 \\ 
& Max & 4.94 & 2.74 & 4.94 & 3.58 & 4.34 & 3.85 & 3.53 & 2.68
 \\[10pt]
\multirow{4}{*}{${\rm LHCC}(3)^*$}
& RMSE & 1.17 & 1.02 & 1.11 & 0.98 & 1.15 & 1.62 & 1.07 & 1.12 \\ 
& Median & 0.01 & 0.47 & 0.35 & -0.62 & -0.60 & -0.06 & 0.48 & -0.02 \\ 
& Min & -5.48 & -5.45 & -5.48 & -3.49 & -3.78 & -4.83 & -3.92 & -4.65 \\ 
& Max & 4.63 & 2.68 & 4.49 & 3.28 & 4.63 & 3.98 & 2.98 & 4.15 \\
   \hline
\end{tabular}
}
\end{center}
\caption{Comparison of CDS spreads fits for the LHC models.}
The tables report the minimal, maximal, median, and root mean squared errors in basis point by maturity over the entire time period for the three different specifications.
\label{tab:rmse}
\end{table}


\begin{figure}
\begin{center}
\includegraphics{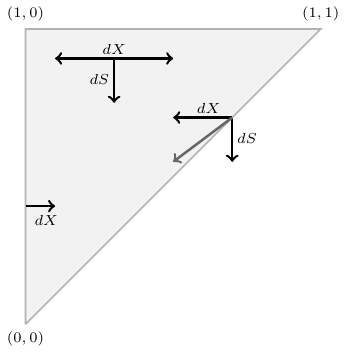}
\end{center}
\caption{State space of the LHC model with a single factor.}
Illustrations of the inward pointing drift conditions~\eqref{eq:condatzero}--\eqref{eq:condatLS}. The survival process value is given by the $y$-axis and the factor value by the $x$-axis.\label{fig:1Fss}
\end{figure}

\begin{figure}
\begin{center}
\includegraphics{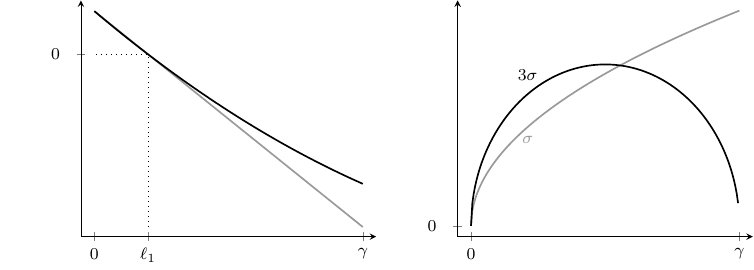}
\end{center}
\caption{Comparison of the one-factor LHC and CIR models.}
Drift and diffusion functions of the default intensity for the one-factor LHC model (black line) and affine model (grey line).
The parameter values are $\ell_1=0.05$, $\ell_2=1$, and $\gamma=0.25$. \label{fig:1Fdynamic}
\end{figure}

\begin{figure}
\begin{center}
\includegraphics{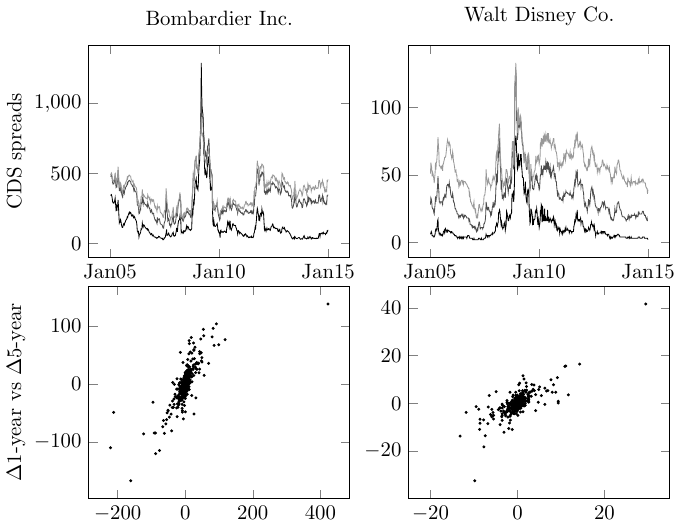}
\end{center}
\caption{CDS spread data.}
Panels on the first row display the CDS spreads in basis points for the maturities 1 year (black), 5 years (grey), and 10 years (light-grey).
Panels on the second row display the weekly changes in 1-year versus 5-year CDS spreads.\label{fig:cdsts}
\end{figure}

\begin{figure}
\begin{center}
\subfloat[Bombardier Inc.]{
\includegraphics{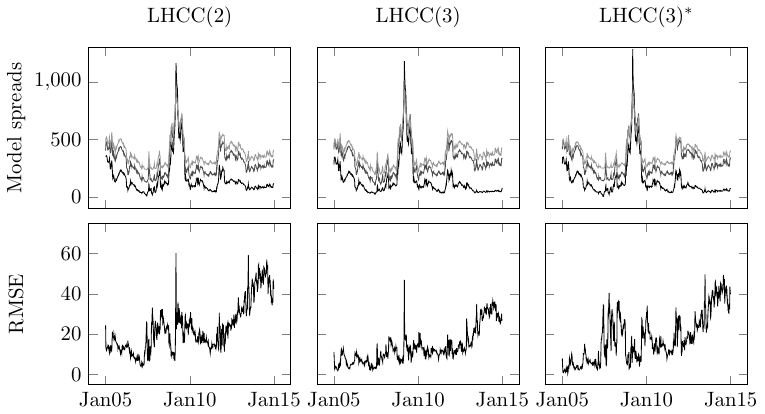}
} \\
\subfloat[Walt Disney Co.]{
\includegraphics{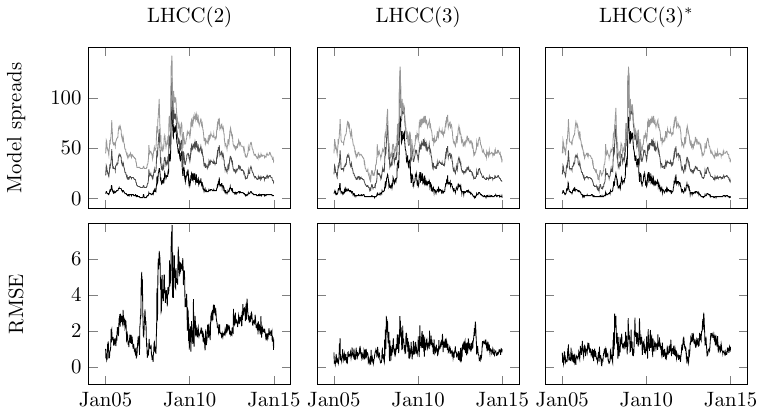}
}
\end{center}
\caption{CDS spreads fits and errors.}
Panels on the first row display the fitted CDS spreads in basis points with maturities 1 year (black), 5 years (grey), and 10 years (light-grey) for the three specifications.
Panels on the second row display the root-mean-square error (in basis points) computed every day and aggregated over all the maturities.\label{fig:fitLCRM}
\end{figure}

\begin{figure}
\begin{center}
\subfloat[Bombardier Inc.]{
\includegraphics[scale=0.95]{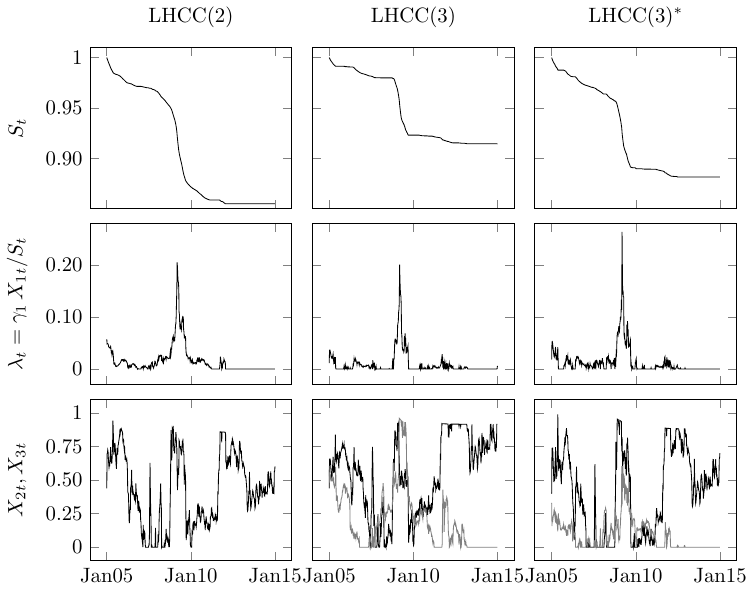}
} \\
\subfloat[Walt Disney Co.]{
\includegraphics[scale=0.95]{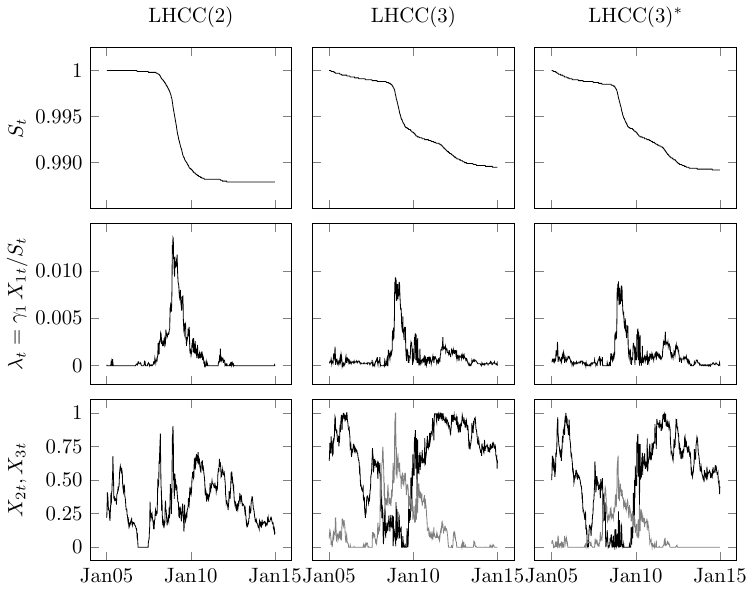}
}
\end{center}
\caption{Factors fitted from CDS spreads.}
The filtered factors of the three estimated specifications are displayed over time.
Panels on the first row display the drift only survival process, panels o the second row the implied default intensity, and panels on the last row the process $X_{m}$ in black and the process ${X_{2}}$ in grey for the three-factor models.
\label{fig:factors}
\end{figure}

\begin{figure}
\begin{center}
\includegraphics{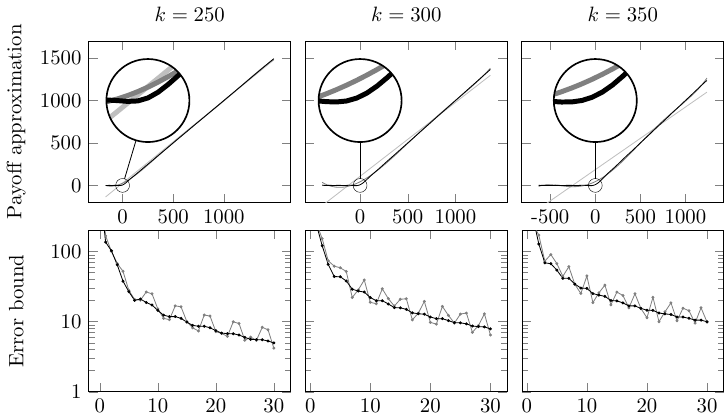}
\end{center}
\caption{CDS option payoff approximations.}
Panels on the first row display the polynomial interpolation of the payoff function approximation with the Fourier-Legendre approach at the order 1 (light-grey), 5 (grey), and 30 (black).
Panels on the second display the price error bound with the Fourier-Legendre approach (black) and with the Chebyshev approach (grey) as functions of the polynomial interpolation order.
The first (second and third) column corresponds to a CDS option with a strike spread of 250 (300 and 350) basis points.
All values are reported in basis points.
\label{fig:CDSO_payoff}
\end{figure}

\begin{figure}
\begin{center}
\includegraphics{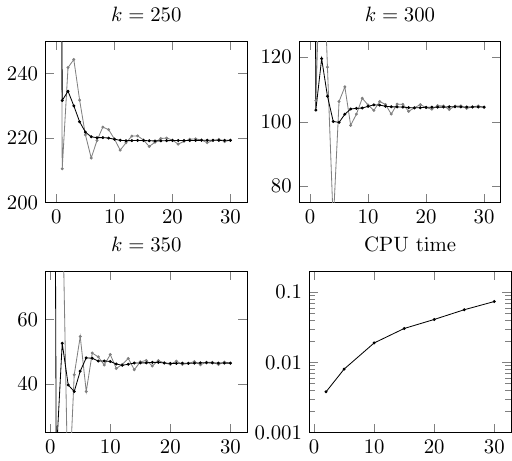}
\end{center}
\caption{CDS option price approximations and CPU times.}
The top and bottom-left panels display the price approximation with the Fourier-Legendre approach (black) and with the Chebyshev approach (grey) as functions of the polynomial interpolation order.
The top-left (top-right and bottom-left) panels corresponds to a CDS option with a strike spread of 250 (300 and 350) basis points. All values are reported in basis points.
The bottom-right panel displays the CPU times in seconds needed to compute the price approximation as functions of the polynomial interpolation order.
\label{fig:CDSO_pricecpu}
\end{figure}

\begin{figure}
\begin{center}
\includegraphics{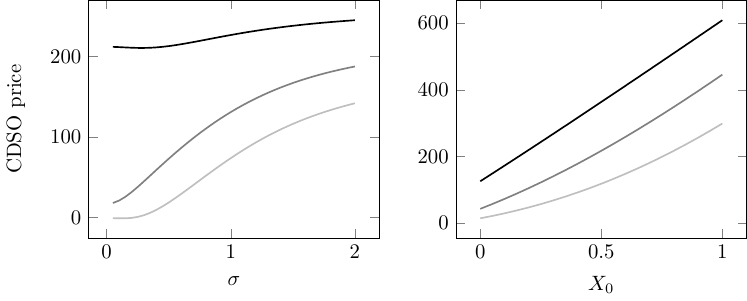}
\end{center}
\caption{CDS option price sensitivities.} 
The figure on the left (on the right) displays the CDS option price as a function of the volatility parameter (the initial risk factor position) for the strike spread 250 (black), 300 (grey), and 350 (light-grey).
All values are reported in basis points.
\label{fig:CDSoptSens}
\end{figure}

\processdelayedfloats

\newpage
\bibliographystyle{chicago}
\bibliography{lcrm}

\end{document}